\definecolor{darkblue}{rgb}{0,0,.8}
\newtheorem{theorem}{Theorem}[section]
\newtheorem{lemma}[theorem]{Lemma}
\newtheorem{proposition}[theorem]{Proposition}
\newtheorem{corollary}[theorem]{Corollary}
\newtheorem{conjecture}[theorem]{Conjecture}
\renewcommand{\i}{\text{i}}
\newcommand{\diff}{\textup{d}}
\newcommand{\Q}{\mathfrak{Q}}
\newcommand{\q}{\mathfrak{q}}
\newcommand{\ket}[1]{|{#1}\rangle}
\newcommand{\bra}[1]{\langle {#1}|}
\newcommand{\uu}{\uparrow}
\newcommand{\dd}{\downarrow}
\newcommand{\vth}[3]{\vartheta_{{#1}}({#2},{#3})}
\title{\Large\bf On the transfer matrix of the supersymmetric eight-vertex model. II. Open boundary conditions}
\author{\normalsize \textsc{Christian Hagendorf} and \textsc{Jean Li\'enardy} \\
\medskip\\
{\normalsize
  \begin{minipage}{\textwidth}
  \begin{center}
  \textit{
   Universit\'e catholique de Louvain\\
  Institut de Recherche en Math\'ematique et Physique\\
  Chemin du Cyclotron 2, 1348 Louvain-la-Neuve, Belgium} \\
  \bigskip
  \href{mailto:christian.hagendorf@uclouvain.be}{\normalsize 
\texttt{christian.hagendorf@uclouvain.be}}, \href{mailto:jean.lienardy@uclouvain.be}{\normalsize 
\texttt{jean.lienardy@uclouvain.be}}
  \end{center}
  \end{minipage}
}
}
\date{\normalsize{\today}}
\begin{document}
\maketitle

\begin{abstract}
The transfer matrix of the square-lattice eight-vertex model on a strip with $L\geqslant 1$ vertical lines and open boundary conditions is investigated. It is shown that for vertex weights $a,b,c,d$ that obey the relation $(a^2+ab)(b^2+ab)=(c^2+ab)(d^2+ab)$ and appropriately chosen $K$-matrices $K^\pm$ this transfer matrix possesses the remarkably simple, non-degenerate eigenvalue $\Lambda_L = (a+b)^{2L}\,\text{tr}(K^+K^-)$. For positive vertex weights, $\Lambda_L$ is shown to be the largest transfer-matrix eigenvalue. The corresponding eigenspace is equal to the space of the ground states of the Hamiltonian of a related XYZ spin chain. An essential ingredient in the proofs is the supersymmetry of this Hamiltonian.
\end{abstract}

\section{Introduction}

In this article, we continue our investigation of the eight-vertex model whose vertex weights $a,b,c,d$ are non-zero and obey the relation 
\begin{equation}
  \label{eqn:CL8V}
  (a^2+ab)(b^2+ab)=(c^2+ab)(d^2+ab).
\end{equation}
In 2001, Stroganov \cite{stroganov:01} studied this special case of the eight-vertex model with periodic boundary conditions. He conjectured that, for each $n\geqslant 0$, its transfer matrix with an odd number $L=2n+1$ of vertical lines possesses the remarkably simple doubly-degenerate eigenvalue $\Theta_n= (a+b)^{2n+1}$.  Stroganov's conjecture led several authors to investigate the case \eqref{eqn:CL8V}, which has revealed interesting relations between the eight-vertex model and a variety of mathematical structures and topics. Amongst these are enumerative combinatorics \cite{razumov:10,mangazeev:10,zinnjustin:13}, functional equations \cite{baxter:89,fabricius:05,rosengren:16} and solutions to the Painlev\'e VI equation \cite{bazhanov:05,bazhanov:06,rosengren:13,rosengren:13_2,rosengren:14,rosengren:15}. Furthermore, a relation to supersymmetry was established in \cite{fendley:10,hagendorf:12,hagendorf:13}. Therefore, we refer to the case \eqref{eqn:CL8V} as the supersymmetric eight-vertex model. In \cite{hagendorf:18}, we used the supersymmetry to prove Stroganov's conjecture. Furthermore, we showed that $\Theta_n$ is the largest eigenvalue of the transfer matrix for positive vertex weights. The present article aims to extend our work to the eight-vertex model on a strip.

As in our previous work on periodic boundary conditions, we exploit a well-known relation between the eight-vertex model and  the XYZ quantum spin chain. For a spin chain with $L\geqslant 1$ sites and open boundary conditions, its Hamiltonian is given by
\begin{subequations}
\label{eqn:SUSYXYZHamiltonian}
\begin{equation}
  \label{eqn:XYZHamiltonian}
  H_{\text{\tiny XYZ}} = -\frac{1}{2}\sum_{j=1}^{L-1}\left(J_1 \sigma^1_j\sigma^1_{j+1}+J_2 \sigma^2_j\sigma^2_{j+1}+J_3\sigma^3_j\sigma^3_{j+1}\right) + (h_{\text{\tiny B}}^-)_1+ (h_{\text{\tiny B}}^+)_L.
\end{equation}
(By convention, for $L=1$, the bulk interaction term is absent and the Hamiltonian is given by the sum $H_{\text{\tiny XYZ}} = h_{\text{\tiny B}}^++h_{\text{\tiny B}}^-$.) Here, $\sigma^1,\sigma^2,\sigma^3$ denote the Pauli matrices. The constants $J_1,J_2,J_3$ are the spin chain's anisotropy parameters. We focus on the case where they are given by
\begin{equation}
  \label{eqn:APCL}
  J_1 = 1+\zeta,\quad J_2 = 1-\zeta,\quad J_3 = \frac{1}{2}(\zeta^2-1),
\end{equation}
with a real parameter $\zeta$. The terms $h_{\text{\tiny B}}^\pm$ describe the interactions of the first and last spins with boundary magnetic fields. We consider the boundary terms
\begin{equation}
  \label{eqn:hBPM1}
  h_\text{\tiny B}^+=h_\text{\tiny B}^- = \sum_{\alpha=1}^3 \lambda_\alpha\sigma^\alpha,
\end{equation}
where
\begin{equation}
  \label{eqn:hBPM2}
  \lambda_1 = -\frac{(1+\zeta)\mathop{\text{Re}} y}{1+|y|^2}, \quad \lambda_2 = -\frac{(1-\zeta)\mathop{\text{Im}} y}{1+|y|^2},\quad \lambda_3 = \left(\frac{\zeta^2-1}{4}\right)\left(\frac{1-|y|^2}{1+|y|^2}\right),
\end{equation}
\end{subequations}
and $y$ is a complex number.

\paragraph{Supersymmetry.} We show that similarly to the case of periodic boundary conditions \cite{hagendorf:13}, the Hamiltonian \eqref{eqn:SUSYXYZHamiltonian} is supersymmetric: Up to a constant shift and a rescaling, it can be written as the anticommutator of a nilpotent operator, the \textit{supercharge}, and its adjoint. The supersymmetry implies that the Hamiltonian may have special eigenstates called \textit{supersymmetry singlets}. They are annihilated by both the supercharge and its adjoint. If they exist, then they are the Hamiltonian's ground states, and hence of physical interest \cite{witten:82}.

Therefore, we wish to find the pairs $(\zeta,y)$ for which the Hamiltonian possesses supersymmetry singlets. We shall see later that it is sufficient consider $0\leqslant \zeta \leqslant 1$. For $\zeta=0$, the Hamiltonian \eqref{eqn:SUSYXYZHamiltonian} reduces to a XXZ spin-chain Hamiltonian. In \cite{hagendorf:17}, we showed that the space of its ground states is the space of supersymmetry singlets if and only if $y=0$. Furthermore, for $\zeta=1$, the Hamiltonian greatly simplifies and it is trivial to find its ground states (whether they are supersymmetry singlets or not). Hence, we focus on $0<\zeta<1$.
\begin{theorem}
\label{thm:MainTheorem1}
For each $L\geqslant 1$ and $0< \zeta < 1$, the space of the ground states of the Hamiltonian \eqref{eqn:SUSYXYZHamiltonian} is equal to the space of supersymmetry singlets if and only if $y$ is a solution of the polynomial equation
\begin{equation}
  \label{eqn:YEqn}
  \zeta(1+y^4) -(3-\zeta^2)y^2 = 0 .
\end{equation}
This space is one-dimensional, and the corresponding ground-state eigenvalue is given by
\begin{equation}
    \label{eqn:E0}
    E_0 = -\frac{(L-1)(3+\zeta^2)}{4}-\frac{(1+\zeta)^2}{2}.
  \end{equation}
\end{theorem}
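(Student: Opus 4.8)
The plan is to deduce everything from the supersymmetry. Let $\Q$ denote the nilpotent supercharge provided by the supersymmetric structure described above (a bulk part built from the same local densities as in the periodic supersymmetric eight-vertex model \cite{hagendorf:13}, dressed by boundary supercharge terms carrying the dependence on $y$), and write the Hamiltonian, up to the constant shift and rescaling announced above, as
\[
  H_{\text{\tiny XYZ}} = s\,I + c\,\{\Q,\Q^{\+}\},\qquad \{\Q,\Q^{\+}\}=\Q\Q^{\+}+\Q^{\+}\Q\geqslant 0,\qquad \Q^2=0 ,
\]
with a positive constant $c$ and a constant $s$. Then $H_{\text{\tiny XYZ}}\geqslant s$, and $\{\Q,\Q^{\+}\}\psi=0$ if and only if $\Q\psi=\Q^{\+}\psi=0$, i.e.\ if and only if $\psi$ is a supersymmetry singlet. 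Consequently the ground-state space of $H_{\text{\tiny XYZ}}$ coincides with the space of supersymmetry singlets $\mathcal{H}_L\cap\ker\Q\cap\ker\Q^{\+}$ --- equivalently, with the cohomology of $\Q$ on $\mathcal{H}_L$ --- \emph{if and only if} the latter does not vanish, and in that case the ground-state eigenvalue is $s$. The task thus reduces to proving that this cohomology is one-dimensional precisely when $y$ solves \eqref{eqn:YEqn}, and that then $s$ equals the right-hand side of \eqref{eqn:E0}.

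For the existence of a singlet when $y$ solves \eqref{eqn:YEqn}, I would exhibit one by a recursion on the number of sites, following the strategy of \cite{hagendorf:17,hagendorf:18}: an intertwiner $\iota_L\colon\mathcal{H}_{L-1}\to\mathcal{H}_L$, built from a local operator inserted at the new boundary site, compatible with the supercharges in the sense $\Q_L\iota_L=\iota_L\Q_{L-1}$ together with the corresponding adjoint relation, so that $\iota_L$ descends to cohomology and propagates a singlet of the $(L-1)$-site chain to one of the $L$-site chain. The base case $L=1$ can be checked by hand and is already instructive: there $H_{\text{\tiny XYZ}}$ is the $2\times 2$ matrix $2\sum_{\alpha}\lambda_\alpha\sigma^\alpha$, whose lower eigenvector is a supersymmetry singlet if and only if $\lambda_1^2+\lambda_2^2+\lambda_3^2=(1+\zeta)^4/16$, and a short computation using \eqref{eqn:hBPM1}--\eqref{eqn:hBPM2} shows that this equality is exactly \eqref{eqn:YEqn}; the corresponding eigenvalue is $-2\sqrt{\lambda_1^2+\lambda_2^2+\lambda_3^2}=-(1+\zeta)^2/2$, which is \eqref{eqn:E0} at $L=1$. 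The recursion then yields a singlet for every $L\geqslant 1$, and evaluating $H_{\text{\tiny XYZ}}$ on it and telescoping produces \eqref{eqn:E0}, which may be read as $L-1$ bulk bonds contributing $-(3+\zeta^2)/4$ each (the per-bond ground-state energy already familiar from the periodic supersymmetric point) plus a boundary contribution $-(1+\zeta)^2/2$.

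The hard part is uniqueness: one must show the cohomology is \emph{exactly} one-dimensional, not merely non-zero. The route I would pursue is to prove that $\iota_L$ induces an \emph{isomorphism} on cohomology, so that the value $1$ obtained for $L=1$ propagates to all $L$; injectivity (the recursion does not annihilate the singlet) ought to be visible from the explicit form of $\iota_L$, whereas surjectivity --- every singlet arising from a lower one --- is the delicate step and is likely to require a careful analysis of $\ker\Q_L$ through the local structure of $\Q$. Should this prove intractable, two alternatives are available: a comparison with the periodic chain \cite{hagendorf:18}, where the analogous space is two-dimensional and decomposes into one-dimensional reflection sectors of which the open chain should see exactly one; and, for positive vertex weights, the Perron--Frobenius property of the eight-vertex transfer matrix restricted to an appropriate symmetry sector. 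A deformation argument from the solved XXZ point $\zeta=0$ \cite{hagendorf:17}, using upper semicontinuity of the cohomology dimension together with the deformation-invariance of an index, could serve as a further check.

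Finally, for the converse, suppose $y$ does not solve \eqref{eqn:YEqn}. I would argue that the singlet conditions $\Q\psi=\Q^{\+}\psi=0$ then have no nonzero solution on $\mathcal{H}_L$: projecting them onto the first (equivalently, last) site and using the bulk relations reduces the surviving requirement to the boundary compatibility condition \eqref{eqn:YEqn}, which is now violated. Hence $\mathcal{H}_L\cap\ker\Q\cap\ker\Q^{\+}=0$, so $H_{\text{\tiny XYZ}}$ has no supersymmetry singlet while its ground-state space is non-empty, and the two spaces differ. For $L=1$ this is again the $2\times 2$ computation above. The genuinely new difficulty relative to \cite{hagendorf:18} (periodic chain) and \cite{hagendorf:17} (open XXZ chain) is therefore the one-dimensionality of the cohomology away from the XXZ point.
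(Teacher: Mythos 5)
Your high-level framework is the same as the paper's: identify the ground states with supersymmetry singlets via the identity $H = x\bigl(H_{\text{\tiny XYZ}}+\text{const}\bigr)$ with $H=\Q\Q^\dagger+\Q^\dagger\Q$, reduce the question to computing the (co)homology of $\Q$, and propagate from small $L$. Your intertwiner $\iota_L$ is essentially the paper's map $S^\sharp[|\psi\rangle]=[|u_+\rangle\otimes|\psi\rangle]$ used at the special point, and the $L=1$ base case and the bulk/boundary splitting of $E_0$ are in the right spirit.

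However, the proposal leaves a genuine gap at exactly the point you flag as ``the hard part'', and the alternatives you mention would not easily close it. The paper's crucial device is a theta-function reparameterisation $(\zeta,y)\leftrightarrow(p,t)$ together with a new basis $|v_\pm\rangle$ of $V$ in which the local supercharge acts diagonally, $\q|v_\epsilon\rangle=\Lambda_\epsilon\,|v_\epsilon\rangle\otimes|v_\epsilon\rangle$, with $\Lambda_+=0$ precisely at $t=\pi/6$. Three short normal-form lemmas then give the full answer for all parameter values at once: $\mathcal H^L=0$ when both $\Lambda_\pm\neq 0$; still $0$ at the degenerate value $t=\pi/2$, where $|v_+\rangle=|v_-\rangle$ and one must use the derivative state $|\dot v_+\rangle$ instead (a wrinkle your sketch does not anticipate); and $\mathcal H^L=\mathbb C[|v_+\rangle^{\otimes L}]$ at $t=\pi/6$. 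A separate elementary lemma then identifies $t=\pi/6$ with the zero set of $\zeta(1+y^4)-(3-\zeta^2)y^2$ inside the fundamental domain. Your sketch of the converse --- that projecting $\Q|\psi\rangle=\Q^\dagger|\psi\rangle=0$ onto the first site reduces matters to a boundary compatibility condition --- is not sound: $\Q$ has both bulk and boundary parts, the singlet constraints do not localize to a single site, and one must show that the cohomology group itself is trivial. Perron--Frobenius covers only positive vertex weights, hence not all relevant $(\zeta,y)$; and a deformation argument from the XXZ point would have to preclude jumps in cohomology dimension, which would in turn require the same local structural information you are missing. In short, the strategy is right, but the explicit parameterisation and change of basis that make the cohomology actually computable are absent, and they are the heart of the paper's proof.
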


\paragraph{The transfer-matrix eigenvalue.} For all $y$, the transfer matrix of the supersymmetric eight-vertex model on a strip with $L$ vertical lines commutes with the Hamiltonian \eqref{eqn:SUSYXYZHamiltonian} if
\begin{equation}
  \label{eqn:Zabcd}
  \zeta = \frac{cd}{ab},
\end{equation}
and if the boundary conditions of the strip, encoded in the so-called $K$-matrices, are chosen in accordance with the boundary terms $h_\text{\tiny B}^\pm$ \cite{sklyanin:88}. These $K$-matrices are given by
\begin{align}
\label{eqn:DefKpm}
\begin{split}
K^- &= 
1 + \frac{2 \mathop{\text{Re}} y}{1+|y|^2}\frac{ab+cd}{ac+bd}\sigma^1  + \frac{2 \mathop{\text{Im}} y}{1+|y|^2}\frac{ab-cd}{ac-bd}\sigma^2 +\frac{1-|y|^2}{1+|y|^2}\frac{b^2-d^2}{2 ab + b^2+d^2}\sigma^3,
\\
 K^+ &=
 1 + \frac{2 \mathop{\text{Re}} y}{1+|y|^2}\frac{ab+cd}{ad+bc}\sigma^1 +\frac{2 \mathop{\text{Im}} y}{1+|y|^2}\frac{ab-cd}{bc-ad}\sigma^2+\frac{1-|y|^2}{1+|y|^2}\frac{b^2-c^2}{2 ab + b^2+c^2}\sigma^3.
\end{split}
\end{align}%
By \cref{thm:MainTheorem1}, if $y$ is a solution of \eqref{eqn:YEqn}, then the space of supersymmetry singlets is necessarily an eigenspace of the transfer matrix of the supersymmetric eight-vertex model on a strip with these $K$-matrices. 
\begin{theorem}
  \label{thm:MainTheorem2}
  Let $L\geqslant 1$, $0<\zeta<1$ and $y$ be a solution of \eqref{eqn:YEqn}, then the transfer matrix of the supersymmetric eight-vertex model on a strip with $L$ vertical lines and the $K$-matrices \eqref{eqn:DefKpm} possesses the non-degenerate eigenvalue
  \begin{equation}
    \Lambda_L = (a+b)^{2L}\,\textup{tr}(K^+K^-).
    \label{eqn:DefLambda}
  \end{equation}
  The corresponding eigenspace is the space of the supersymmetry singlets of the XYZ Hamiltonian \eqref{eqn:SUSYXYZHamiltonian} with $\zeta$ given by \eqref{eqn:Zabcd}.
\end{theorem}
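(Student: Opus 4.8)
The plan is to build the transfer-matrix eigenvalue $\Lambda_L$ out of the supersymmetry-singlet result of \cref{thm:MainTheorem1} together with the Sklyanin commutation relation. Since $y$ solves \eqref{eqn:YEqn}, \cref{thm:MainTheorem1} tells us that the space of supersymmetry singlets of $H_{\text{\tiny XYZ}}$ is one-dimensional; call a normalised spanning vector $\ket{\Psi_L}$. Because the transfer matrix $T_L(u)$ (with spectral parameter $u$) commutes with $H_{\text{\tiny XYZ}}$ for all $u$ when \eqref{eqn:Zabcd} holds and the $K$-matrices are given by \eqref{eqn:DefKpm}, and because the ground-state eigenspace is non-degenerate, $\ket{\Psi_L}$ is automatically a common eigenvector of all the $T_L(u)$. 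Hence $T_L(u)\ket{\Psi_L}=\Lambda_L(u)\ket{\Psi_L}$ for some scalar function $\Lambda_L(u)$, and the whole problem reduces to (i) identifying the value of $\Lambda_L(u)$ at the special point where the vertex weights are $a,b,c,d$, and (ii) proving that this eigenvalue is non-degenerate.

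For step (i) I would exploit that $\ket{\Psi_L}$ is a supersymmetry singlet, not merely a ground state. The idea, exactly as in the periodic case treated in \cite{hagendorf:18}, is to use the \emph{local} structure of the eight-vertex $R$-matrix on the line \eqref{eqn:CL8V}: at the relevant spectral parameter the $R$-matrix degenerates, up to the scalar factor $(a+b)^2$, onto a rank-one-type projector that intertwines with the supercharge, so that acting on a singlet the double row of $R$-matrices collapses. Concretely, one writes $T_L(u)=\operatorname{tr}_0\!\big(K^+_0\,R_{0L}\cdots R_{01}\,K^-_0\,R_{01}\cdots R_{0L}\big)$, inserts the supersymmetry-covariance relations for the $R$-matrices and the $K$-matrices against the supercharge, and uses that the supercharge annihilates $\ket{\Psi_L}$ to telescope the double row. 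What survives is, site by site, the scalar $(a+b)^2$ — one factor per $R$-matrix, giving $(a+b)^{2L}$ — and, in the auxiliary space, the reduced product $K^+K^-$ whose trace is taken, yielding exactly $\Lambda_L=(a+b)^{2L}\operatorname{tr}(K^+K^-)$. The compatibility of the boundary terms $h_{\text{\tiny B}}^\pm$ with the $K$-matrices — precisely the content of the parameter dictionary \eqref{eqn:Zabcd}, \eqref{eqn:hBPM2} versus \eqref{eqn:DefKpm} — is what makes the boundary $K^\pm$ obey the analogous covariance relation; this matching is where most of the bookkeeping lives, and I expect it to require a lemma checked by direct (if tedious) computation.

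For step (ii), non-degeneracy of $\Lambda_L$ as a transfer-matrix eigenvalue, I would argue that the eigenspace of $T_L$ with eigenvalue $\Lambda_L$ cannot be larger than the supersymmetry-singlet space. Here I would use that $T_L$ and $H_{\text{\tiny XYZ}}$ commute, so the $\Lambda_L$-eigenspace of $T_L$ is $H_{\text{\tiny XYZ}}$-invariant; restricting $H_{\text{\tiny XYZ}}$ to it and invoking \cref{thm:MainTheorem1} — which says the ground-state space \emph{equals} the singlet space and is one-dimensional — forces the intersection with the ground-state sector to be at most one-dimensional. The remaining point is to rule out that $\Lambda_L$ also occurs on higher-energy eigenvectors of $H_{\text{\tiny XYZ}}$; for generic spectral parameter $u$ the functions $\Lambda_L(u)$ attached to distinct energy levels are distinct, so a continuity/genericity argument (as in \cite{hagendorf:18}) upgrades non-degeneracy at generic $u$ to the claimed statement at the special weights. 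I expect the main obstacle to be exactly the covariance lemma for the boundary $K$-matrices in step (i): verifying that \eqref{eqn:DefKpm} is the correct boundary datum compatible with the supercharge requires carefully tracking how the supersymmetry of $H_{\text{\tiny XYZ}}$ extends to the open-chain transfer matrix, and this is the new ingredient relative to the periodic case.
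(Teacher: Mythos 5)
Your observation that $|\Psi_L\rangle$ is automatically a common eigenvector of $\mathcal T(u)$ — because $[\mathcal T,H_{\textup{\tiny XYZ}}]=0$ and the ground state is non-degenerate — is correct, and it does reduce the problem to computing a scalar. But the mechanism you propose for computing that scalar does not work. The identity the paper actually establishes is the global commutation relation $\mathcal T\,\Q = (a+b)^2\,\Q\,\mathcal T$, assembled from local intertwining relations between $R$, $K^\pm$, the local supercharge $\q$, and certain auxiliary operators $A$ — this is \emph{not} a degeneration of $R$ to a rank-one projector. The weights are generic on the combinatorial line and $R$ remains invertible; the factor $(a+b)^2$ is the eigenvalue appearing in a graded commutation relation, not a per-$R$-matrix contribution. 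More fundamentally, you cannot ``telescope the double row by letting $\Q$ annihilate $|\Psi_L\rangle$'': the supercharge maps $V^L\to V^{L+1}$, so $\Q|\Psi_L\rangle=0$ is an identity in the wrong space and cannot be inserted into the row operator acting on $V^L$. What is actually used is the pair of dual decompositions $|\Psi_L\rangle=|v_+\rangle^{\otimes L}+\Q|\gamma_L\rangle$ and $|\Psi_L\rangle=\mu_L|w_+\rangle^{\otimes L}+\Q^\dagger|\gamma_L'\rangle$ (plus the variants built on $|\chi\rangle,|\alpha\rangle$), together with the general lemma: if $\mathcal A\Q=\lambda\Q\mathcal A$ with $\lambda\neq 0$, then $\langle\Psi_L|\mathcal A|\Psi_L\rangle=\langle\phi'|\mathcal A|\phi\rangle$, where $|\phi\rangle,|\phi'\rangle$ are the cohomology and homology representatives. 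This collapses $\Lambda_L$ to the explicit matrix element $\bigl(\langle w_+|^{\otimes L}\bigr)\mathcal T\bigl(|v_+\rangle^{\otimes L}\bigr)/\langle w_+|v_+\rangle^L$, which is then evaluated by the two-step recurrence $\Lambda_L=(a+b)^4\Lambda_{L-2}$ coming from two Jacobi-theta-function identities for the $K$-matrix blocks. The pair of dual decompositions and the resulting recurrence is the essential ingredient your sketch is missing.

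Your non-degeneracy argument also needs repair. Distinctness of the eigenvalue functions at generic $u$ cannot be ``upgraded'' to distinctness at the specific weights — eigenvalue crossings occur precisely at special points, so the implication runs in the wrong direction. The paper's route is direct: it evaluates the logarithmic derivative $\mathcal T(0)^{-1}\mathcal T'(0)$, which equals an explicit affine function of $H_{\textup{\tiny XYZ}}$, on any purported $\Lambda_L$-eigenvector, forcing its $H_{\textup{\tiny XYZ}}$-eigenvalue to be $E_0$; by \cref{thm:MainTheorem1} the $E_0$-eigenspace is the one-dimensional singlet space, so the $\Lambda_L$-eigenspace of $\mathcal T$ has dimension at most one. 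No limiting or genericity argument is needed.
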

Finally, using the Perron-Frobenius theorem, we prove the following:
\begin{theorem}
   \label{thm:MainTheorem3}
Let $L\geqslant 1$, $0<\zeta<1$ and $y$ be a solution of \eqref{eqn:YEqn}. If $a,b,c,d>0$, then $\Lambda_L$ is the largest eigenvalue of the transfer matrix of the supersymmetric eight-vertex model on a strip with $L$ vertical lines and the $K$-matrices \eqref{eqn:DefKpm}.
\end{theorem}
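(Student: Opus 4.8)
The plan is to identify $\Lambda_L$ with the Perron--Frobenius eigenvalue of the double-row transfer matrix, which we denote $t_L$. As a preliminary reduction, note that for $0<\zeta<1$ the polynomial equation \eqref{eqn:YEqn}, being a quadratic in $y^2$ with roots $y^2=\bigl((3-\zeta^2)\pm\sqrt{(\zeta^2-1)(\zeta^2-9)}\bigr)/(2\zeta)$, has only real solutions, both of which are positive; and conjugating $t_L$ by $\prod_{j=1}^{L}\sigma^3_j$ replaces $y$ by $-y$ without changing the spectrum, the value \eqref{eqn:DefLambda}, or the commutation $[t_L,H_{\text{\tiny XYZ}}]=0$. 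Hence it suffices to treat the case $y>0$. In that case the eight-vertex $R$-matrix has non-negative entries (as $a,b,c,d>0$), and so do the $K$-matrices \eqref{eqn:DefKpm}: their $\sigma^1$-coefficients are positive, while their $\sigma^3$-coefficients have modulus strictly less than $1$ because $|1-|y|^2|<1+|y|^2$ and $|b^2-d^2|<2ab+b^2+d^2$, $|b^2-c^2|<2ab+b^2+c^2$. Consequently $t_L$---which, up to a positive normalization, is the partition function of the eight-vertex model on a double row of $L$ columns with these bulk weights and these boundary weights---has non-negative entries in the standard spin basis, and the same inequalities give $\Lambda_L>0$.

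Next I would establish that $t_L$ is primitive. It is irreducible because the bulk $d$-vertices (pair creation and annihilation), the bulk $c$-vertices (propagation of a down spin), and the off-diagonal parts of $K^\pm$ (which flip an end spin and are non-zero since $y\neq0$) together allow any configuration of $\{\uu,\dd\}^{L}$ to be produced from any other. Moreover $t_L$ has a strictly positive diagonal entry, for instance $\bra{\uu\cdots\uu}t_L\ket{\uu\cdots\uu}>0$, which receives a positive contribution from the lattice configuration in which all $2L$ vertices are of type $a$ together with the positive diagonal entries of $K^\pm$. An irreducible non-negative matrix possessing a positive diagonal entry is primitive, so by the Perron--Frobenius theorem the spectral radius $\rho(t_L)$ is a simple eigenvalue, strictly dominant in modulus, and is moreover the only eigenvalue of $t_L$ that admits an entry-wise positive eigenvector.

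It then remains to exhibit such a positive eigenvector in the $\Lambda_L$-eigenspace. By \cref{thm:MainTheorem2} and \cref{thm:MainTheorem1}, this eigenspace is the one-dimensional space of ground states of the Hamiltonian \eqref{eqn:SUSYXYZHamiltonian} with $\zeta=cd/(ab)$, so it suffices to show that this ground state can be chosen entry-wise positive. For $0<\zeta<1$ and $y>0$ the matrix $-H_{\text{\tiny XYZ}}$ has non-negative off-diagonal entries: the bulk two-site terms contribute $\tfrac{1}{2}(J_1+J_2)=1$ on the channel $\uu\dd\leftrightarrow\dd\uu$ and $\tfrac{1}{2}(J_1-J_2)=\zeta$ on the channel $\uu\uu\leftrightarrow\dd\dd$, while the boundary terms contribute $-\lambda_1>0$ (recall that $\lambda_2=0$ for real $y$, and $\lambda_1<0$ for $y>0$). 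Since $-H_{\text{\tiny XYZ}}$ is also irreducible, by the same connectivity as above, the Perron--Frobenius theorem applied to $-H_{\text{\tiny XYZ}}$ shifted by a large enough multiple of the identity shows that its ground state is non-degenerate and may be taken entry-wise positive. This ground state is thus an entry-wise positive $\Lambda_L$-eigenvector of $t_L$; by the uniqueness statement of the previous paragraph, $\Lambda_L=\rho(t_L)$, and by primitivity this is the strictly largest eigenvalue of $t_L$.

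The main obstacle is to make the two irreducibility claims rigorous, that is, to verify that the vertex moves together with the boundary spin flips generate a strongly connected set of transitions on $\{\uu,\dd\}^{L}$ for every $L\geqslant1$; the case $L=1$ is special, since there is no bulk and one must instead check directly that the $2\times2$ matrices $t_1$ and $h_{\text{\tiny B}}^++h_{\text{\tiny B}}^-$ have non-zero off-diagonal entries. A secondary, more computational point is to confirm the non-negativity of the $K$-matrices and the positivity of $\Lambda_L$ after the reduction to $y>0$, and to check that the $\sigma^3$-conjugation used in that reduction is simultaneously consistent with the expressions \eqref{eqn:DefKpm}, \eqref{eqn:DefLambda}, and the commutation relation $[t_L,H_{\text{\tiny XYZ}}]=0$.
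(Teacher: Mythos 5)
Your high-level strategy matches the paper's: apply Perron--Frobenius to the transfer matrix, and show that the supersymmetry-singlet eigenvector can be rescaled to an entry-wise positive vector by applying Perron--Frobenius to (a shift of) $-H_{\text{\tiny XYZ}}$, then conclude that $\Lambda_L$ is the Perron eigenvalue. Where you diverge is in the treatment of the transfer matrix itself. You propose to show only that $\mathcal T$ is non-negative, and then to supply irreducibility plus a positive diagonal entry to obtain primitivity; you correctly flag the irreducibility (for general $L$, and the special case $L=1$) as the main outstanding obstacle. The paper instead proves that $\mathcal T$ is \emph{strictly} positive (\cref{prop:PositiveT}): it writes the matrix elements as sums over auxiliary-spin configurations, notes that all summands are non-negative, and exhibits one explicit configuration (an iteratively chosen sequence of auxiliary states starting from $|{\uparrow\uparrow}\rangle$) whose contribution is positive. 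This bypasses the irreducibility question entirely, covers $L=1$ with no special case, and makes the Perron--Frobenius application immediate. Your observation that $-H_{\text{\tiny XYZ}}$ has non-negative off-diagonal entries (bulk channels $1$ and $\zeta$, boundary channel $-\lambda_1>0$) and is irreducible is exactly the paper's \cref{prop:PositivePsi}. Finally, for the four solutions of \eqref{eqn:YEqn}, you reduce to $y>0$ via $\prod_j\sigma^3_j$ and then treat both positive roots directly, whereas the paper reduces everything to $y_0\in(0,1)$ using all three $\pi$-rotations $\mathcal R^\alpha(\pi)$; both routes are sound, though yours requires verifying non-negativity of $K^\pm$ for $y>1$ as well (which your bound $|1-|y|^2|<1+|y|^2$ already delivers). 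In sum, the proposal is correct in outline, but leaves as ``to do'' precisely the irreducibility step that the paper's strict-positivity lemma renders unnecessary.
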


Theorems \ref{thm:MainTheorem1}, \ref{thm:MainTheorem2} and \ref{thm:MainTheorem3} are the main results of this article. We stress that supersymmetry is an essential ingredient of their proofs. Indeed, we do not use traditional methods that allow one to analyse the spectrum of the transfer matrix of the eight-vertex model on a strip. Two examples of these methods are the off-diagonal Bethe ansatz \cite{cao:13_2} and the quantum separation of variables method \cite{faldella:14}. To our best knowledge, finding an explicit expression of the largest transfer-matrix eigenvalue for finite $L$ with these methods remains a challenge, even if the vertex weights obey \eqref{eqn:CL8V}.

The layout of this article is similar to \cite{hagendorf:18}. In \cref{sec:SUSY}, we study the Hamiltonian \eqref{eqn:SUSYXYZHamiltonian} and its supersymmetry. In particular, we investigate the existence of supersymmetry singlets and prove the \cref{thm:MainTheorem1}. We compute the action of the transfer matrix of the supersymmetric eight-vertex model with open boundary conditions on the space of supersymmetry singlets in \cref{sec:8V}. To this end, we recall the construction of the transfer matrix and its relation to the Hamiltonian of the XYZ spin chain. Moreover, we establish a commutation relation between the transfer matrix and the supercharge of the spin chain. This relation allows us to prove \cref{thm:MainTheorem2}. 
 In \cref{sec:LargestEV}, we analyse the positivity of the transfer matrix and use the Perron-Frobenius theorem to prove \cref{thm:MainTheorem3}. We present our conclusions in \cref{sec:Conclusion} and conjecture a generalisation of $\Lambda_L$ for the inhomogeneous eight-vertex model.

\section{Supersymmetry}

\label{sec:SUSY}
In this section, we investigate the supersymmetry of the XYZ Hamiltonian \eqref{eqn:SUSYXYZHamiltonian}. We start this investigation in \cref{sec:Parameters} with a short discussion on the transformations of the Hamiltonian's parameters. In \cref{sec:DefSUSY}, we define a supercharge for the Hamiltonian. In \cref{sec:Parameterisation}, we introduce a basis of the spin Hilbert space in which the action of this supercharge is simple. We use this basis in \cref{sec:Cohomology} to compute the (co)homology of the supercharge and its adjoint. In \cref{sec:Singlets}, we discuss the absence or existence of supersymmetry singlets of the Hamiltonian.

\subsection{Parameter range}
\label{sec:Parameters}

\paragraph{Notation.} Let us recall basic notations and conventions (which are similar to \cite{hagendorf:17,hagendorf:18}). We use the notation $V=\mathbb C^2$ for the Hilbert space of a spin $1/2$. A basis of this Hilbert space is 
\begin{equation}
  |{\uparrow}\rangle
  =\begin{pmatrix}
    1\\ 0
  \end{pmatrix},
  \quad 
  |{\downarrow}\rangle
  =\begin{pmatrix}
    0\\ 1
  \end{pmatrix}.
\end{equation}
The Hilbert space of a spin chain with $L\geqslant 1$ sites is given by $V^L = V_1\otimes V_2\otimes \cdots \otimes V_L$ where $V_j =V$ is a copy of the single-spin Hilbert space associated to the site $j$. A basis of $V^L$ is given by the states
\begin{equation}
  \label{eqn:BasisStates}
  |s_1s_2\cdots s_L\rangle = |s_1\rangle \otimes |s_2\rangle \otimes \cdots \otimes |s_L\rangle,
\end{equation}
where $s_j\in \{\uparrow,\downarrow\}$ for each $j=1,\dots,L$. Furthermore, we denote by $\langle \psi|\psi'\rangle$ the canonical (complex) scalar product of any two states $|\psi\rangle,\,|\psi'\rangle \in V^L$, where $\langle \psi| = |\psi\rangle^\dagger$. Finally, we denote by $\sigma_j^1,\sigma^2_j,\sigma^3_j$ the Pauli matrices 
\begin{equation}\label{eqn:PauliMat}
 \sigma^1 = \begin{pmatrix} 0&1\\ 1&0 \end{pmatrix}, \quad 
 \sigma^2 =\begin{pmatrix}0 &-\i\\ \i&0 \end{pmatrix},\quad
  \sigma^3 = \begin{pmatrix}1&0\\ 0&-1 \end{pmatrix},
\end{equation}
acting on the $j$-th factor of the basis states \eqref{eqn:BasisStates}.

\paragraph{Transformation of the parameters.} We analyse the transformation behaviour of the Hamiltonian \eqref{eqn:SUSYXYZHamiltonian} under spin rotations. To this end, we introduce the operators
\begin{equation}
  \mathcal R^\alpha(\theta) = \exp\left(\frac{\i \theta}{2}(\sigma_1^\alpha+\cdots+\sigma_L^\alpha)\right), \quad \alpha =1,2,3.
\end{equation}
We write $H_{\textup{\tiny XYZ}}=H_{\textup{\tiny XYZ}}(\zeta,y)$ to stress the dependence of the Hamiltonian on $\zeta$ and $y$. For each $L\geqslant 1$, it transforms under rotations by the angle $\theta=\pi/2$ according to
\begin{align}
  \begin{split}
  \label{eqn:PiOverTwo}%
  \mathcal R^1(\pi/2)H_{\textup{\tiny XYZ}}(\zeta,y)\mathcal R^1(-\pi/2) &= \left(\frac{1+\zeta}{2}\right)^2H_{\textup{\tiny XYZ}}\left(\frac{3-\zeta}{1+\zeta},\frac{y-\i}{1-\i y}\right),\\
  \mathcal R^2(\pi/2)H_{\textup{\tiny XYZ}}(\zeta,y)\mathcal R^2(-\pi/2) &= \left(\frac{1-\zeta}{2}\right)^2H_{\textup{\tiny XYZ}}\left(\frac{\zeta+3}{\zeta-1},\frac{1+y}{1-y}\right),\\
   \mathcal R^3(\pi/2)H_{\textup{\tiny XYZ}}(\zeta,y)\mathcal R^3(-\pi/2) &= H_{\textup{\tiny XYZ}}\left(-\zeta,-\i y\right).
    \end{split}
\end{align}
Two successive applications of \eqref{eqn:PiOverTwo} lead to the following transformations under rotations by the angle $\theta=\pi$:
\begin{align}
  \begin{split}
  \label{eqn:Pi}%
  \mathcal R^1(\pi)H_{\textup{\tiny XYZ}}(\zeta,y) \mathcal R^1(-\pi)  &=  H_{\textup{\tiny XYZ}}\left(\zeta,y^{-1}\right),\\
  \mathcal R^2(\pi)H_{\textup{\tiny XYZ}}(\zeta,y) \mathcal R^2(-\pi)  &=  H_{\textup{\tiny XYZ}}\left(\zeta,-y^{-1}\right),\\
  \mathcal R^3(\pi)H_{\textup{\tiny XYZ}}(\zeta,y)\mathcal R^3(-\pi) &=  H_{\textup{\tiny XYZ}}(\zeta,-y).
  \end{split}
\end{align}%
The transformations \eqref{eqn:PiOverTwo} and \eqref{eqn:Pi} are unitary. Therefore, they do not change the spectrum of the Hamiltonian. Moreover, they allow us to transform a Hamiltonian with arbitrary parameters $\zeta$ and $y$ to a Hamiltonian whose parameters are restricted to a domain defined by the inequalities
\begin{equation}
  0\leqslant \zeta \leqslant 1,\, 0\leqslant |y|\leqslant 1,\,\text{Re}\,y\geqslant 0.
\end{equation}

As stated in the introduction, we investigate for which pairs $(\zeta,y)$ the ground states of the Hamiltonian are supersymmetry singlets. The case $\zeta=0$ was addressed in \cite{hagendorf:17}. Furthermore, the case $\zeta=1$ is trivial. Indeed, in this case, the Hamiltonian is
\begin{equation}
  H_{\textup{\tiny XYZ}} = - \sum_{j=1}^{L-1} \sigma_j^1\sigma_{j+1}^1 - \frac{2\textup{Re}\, y}{1+|y|^2}\left(\sigma_1^1+\sigma_L^1\right).
\end{equation}
Its ground states are easily found. Therefore, we focus on $0<\zeta<1$. We often focus on the case where $(\zeta,y)$ belongs to the domain
\begin{equation}
  \mathbb D =\{(\zeta,y): 0< \zeta < 1
  ,\, 0\leqslant |y|\leqslant 1,\,\text{Re}\,y\geqslant 0\}.
  \label{eqn:ParameterRange}
\end{equation}

\subsection{The supersymmetry}
\label{sec:DefSUSY}

\paragraph{Local supercharges and supercharges.} The construction of the supersymmetry for the XYZ spin chain is based on operators $\q: V\to V\otimes V$ that we call \textit{local supercharges}. We consider local supercharges with the property
\begin{equation}
  (\q \otimes 1 - 1 \otimes \q)\q|\psi\rangle = |\chi\rangle \otimes |\psi\rangle - |\psi\rangle \otimes |\chi\rangle,
  \label{eqn:QuasiCoassociativity}
\end{equation}
for all $|\psi\rangle \in  V$. Here $|\chi\rangle \in  V \otimes V$ is a fixed state. If $|\chi\rangle = 0$ then \eqref{eqn:QuasiCoassociativity} reduces to
\begin{equation}
  (\q \otimes 1 - 1 \otimes \q)\q=0.
\end{equation}
We call a local supercharge with this property \textit{coassociative}. Coassociative local supercharges allow us to construct supercharges for open spin chains \cite{hagendorf:17}. To see this, we consider the local operators $\q_j,\, j=1,\dots,L,$ on $V^L$ that are given by
\begin{equation}
  \q_j = \underset{j-1}{\underbrace{1\otimes \cdots \otimes 1}}\otimes \q \otimes \underset{L-j}{\underbrace{1\otimes \cdots \otimes 1}}.
\end{equation}
They map $V^L$ to $V^{L+1}$. Using these operators, we define\footnote{As in \cite{hagendorf:18}, we use the symbol $\Q$ irrespectively of the space on which the supercharge acts. This allows us to simplify the notation. If necessary, we indicate the space by writing $\Q:V^L\to V^{L+1}$ (or $\Q:V^{L-1}\to V^L$ etc.), or by explicitly stating that $\Q$ acts on $V^L$ (or $V^{L-1}$ etc.). We use the same convention for other operators, such as the operators that derive from the supercharge.} for each $L\geqslant 1$ the supercharge $\Q:V^L \to V^{L+1}$ as the linear combination
\begin{equation}
  \Q = \sum_{j=1}^L (-1)^j \q_j.
  \label{eqn:DefQ}
\end{equation}
For each $L\geqslant 2$, the adjoint supercharge $\Q^\dagger:V^{L}\to V^{L-1}$ is defined by means of the scalar product of the spin-chain Hilbert space: We set $\langle \psi|\Q^\dagger|\phi\rangle = (\langle \phi|\Q|\psi\rangle )^\ast$ for each $|\phi\rangle \in V^L,\,|\psi\rangle \in V^{L-1}$. The operators $\Q$ and $\Q^\dagger$ are nilpotent,
\begin{equation}
  \Q^2 = 0,\quad  (\Q^\dagger)^2=0,
\end{equation}
if and only if the local supercharge $\q$ is coassociative. This means that the mappings $\Q^2:V^L \to V^{L+2}, \, L\geqslant 1,$ and $(\Q^\dagger)^2:V^L\to V^{L-2},\, L\geqslant 3,$ yield zero on every state of $V^L$. 

\paragraph{Hamiltonian.}
We use $\Q$ and $\Q^\dagger$ to define a Hamiltonian $H$. For $L=1$, it is given by $H = \Q^\dagger \Q$. For $L\geqslant 2$, it is the anticommutator
\begin{equation}
  H = \Q\Q^\dagger + \Q^\dagger \Q.
  \label{eqn:HFromQ}
\end{equation}
It was shown in \cite{hagendorf:17} that this Hamiltonian is given by a sum of terms that describe interactions between nearest neighbours and boundary terms. Furthermore, the nilpotency of $\Q$ and $\Q^\dagger$ implies the commutation relations
\begin{subequations}
\label{eqn:CommHQ}
\begin{equation}
  H\Q = \Q H,\quad \text{on } V^L,\, L\geqslant 1,
\end{equation}
and
\begin{equation}
  H\Q^\dagger = \Q^\dagger H,\quad \text{on } V^L,\, L\geqslant 2.
\end{equation}
\end{subequations}
Hence, the system described by the Hamiltonian $H$ is supersymmetric \cite{witten:82}. We note, however, that the Hamiltonians on the left- and right-hand side of these relations act on the Hilbert spaces of spin chains whose lengths differ by one.

\paragraph{A local supercharge for the XYZ spin chain.} 
We now construct a local supercharge that allows us to investigate the Hamiltonian \eqref{eqn:SUSYXYZHamiltonian}. To this end, we define three local supercharges that satisfy \eqref{eqn:QuasiCoassociativity}. First, we introduce the operator  $\q_\phi$ that acts on $|\psi\rangle \in V$ according to
\begin{equation}
  \q_\phi|\psi\rangle = |\phi\rangle \otimes |\psi\rangle +|\psi\rangle \otimes |\phi\rangle.
  \label{eqn:DefQPhi}
\end{equation}
 Here $|\phi\rangle \in V$ is a fixed state. Indeed, $\q_\phi$ obeys \eqref{eqn:QuasiCoassociativity} with $|\chi\rangle = |\phi\rangle \otimes |\phi\rangle$. Hence, if $|\phi\rangle$ is non-zero then the local supercharge $\q_\phi$ is not coassociative. Second, we define $\q^\uparrow$ and $\q^\downarrow$ through the following action on the basis vectors of $V$ \cite{hagendorf:13}:
\begin{subequations}
\label{eqn:XYZSC}%
\begin{alignat}{4}
  &\q^\uparrow|{\uparrow}\rangle = 0,& \quad &  \q^\uparrow|{\downarrow}\rangle = |{\uparrow\uparrow}\rangle-\zeta|{\downarrow\downarrow}\rangle,\\
 & \q^\downarrow|{\downarrow}\rangle = 0,  &\quad  &\q^\downarrow |{\uparrow}\rangle = |{\downarrow\downarrow}\rangle-\zeta|{\uparrow\uparrow}\rangle.
\end{alignat}%
\end{subequations}%
One checks that both $\q^\uparrow$ and $\q^\downarrow$ obey \eqref{eqn:QuasiCoassociativity} with the vectors $|\chi\rangle = -\zeta|{\uparrow\uparrow}\rangle$ and $|\chi\rangle = -\zeta|{\downarrow\downarrow}\rangle$, respectively. Hence, these operators are not coassociative for non-zero $\zeta$.

We use the three local supercharges $\q^\uparrow,\q^\downarrow$ and $\q_\phi$ to define the linear combination
\begin{subequations}
\label{eqn:XYZLocalSupercharge}
\begin{equation}
  \q = (1-y^2 \zeta)\q^\uparrow+y(y^2-\zeta)\q^\downarrow+\q_\phi,
\end{equation}
where $\ket{\phi}$ is given by
\begin{equation}
  |\phi\rangle  = y(y^2\zeta-1)|{\uparrow}\rangle+(\zeta-y^2)|{\downarrow}\rangle,
\end{equation}
\end{subequations}%
and $y$ is a complex number.
A straightforward calculation shows that $\q$ is coassociative for all $\zeta$ and $y$.

In the next proposition, we prove that the Hamiltonian \eqref{eqn:SUSYXYZHamiltonian} of the XYZ spin chain is supersymmetric, up to a rescaling and to adding a multiple of the identity matrix.
\begin{proposition}
\label{prop:HSUSYHXYZ}
  For each $L\geqslant 1$, the Hamiltonian \eqref{eqn:HFromQ} constructed from the local supercharge \eqref{eqn:XYZLocalSupercharge} is
  \begin{equation}
  \label{eqn:RelationsHam}
   H = x\left(H_{\textup{\tiny XYZ}} + \frac{(L-1)(\zeta^2+3)}{4}+2\lambda_0\right)\!,\end{equation}
  where $H_{\textup{\tiny XYZ}}$ is defined in \eqref{eqn:SUSYXYZHamiltonian}. We have
  \begin{equation}
 \lambda_0 = \frac{1+3\zeta^2}{4}- \frac{(\zeta^2-1)((3 + \zeta^2)|y|^2 - 4 \zeta  \mathop{\textup{Re}}(y^2)  )}{2 (1+|y|^4+(\zeta^2-1)|y|^2-2 \zeta \mathop{\textup{Re}}(y^2) )},
\end{equation}
and
\begin{equation}
x=(1+|y|^2)(1+|y|^4+(\zeta^2-1)|y|^2- 2\zeta \mathop{\textup{Re}}(y^2) ).
\end{equation}
  \begin{proof}
    The proof is a straightforward calculation that follows \cite{hagendorf:17}.
  \end{proof} 
\end{proposition}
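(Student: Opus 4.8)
The proof is computational, and my plan is to make the structure of the computation explicit rather than to grind through it. The starting point is the general result of~\cite{hagendorf:17}: when the local supercharge $\q$ is coassociative (which~\eqref{eqn:XYZLocalSupercharge} is, for all $\zeta$ and $y$), the Hamiltonian~\eqref{eqn:HFromQ} built from $\Q=\sum_{j}(-1)^j\q_j$ is a sum of nearest-neighbour bulk terms and one-site boundary terms,
\begin{equation}
  H = \sum_{j=1}^{L-1} h_{j,j+1} + (h^-)_1 + (h^+)_L,
\end{equation}
where the operator $h\colon V\otimes V\to V\otimes V$ and the operators $h^\pm\colon V\to V$ do not depend on $L$ (for $L=1$ only $h^-+h^+$ survives, matching the convention stated below~\eqref{eqn:XYZHamiltonian}). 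One obtains these local operators by expanding the anticommutator $\Q\Q^\dagger+\Q^\dagger\Q$ and using the nilpotency and coassociativity of $\q$ to reduce every longer-range contribution to the nearest-neighbour and boundary form above. It therefore suffices to (i) compute the $4\times4$ matrix of $h$ in the basis $\{\ket{\uu\uu},\ket{\uu\dd},\ket{\dd\uu},\ket{\dd\dd}\}$ and identify it with $-\tfrac12(J_1\sigma^1\otimes\sigma^1+J_2\sigma^2\otimes\sigma^2+J_3\sigma^3\otimes\sigma^3)$ plus a multiple of the identity, (ii) compute the $2\times2$ matrices of $h^\pm$ and identify them with $\sum_\alpha\lambda_\alpha\sigma^\alpha$ plus a multiple of the identity, and (iii) collect the additive constants.

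For step (i) I would split $\q=\q_0+\q_\phi$ with $\q_0=(1-y^2\zeta)\q^\uparrow+y(y^2-\zeta)\q^\downarrow$, and compute $\q|\psi\rangle$ together with the adjoint action of $\q^\dagger$ on one and two sites directly from~\eqref{eqn:XYZSC} and~\eqref{eqn:DefQPhi}. Assembling $h$ from these, the part diagonal in the product basis reproduces the $\sigma^3\otimes\sigma^3$ coupling and the constant, while the $\q_\phi$ contributions are exactly what break the $U(1)$ symmetry and generate the $\sigma^1\otimes\sigma^1$ and $\sigma^2\otimes\sigma^2$ terms with \emph{unequal} couplings. The check to carry out is that, after dividing by the overall scale $x$, one gets precisely $J_1=1+\zeta$, $J_2=1-\zeta$, $J_3=\tfrac12(\zeta^2-1)$ as in~\eqref{eqn:APCL}, with the per-bond additive constant equal to $(3+\zeta^2)/4$, which sums to the $(L-1)(3+\zeta^2)/4$ appearing in~\eqref{eqn:RelationsHam}. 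For step (ii) the operators $h^\pm$ come from the end terms of the double sum that are not absorbed into a bulk bond; evaluating them yields operators of the form $\sum_\alpha\mu^\pm_\alpha\sigma^\alpha+(\text{const})$, and the invariance of the construction under reversing the chain forces $\mu^+=\mu^-$, consistent with $h_\text{\tiny B}^+=h_\text{\tiny B}^-$ in~\eqref{eqn:hBPM1}. I would first verify the case $L=1$ ($H=\Q^\dagger\Q=\q^\dagger\q$ on $\mathbb C^2$) as an independent check of $h^\pm$ and $\lambda_0$, then $L=2$ to isolate $h$; the general $L$ then follows from the additivity guaranteed by~\cite{hagendorf:17}.

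The normalisation is fixed at the end of step (iii): demanding that the coefficient of $\sigma^1\otimes\sigma^1$ in $h$ equal $-\tfrac12(1+\zeta)$ determines $x=(1+|y|^2)(1+|y|^4+(\zeta^2-1)|y|^2-2\zeta\mathop{\textup{Re}}(y^2))$, and the denominator $1+|y|^4+(\zeta^2-1)|y|^2-2\zeta\mathop{\textup{Re}}(y^2)$ is the combination that appears naturally once the $\sigma^\alpha$-coefficients of $h^\pm$ are rewritten in terms of $\mathop{\textup{Re}}y$, $\mathop{\textup{Im}}y$ and $|y|^2$ so as to match~\eqref{eqn:hBPM2}. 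Whatever scalar remains in $h^\pm/x$ after extracting $\sum_\alpha\lambda_\alpha\sigma^\alpha$ is, by definition, $\lambda_0$, and one verifies it equals the stated expression; the two ends contribute $2\lambda_0$.

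The computation has no conceptual obstacle, so the hard part is purely bookkeeping: keeping the scalar factor $x$ and the constant $\lambda_0$ correct through the algebra, and making sure the $y$-independent anisotropies of~\eqref{eqn:APCL} and the $y$-dependent boundary fields of~\eqref{eqn:hBPM2} really do decouple as claimed. I would therefore carry it out, or at least check it, with a computer algebra system, treating $y$, $\bar y$ and $\zeta$ as independent symbols with $\zeta$ real, and verify the matrix identity~\eqref{eqn:RelationsHam} entry by entry for $L=1,2,3$ before invoking locality for general $L$.
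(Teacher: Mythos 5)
Your proposal is correct and follows essentially the same route the paper takes: a direct calculation of the bulk and boundary terms of $H=\Q\Q^\dagger+\Q^\dagger\Q$ using the decomposition into nearest-neighbour and one-site operators established in \cite{hagendorf:17}, followed by matching coefficients to the XYZ form. The paper's own proof is simply a one-line citation of this calculation; your outline fills in the structural details (splitting $\q=\q_0+\q_\phi$, checking $L=1,2$, extracting $x$ and $\lambda_0$ by normalisation) without deviating from the method.
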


\subsection{Theta-function parameterisation}

\label{sec:Parameterisation}

In this section, we introduce a parameterisation of the points $(\zeta,y)\in \mathbb D$ in terms of Jacobi theta functions. We employ this theta-function parameterisation to define a new basis of the spin Hilbert space. The action of the local supercharge \eqref{eqn:XYZLocalSupercharge} on the basis states yields simple results. 

\paragraph{Parameterisation.} We use the classical notations $\vartheta_j(u,p),\,1\leqslant j \leqslant 4$ and definitions for the Jacobi theta functions \cite{whittaker:27,gradshteyn:07}. We only consider a real elliptic nome $p$ with
\begin{equation}
  0<p<1.
\end{equation}
Let us write $p=e^{-s},\,s>0.$ We define the rectangle $\mathbb R_p = \{z\in \mathbb C: 0 \leqslant \text{Re}\,z \leqslant \pi/2,\, -s/2 \leqslant \text{Im}\, z \leqslant s/2\}$, and the domain
\begin{equation}
  \bar {\mathbb D} = \{(p,t): 0<p<1, \,t \in \mathbb R_p\}.
\end{equation}
The parameterisation of $(\zeta,y)\in \mathbb D$ in terms of $(p,t)\in \bar{\mathbb D}$ is given by
\begin{equation} 
\label{eqn:JacobiThetaParam}
\zeta = \left(\frac{\vartheta_1(2\pi/3,p^2)}{\vartheta_4(2\pi/3,p^2)}\right)^2, \quad y = \frac{\vartheta_1(t,p^2)}{\vartheta_4(t,p^2)}.
\end{equation}
It has the following property:

\begin{proposition} The parameterisation \eqref{eqn:JacobiThetaParam} defines a bijection between $\bar{\mathbb D}$ and $\mathbb D$.
\begin{proof}
We only sketch the proof. First, we note that $\zeta$ is a monotone function of $p$. Second, as a function of $t$, $y$ is the Jacobi elliptic function $\textup{sn}$, up to a rescaling of its argument and a constant factor. The bijectivity can be established with the help of the monotonicity and the conformal mapping properties of $\textup{sn}$ \cite{nehari:82}.
\end{proof}
\end{proposition}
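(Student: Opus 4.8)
The plan is to use that in the parameterisation \eqref{eqn:JacobiThetaParam} the number $\zeta$ depends on $p$ alone, whereas the only constraint in $\mathbb D$ coupling the two coordinates is ``$0<\zeta<1$'': the constraint on $y$ is $\zeta$-independent. Hence it suffices to prove separately that (i) $p\mapsto\zeta$ is a strictly monotone bijection of $(0,1)$ onto $(0,1)$, and (ii) for each fixed $p\in(0,1)$ the map $t\mapsto y$ is a bijection of $\mathbb R_p$ onto $Q:=\{y\in\mathbb C:|y|\le1,\ \mathrm{Re}\,y\ge0\}$. Given (i) and (ii), one inverts the parameterisation by first recovering $p$ from $\zeta$ and then $t$ from $y$, which yields the bijection $\bar{\mathbb D}\to\mathbb D$.

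For (i) I would pass to Jacobi elliptic functions. Writing $q=p^2$ for the nome and $k=k(q)$, $K=K(q)$ for the modulus and real quarter-period, the classical theta dictionary gives $\vartheta_1(u,q)/\vartheta_4(u,q)=\sqrt{k}\,\mathrm{sn}(2Ku/\pi,k)$, hence $\zeta=k\,\mathrm{sn}(4K/3,k)^2=k\,\mathrm{sn}(2K/3,k)^2$. The limits are then transparent: as $p\to0^+$ one has $k\to0$ and $K\to\pi/2$, so $\zeta\to0\cdot\sin^2(\pi/3)=0$; as $p\to1^-$ one has $k\to1$ and $K\to\infty$, so $\mathrm{sn}(2K/3,k)\to1$ and $\zeta\to1$ (this last limit is the delicate one and I would make it rigorous via Jacobi's imaginary transformation of the theta quotient). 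For strict monotonicity I would combine the classical monotonicity of $q\mapsto k(q)$ with that of $k\mapsto\mathrm{sn}(2K(k)/3,k)$; alternatively, from the infinite-product representations of $\vartheta_1$ and $\vartheta_4$ one gets, at $u=2\pi/3$ (where $\cos(4\pi/3)=-\tfrac12$), $\sqrt{\zeta}=\sqrt3\,q^{1/4}\prod_{n\ge1}(1+q^{2n}+q^{4n})/(1+q^{2n-1}+q^{4n-2})$, whose logarithmic derivative one estimates.

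For (ii) one has $y=\sqrt{k}\,\mathrm{sn}(u,k)$ with $u=2Kt/\pi$, so as $t$ ranges over $\mathbb R_p$ the point $u$ ranges over the rectangle $R=\{u:0\le\mathrm{Re}\,u\le K,\ |\mathrm{Im}\,u|\le K'/2\}$, $K'=K'(q)$, on which $\mathrm{sn}(\cdot,k)$ is holomorphic (its pole $\i K'$ lies outside $R$). I would compute the images of the sides of $R$ by the imaginary-argument and quarter-period shift identities for the Jacobi functions: $\mathrm{Re}\,u=0$ maps onto the segment of the imaginary axis from $-\i$ to $\i$; the sides $\mathrm{Im}\,u=\pm K'/2$ map onto the two arcs of the unit circle in the half-plane $\mathrm{Re}\,y\ge0$ joining $\pm\i$ to $1$ (the key identity being $|\sqrt{k}\,\mathrm{sn}(u\pm\i K'/2,k)|=1$ for real $u$, which follows from the addition theorem together with $(1+k\,\mathrm{sn}^2)^2=(1+k)^2\mathrm{sn}^2+\mathrm{cn}^2\mathrm{dn}^2$); and $\mathrm{Re}\,u=K$ maps onto the real segment $[\sqrt{k},1]$, which it traverses twice since $\mathrm{sn}$ has a simple critical point at $u=K$. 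The cleanest way to conclude is to restrict to the half-rectangle $R^+=\{u\in R:\mathrm{Im}\,u\ge0\}$, whose boundary then maps homeomorphically onto the Jordan curve bounding the eighth-disc $\{|y|\le1,\ \mathrm{Re}\,y\ge0,\ \mathrm{Im}\,y\ge0\}$; by the boundary-correspondence theorem for conformal maps \cite{nehari:82}, $\mathrm{sn}$ maps $R^+$ biholomorphically onto that eighth-disc, and Schwarz reflection in the real axis handles $R^-$. Gluing the two pieces gives a map of $\mathbb R_p$ onto $Q$ which is bijective on the interiors and which folds the side $\mathrm{Re}\,u=K$ onto the slit $[\sqrt{k},1]$.

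I expect the real work to be in two places. First, the side $\mathrm{Re}\,u=K$ of $\mathbb R_p$ is genuinely folded onto the slit $[\sqrt{k},1]\subset Q$, so ``bijection'' must be read with the two sides of that slit identified---equivalently, with $(p,\pi/2+\i\beta)$ and $(p,\pi/2-\i\beta)$ identified in $\bar{\mathbb D}$---and the substantive content is the biholomorphism of the interiors; this boundary bookkeeping is the point to treat with care. Second, the strict monotonicity of $\zeta$ together with the limit $\zeta\to1$ as $p\to1^-$ genuinely rely on a Jacobi imaginary (modular) transformation to control the theta quotient near $q=1$; this is the calculation I would write out in full, whereas the conformal-mapping step above is routine once the four boundary arcs have been identified.
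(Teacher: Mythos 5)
Your proposal follows the same route as the paper's sketch---monotonicity of $p\mapsto\zeta$ and the conformal-mapping properties of $\mathrm{sn}$---but carries it out explicitly; the identifications $\zeta=k\,\mathrm{sn}(2K/3,k)^2$, $y=\sqrt k\,\mathrm{sn}(2Kt/\pi,k)$ and the boundary correspondence for the half-rectangle $R^+$ are all correct.

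You are also right to flag the fold: since $\mathrm{sn}(2K-u)=\mathrm{sn}(u)$ one has $\mathrm{sn}(K+\i v)=\mathrm{sn}(K-\i v)$, so both $t=\pi/2+\i\beta$ and $t=\pi/2-\i\beta$ map to the same real value $\sqrt k/\mathrm{dn}(2K\beta/\pi,k')\in[\sqrt k,1]$, and the side $\mathrm{Re}\,t=\pi/2$ of $\mathbb R_p$ is traced two-to-one onto the interior slit $[\sqrt k,1]$ of $Q$. Taken literally, then, \eqref{eqn:JacobiThetaParam} is a bijection only after identifying $t=\pi/2\pm\i\beta$ (equivalently, restricting that side of $\mathbb R_p$ to $\mathrm{Im}\,t\geqslant0$); on the interior of $\mathbb R_p$ it is a genuine biholomorphism onto the interior of $Q$. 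This imprecision is harmless for the paper, as the only value used later is $t=\pi/6$, which lies in the interior, but it is a correct observation and the careful reader should read the Proposition with that caveat.
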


In the remainder of this section, we implicitly assume the parameterisation \eqref{eqn:JacobiThetaParam}.

\paragraph{Basis states.} In addition to the parameterisation, we introduce the states and dual states
\begin{align}
  \label{eqn:DefVSigma}
  |v_\epsilon\rangle & = \vartheta_4(t+\epsilon\pi/3,p^2)|{\uparrow}\rangle+\vartheta_1(t+\epsilon\pi/3,p^2)|{\downarrow}\rangle,\\
  \langle w_\epsilon| & = \epsilon\left(-\vartheta_1(t-\epsilon\pi/3,p^2)\langle {\uparrow}|+\vartheta_4(t-\epsilon\pi/3,p^2)\langle{\downarrow}|\right),
    \label{eqn:DefWSigma}
\end{align}
where $\epsilon = \pm$. One checks that
\begin{equation}
  \langle w_\epsilon|v_{\epsilon'}\rangle = \vartheta_1(\pi/3,p)\vartheta_2(t,p)\delta_{\epsilon\epsilon'},
  \label{eqn:ScalarProdWV}
\end{equation}
for each $\epsilon,\epsilon'=\pm$. In the next five lemmas, we establish several properties of these states.

\begin{lemma}
\label{lem:Basis1}
For all $(p,t) \in \bar{\mathbb D}$ with $\,t\neq \pi/2$, the states $|v_+\rangle$ and $|v_-\rangle$ form a basis of $V$.
\begin{proof}
The matrix 
\begin{equation}
  M =\begin{pmatrix}
    \vartheta_4(t+\pi/3,p^2) & \vartheta_4(t-\pi/3,p^2)\\
    \vartheta_1(t+\pi/3,p^2) & \vartheta_1(t-\pi/3,p^2)\\    
  \end{pmatrix},
\end{equation}
whose columns are given by $|v_+\rangle$ and $|v_-\rangle$, has the determinant
\begin{equation}
  \det M = -\vartheta_1(\pi/3,p)\vartheta_2(t,p).
\end{equation}
For $t \neq \pi/2$, this determinant is non-vanishing. Hence the vectors are linearly independent. Therefore, they form a basis of $V$.
\end{proof}
\end{lemma}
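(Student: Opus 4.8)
The plan is to exhibit the $2\times 2$ matrix $M$ whose columns are the coordinate vectors of $\ket{v_+}$ and $\ket{v_-}$ in the basis $\{\ket{\uparrow},\ket{\downarrow}\}$, and to show that its determinant is nonzero precisely when $t\neq\pi/2$. Writing out the definitions \eqref{eqn:DefVSigma}, the entries of $M$ are the four theta values $\vartheta_4(t\pm\pi/3,p^2)$ and $\vartheta_1(t\pm\pi/3,p^2)$, so
\[
\det M = \vartheta_4(t+\pi/3,p^2)\,\vartheta_1(t-\pi/3,p^2) - \vartheta_4(t-\pi/3,p^2)\,\vartheta_1(t+\pi/3,p^2).
\]
This is a difference of products of theta functions with shifted arguments, which is exactly the shape handled by the classical theta addition (Riemann) identities; the goal is to collapse it to a single product.

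The first step is therefore to apply a three-term theta identity to rewrite $\det M$ as a product. Concretely, the relevant identity (see \cite{whittaker:27,gradshteyn:07}) expresses a bilinear combination of $\vartheta_1$ and $\vartheta_4$ evaluated at $u+v$ and $u-v$ in terms of a product $\vartheta_1(u,\cdot)\vartheta_?(v,\cdot)$ times theta constants; with $u=t$ and $v=\pi/3$ one obtains, after using that the nome of the outer theta functions becomes $p$ rather than $p^2$ when the arguments are halved appropriately, the closed form $\det M = -\vartheta_1(\pi/3,p)\,\vartheta_2(t,p)$. I would verify this either by quoting the standard duplication/addition formula or, more robustly, by the ``theta functions are determined by their quasi-periods and zeros'' argument: both sides are holomorphic in $t$, share the same quasi-periodicity factors under $t\mapsto t+\pi$ and $t\mapsto t+\pi\tau$ (with $\tau$ the modular parameter of nome $p^2$), and have the same zero set, so their ratio is a constant fixed by evaluating at one convenient point (e.g.\ $t=\pi/2$ forces the comparison, or $t=0$ using $\vartheta_1(0,\cdot)=0$).

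The second step is to read off the vanishing locus. The theta constant $\vartheta_1(\pi/3,p)$ is nonzero for $0<p<1$, since $\vartheta_1(u,p)$ vanishes only at $u\in\pi\mathbb{Z}+\pi\tau\mathbb{Z}$ and $\pi/3$ is not of that form for real nome. Hence $\det M=0$ iff $\vartheta_2(t,p)=0$. For $t$ in the rectangle $\mathbb{R}_p$ (equivalently $(p,t)\in\bar{\mathbb D}$), the zeros of $\vartheta_2(\cdot,p)$ inside this fundamental domain occur exactly at $t=\pi/2$. Therefore $\det M\neq 0$ for all admissible $t\neq\pi/2$, the columns of $M$ are linearly independent, and $\{\ket{v_+},\ket{v_-}\}$ spans the two-dimensional space $V$, hence is a basis.

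The main obstacle is pinning down the precise addition identity and, in particular, getting the arguments, the nome ($p$ versus $p^2$), and the overall sign of the constant exactly right, since the passage from arguments $t\pm\pi/3$ with nome $p^2$ to the factored form with nome $p$ involves a Landen-type transformation that is easy to botch. The zero-counting step, by contrast, is routine once the factorisation is in hand. I would guard against sign errors by cross-checking the final formula for $\det M$ against the scalar-product normalisation \eqref{eqn:ScalarProdWV}, which is built from the same theta constants and must be consistent with it.
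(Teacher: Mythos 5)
Your proposal is correct and follows exactly the paper's route: exhibit the change-of-basis matrix $M$ with columns $\ket{v_\pm}$, evaluate $\det M = -\vartheta_1(\pi/3,p)\vartheta_2(t,p)$ via a theta addition identity, and conclude non-vanishing (hence linear independence) for $t\neq\pi/2$; you merely supply more of the bookkeeping (the $p$ vs.\ $p^2$ Landen step and the zero-locus of $\vartheta_2$ in $\mathbb R_p$) that the paper leaves implicit. One small slip in your side remark on fixing the constant: evaluating at $t=\pi/2$ gives $0=0$ and determines nothing, and at $t=0$ the factor $\vartheta_1(\pi/3,\cdot)$ does not vanish --- the $t=0$ evaluation does work, but because both sides reduce to non-zero theta constants related by a duplication formula, not because $\vartheta_1(0,\cdot)=0$.
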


If $t=\pi/2$, then $|v_+\rangle$ and $|v_-\rangle$ are not linearly independent: We have $|v_-\rangle = |v_+\rangle$. To find a suitable basis of $V$, we define
\begin{equation}
   \label{eqn:DefVPlusBar}
|\dot v_+\rangle = \left.\frac{\diff}{\diff t}|v_+\rangle\right|_{t=\pi/2}.  
\end{equation}
    
\begin{lemma}
\label{lem:Basis2}
For all $(p,t) \in \bar{\mathbb D}$ with $t=\pi/2$, the states $|v_+\rangle$ and $|\dot v_+\rangle$ form a basis of $V$.
\begin{proof}
  The matrix whose columns are given by the states $|v_+\rangle,|\dot v_+\rangle$ is  \begin{equation}
\dot M =\begin{pmatrix}
    \vartheta_3(\pi/3,p^2) & \vartheta_3'(\pi/3,p^2)\\
    \vartheta_2(\pi/3,p^2) & \vartheta_2'(\pi/3,p^2)   
     \end{pmatrix}.
 \end{equation}
 Its determinant is given by $\det \dot M = -\frac{1}{2}\vartheta_1'(0,p)\vartheta_1(\pi/3,p)$, which is non-zero. Hence the vectors are linearly independent. Therefore, they form a basis of $V$.
\end{proof}
\end{lemma}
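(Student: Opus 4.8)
The plan is to reduce the statement to the non-vanishing of $\det\dot M$ and to obtain that determinant through a limiting argument based on \cref{lem:Basis1}, rather than through a fresh theta-function computation. Since $\dim V=2$, it suffices to prove that $|v_+\rangle$ and $|\dot v_+\rangle$ are linearly independent, i.e.\ that the matrix $\dot M$ with these two vectors as columns has non-zero determinant.

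First I would record the elementary symmetry $|v_-(t)\rangle=|v_+(\pi-t)\rangle$, valid for all $t$; it follows from the definitions \eqref{eqn:DefVSigma} together with the evenness and $\pi$-periodicity of $\vartheta_4(\cdot,p^2)$ and the oddness and quasi-periodicity $\vartheta_1(u+\pi,p^2)=-\vartheta_1(u,p^2)$ of $\vartheta_1(\cdot,p^2)$. At $t=\pi/2$ this recovers $|v_-\rangle=|v_+\rangle$, and differentiating it in $t$ and setting $t=\pi/2$ gives $\left.\tfrac{\diff}{\diff t}|v_-\rangle\right|_{t=\pi/2}=-|\dot v_+\rangle$.

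Next I would use holomorphy in $t$. The $V$-valued function $|v_-(t)\rangle-|v_+(t)\rangle$ is entire and vanishes at $t=\pi/2$, so $|v_-(t)\rangle-|v_+(t)\rangle=(t-\pi/2)\,|u(t)\rangle$ with $|u(t)\rangle$ entire and, by the previous step, $|u(\pi/2)\rangle=-2|\dot v_+\rangle$. Hence $|v_+\rangle$ and $|\dot v_+\rangle$ span the same subspace as $|v_+(\pi/2)\rangle$ and $|u(\pi/2)\rangle$, so it is enough to show that the latter two are linearly independent. For $t\neq\pi/2$ a column operation and \cref{lem:Basis1} give
\begin{equation}
  \det\bigl(|v_+(t)\rangle,\,|u(t)\rangle\bigr)
  = \frac{1}{t-\pi/2}\,\det\bigl(|v_+(t)\rangle,\,|v_-(t)\rangle\bigr)
  = \frac{\det M}{t-\pi/2}
  = -\frac{\vartheta_1(\pi/3,p)\,\vartheta_2(t,p)}{t-\pi/2}.
\end{equation}
Letting $t\to\pi/2$ and using $\vartheta_2(\pi/2,p)=0$ and the half-period shift $\vartheta_2'(\pi/2,p)=-\vartheta_1'(0,p)$, the right-hand side tends to $\vartheta_1(\pi/3,p)\,\vartheta_1'(0,p)$, while the left-hand side tends to $\det\bigl(|v_+(\pi/2)\rangle,\,|u(\pi/2)\rangle\bigr)=-2\det\dot M$ by continuity. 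Thus $\det\dot M=-\tfrac12\vartheta_1'(0,p)\,\vartheta_1(\pi/3,p)$, the value stated in the lemma, and it is non-zero for every $0<p<1$ because $\vartheta_1'(0,p)=\vartheta_2(0,p)\vartheta_3(0,p)\vartheta_4(0,p)>0$ by Jacobi's derivative formula while $\vartheta_1(\pi/3,p)\neq0$, the only real zeros of $\vartheta_1(\cdot,p)$ being the integer multiples of $\pi$. Therefore $|v_+\rangle$ and $|\dot v_+\rangle$ are linearly independent and form a basis of $V$.

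The only delicate point is keeping track of the half-period and quasi-period relations of the theta functions in the first two steps — in particular fixing the signs in the relation $\left.\tfrac{\diff}{\diff t}|v_-\rangle\right|_{t=\pi/2}=-|\dot v_+\rangle$ and in $\vartheta_2'(\pi/2,p)=-\vartheta_1'(0,p)$; everything else is a short limit resting on \cref{lem:Basis1}. A more computational alternative is to evaluate $\dot M$ directly from $\vartheta_1(u+\pi/2,p^2)=\vartheta_2(u,p^2)$ and $\vartheta_4(u+\pi/2,p^2)=\vartheta_3(u,p^2)$, which produces the matrix displayed in the statement, and then to identify $\det\dot M$ with the Wronskian $\vartheta_3\vartheta_2'-\vartheta_2\vartheta_3'$ evaluated at $\pi/3$; up to a rescaling of the argument this Wronskian is a non-zero constant times $\vartheta_1$ and hence non-vanishing at $\pi/3$, but rewriting it in the nome-$p$ form quoted in the lemma requires a Landen-type doubling identity.
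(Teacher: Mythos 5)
Your proof is correct and takes a genuinely different route from the paper's. The paper simply writes out $\dot M$ in terms of $\vartheta_2,\vartheta_3$ and their derivatives at $\pi/3$ (using the half-period shifts $\vartheta_1(u+\pi/2,p^2)=\vartheta_2(u,p^2)$, $\vartheta_4(u+\pi/2,p^2)=\vartheta_3(u,p^2)$) and then quotes the determinant value $-\tfrac12\vartheta_1'(0,p)\vartheta_1(\pi/3,p)$ without derivation --- which amounts to evaluating the Wronskian $\vartheta_3\vartheta_2'-\vartheta_2\vartheta_3'$ by a separate theta identity. You instead bypass that fresh computation by treating $\det\dot M$ as a degenerate limit of the determinant already established in \cref{lem:Basis1}: the symmetry $|v_-(t)\rangle=|v_+(\pi-t)\rangle$ lets you show that the span of $\{|v_+\rangle,|\dot v_+\rangle\}$ at $t=\pi/2$ agrees with the span of $\{|v_+(t)\rangle,(t-\pi/2)^{-1}(|v_-(t)\rangle-|v_+(t)\rangle)\}$ in the limit, and then l'H\^opital applied to $\det M=-\vartheta_1(\pi/3,p)\vartheta_2(t,p)$ together with $\vartheta_2'(\pi/2,p)=-\vartheta_1'(0,p)$ yields the same formula. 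What your approach buys is that the determinant value is \emph{derived} rather than asserted, and it reuses work from the preceding lemma; what it costs is some bookkeeping of half-period shifts and signs, including re-deriving the identity $\left.\tfrac{\diff}{\diff t}|v_-\rangle\right|_{t=\pi/2}=-|\dot v_+\rangle$, which the paper states only later in \cref{lem:QOnVBar}. Both establish the non-vanishing from Jacobi's derivative formula and the real-axis zeros of $\vartheta_1$, so the logical content is the same.
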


\begin{lemma}
  \label{lem:QOnBasis}
  For each $\epsilon=\pm$, we have
  \begin{equation}
    \q|v_\epsilon\rangle=\Lambda_\epsilon|v_\epsilon\rangle\otimes |v_\epsilon\rangle,
    \label{eqn:QOnV}
  \end{equation}
 where
  \begin{equation}
    \label{eqn:DefLambdaSigma}
    \Lambda_\epsilon = \frac{2\epsilon \vartheta_1(\pi/3,p^2)\vartheta_4(0,p^2)^2}{\vartheta_4(\pi/3,p^2)\vartheta_2(0,p)}\frac{\vartheta_2(t+\epsilon\pi/3,p)}{\vartheta_4(t,p^2)^3}.
  \end{equation}
  \begin{proof}
    The proof follows from a number of identities for Jacobi theta functions.
  \end{proof}
\end{lemma}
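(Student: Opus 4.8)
The statement to prove is \cref{lem:QOnBasis}: that the local supercharge $\q$ acts diagonally on the theta-function basis vectors $|v_\epsilon\rangle$, with $\q|v_\epsilon\rangle=\Lambda_\epsilon|v_\epsilon\rangle\otimes|v_\epsilon\rangle$. The plan is a direct computation. First I would substitute the explicit form of $\q$ from \eqref{eqn:XYZLocalSupercharge} — namely $\q = (1-y^2\zeta)\q^\uparrow+y(y^2-\zeta)\q^\downarrow+\q_\phi$ — and the definition \eqref{eqn:DefVSigma} of $|v_\epsilon\rangle = \vartheta_4(t+\epsilon\pi/3,p^2)|{\uparrow}\rangle+\vartheta_1(t+\epsilon\pi/3,p^2)|{\downarrow}\rangle$ into the left-hand side. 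Using \eqref{eqn:XYZSC} and \eqref{eqn:DefQPhi}, $\q|v_\epsilon\rangle$ expands into a linear combination of the four product states $|{\uparrow\uparrow}\rangle,|{\uparrow\downarrow}\rangle,|{\downarrow\uparrow}\rangle,|{\downarrow\downarrow}\rangle$ with coefficients that are quadratic expressions in $\vartheta_1,\vartheta_4$ at the arguments $t$, $t\pm\pi/3$, and $2\pi/3$ (the latter entering through $\zeta$ and $y$ via \eqref{eqn:JacobiThetaParam}). On the other hand, $\Lambda_\epsilon|v_\epsilon\rangle\otimes|v_\epsilon\rangle$ is manifestly $\Lambda_\epsilon$ times a product of the same two theta values at $t+\epsilon\pi/3$; in particular its $|{\uparrow\downarrow}\rangle$ and $|{\downarrow\uparrow}\rangle$ coefficients are equal, so one necessary check is that the mixed coefficients coming from $\q$ also coincide — this is automatic since $\q^\uparrow,\q^\downarrow$ produce only $|{\uparrow\uparrow}\rangle,|{\downarrow\downarrow}\rangle$ and $\q_\phi$ is symmetric.

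Matching coefficients then reduces the lemma to four scalar identities among Jacobi theta functions (one per basis product state), each asserting that a certain bilinear combination of $\vartheta_1,\vartheta_4$ evaluated at shifted arguments factorizes as $\Lambda_\epsilon$ times a product. The natural tool is the addition theorem for theta functions, e.g. the classical three-term (Riemann) identities such as
\begin{equation}
  \vartheta_1(u+v,p)\vartheta_1(u-v,p)\vartheta_4(0,p)^2 = \vartheta_1(u,p)^2\vartheta_4(v,p)^2 - \vartheta_4(u,p)^2\vartheta_1(v,p)^2,
\end{equation}
and its companions obtained by permuting the indices $1,2,3,4$. One would use these with $v=\pi/3$ or $v=2\pi/3$ and the special value $\vartheta_1(2\pi/3,p^2)/\vartheta_4(2\pi/3,p^2)$ fixing $\zeta$; note also the quasi-periodicity relations to pass between nome $p$ and nome $p^2$ arguments where needed. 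The key structural reason the computation closes is that the parameterisation \eqref{eqn:JacobiThetaParam} was engineered precisely so that $y^2-\zeta$ and $1-y^2\zeta$ themselves factorize nicely in theta functions (up to the common denominator $\vartheta_4(t,p^2)^2$), which is what makes the coefficients of $\q^\uparrow$ and $\q^\downarrow$ combine with the $\q_\phi$ contribution into a single product.

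The main obstacle is purely computational bookkeeping: there are several theta arguments ($0$, $\pi/3$, $2\pi/3$, $t$, $t\pm\pi/3$), two nomes ($p$ and $p^2$), and the special angle $2\pi/3$ — which is a third of the real period — requires care because the three-term identities degenerate or simplify at $v=\pi/3$ in ways tied to the factor $\vartheta_1(\pi/3,p)$ appearing throughout (cf. \eqref{eqn:ScalarProdWV} and \cref{lem:Basis1}). I would organize the verification by first proving the two ``diagonal'' identities (coefficients of $|{\uparrow\uparrow}\rangle$ and $|{\downarrow\downarrow}\rangle$) and then checking the mixed coefficient, extracting $\Lambda_\epsilon$ from the $|{\uparrow\uparrow}\rangle$ equation and confirming consistency with \eqref{eqn:DefLambdaSigma}. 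Since the paper itself says only ``the proof follows from a number of identities for Jacobi theta functions,'' I would expect to relegate the explicit identity list to an appendix or to cite \cite{whittaker:27,gradshteyn:07}, and in the main text simply indicate the addition-theorem strategy and the role of the parameterisation.
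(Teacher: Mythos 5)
Your proposal is correct and amounts to exactly the computation that the paper's one-line proof (``follows from a number of identities for Jacobi theta functions'') alludes to: expand $\q|v_\epsilon\rangle$ in the product basis using \eqref{eqn:XYZLocalSupercharge}, \eqref{eqn:XYZSC}, \eqref{eqn:DefQPhi} and \eqref{eqn:DefVSigma}, note that $\q^\uparrow,\q^\downarrow$ only produce $|{\uparrow\uparrow}\rangle,|{\downarrow\downarrow}\rangle$ and that $\q_\phi$ is symmetric, then close the four resulting scalar identities with Riemann's three-term theta relations applied to the parameterisation \eqref{eqn:JacobiThetaParam} (which indeed makes $1-\zeta y^2$ and $y^2-\zeta$ factor, over the common denominator $\vartheta_4(t,p^2)^2\vartheta_4(2\pi/3,p^2)^2$, into products $\vartheta_4(t+\pi/3,p^2)\vartheta_4(t-\pi/3,p^2)\vartheta_4(0,p^2)^2$ and $\vartheta_1(t+\pi/3,p^2)\vartheta_1(t-\pi/3,p^2)\vartheta_4(0,p^2)^2$, respectively). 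One small terminological correction: the identities needed to produce the nome-$p$ factors $\vartheta_2(t+\epsilon\pi/3,p)$ and $\vartheta_2(0,p)$ in $\Lambda_\epsilon$ from nome-$p^2$ expressions are Landen/duplication-type formulas, not quasi-periodicity; quasi-periodicity (shifts of the argument by $\pi$) is what you actually use to rewrite $t\pm 2\pi/3$ as $t\mp\pi/3$, and keeping the two distinct will save you grief in the bookkeeping.
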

\begin{lemma}
  \label{lem:QOnVBar}
  Let $t = \pi/2$, then
  \begin{equation}
    \q|\dot v_+\rangle = \dot \Lambda_+|v_+\rangle\otimes |v_+\rangle+\Lambda_+(|\dot v_+\rangle\otimes |v_+\rangle+|v_+\rangle\otimes |\dot v_+\rangle),
  \end{equation}
  where
  \begin{equation}
 \dot \Lambda_+ = \left.\frac{\diff}{\diff t}\Lambda_+\right|_{t=\pi/2}.
  \end{equation}
  \begin{proof} 
We differentiate \eqref{eqn:QOnV} at $t=\pi/2$. We eliminate the terms that involve the derivative of $\q$ with respect to $t$ by observing that for $t=\pi/2$, 
\begin{equation}
\left.\frac{\diff}{\diff t}\Lambda_-\right|_{t=\pi/2} =-\dot \Lambda_+\quad \text{and}\quad \left.\frac{\diff}{\diff t}|v_-\rangle\right|_{t=\pi/2} = -|\dot v_+\rangle.
\end{equation}
This leads to the action of $\q$ on $|\dot v_+\rangle$.
  \end{proof}
\end{lemma}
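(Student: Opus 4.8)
The plan is to obtain the action of $\q$ on $|\dot v_+\rangle$ by differentiating the identity \eqref{eqn:QOnV} of \cref{lem:QOnBasis} with respect to the parameter $t$ and then specialising to $t=\pi/2$. Since the local supercharge \eqref{eqn:XYZLocalSupercharge} depends on $t$ only through $y=\vartheta_1(t,p^2)/\vartheta_4(t,p^2)$, and the states $|v_\epsilon\rangle$ and coefficients $\Lambda_\epsilon$ are entire in $t$, all objects are smooth in $t$ and \eqref{eqn:QOnV} holds as an identity in $t$. Applying $\diff/\diff t$ gives, for $\epsilon=\pm$ and every $t$,
\begin{equation}
\label{eqn:proposal-diff}
\frac{\diff\q}{\diff t}|v_\epsilon\rangle + \q\,\frac{\diff|v_\epsilon\rangle}{\diff t} = \frac{\diff\Lambda_\epsilon}{\diff t}\,|v_\epsilon\rangle\otimes|v_\epsilon\rangle + \Lambda_\epsilon\Bigl(\frac{\diff|v_\epsilon\rangle}{\diff t}\otimes|v_\epsilon\rangle + |v_\epsilon\rangle\otimes\frac{\diff|v_\epsilon\rangle}{\diff t}\Bigr).
\end{equation}
At $t=\pi/2$ the term $\frac{\diff\q}{\diff t}|v_\epsilon\rangle$ is the unwanted one, and the whole point of the argument is to make it cancel.

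The key input is that the $\epsilon=-$ objects are the reflections about $t=\pi/2$ of the $\epsilon=+$ ones. Starting from \eqref{eqn:DefVSigma} and \eqref{eqn:DefLambdaSigma}, and using the parity of $\vartheta_1,\vartheta_2,\vartheta_4$ together with the quasi-periodicities $\vartheta_1(u+\pi)=-\vartheta_1(u)$, $\vartheta_2(u+\pi)=-\vartheta_2(u)$, $\vartheta_4(u+\pi)=\vartheta_4(u)$ (with nome $p$ or $p^2$ as appropriate), one checks directly that $|v_-(t)\rangle=|v_+(\pi-t)\rangle$ and $\Lambda_-(t)=\Lambda_+(\pi-t)$. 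Evaluating at $t=\pi/2$ reproduces $|v_-\rangle=|v_+\rangle$ (as already noted before the lemma) and $\Lambda_-=\Lambda_+$, while differentiating yields $\frac{\diff}{\diff t}|v_-\rangle\big|_{\pi/2}=-|\dot v_+\rangle$ and $\frac{\diff}{\diff t}\Lambda_-\big|_{\pi/2}=-\dot\Lambda_+$.

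Finally I would evaluate \eqref{eqn:proposal-diff} at $t=\pi/2$ once for $\epsilon=+$ and once for $\epsilon=-$ and insert these four relations. Writing $v$ for the common value of $|v_+\rangle$ and $|v_-\rangle$ at $t=\pi/2$ and $\dot v=|\dot v_+\rangle$, the $\epsilon=+$ equation becomes $\frac{\diff\q}{\diff t}v+\q\dot v=\dot\Lambda_+\,v\otimes v+\Lambda_+(\dot v\otimes v+v\otimes\dot v)$, while the $\epsilon=-$ equation becomes $\frac{\diff\q}{\diff t}v-\q\dot v=-\dot\Lambda_+\,v\otimes v-\Lambda_+(\dot v\otimes v+v\otimes\dot v)$. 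Subtracting the second from the first eliminates the term $\frac{\diff\q}{\diff t}v$ and leaves $2\q\dot v=2\dot\Lambda_+\,v\otimes v+2\Lambda_+(\dot v\otimes v+v\otimes\dot v)$, which is precisely the asserted identity; adding the two equations instead gives the byproduct $\frac{\diff\q}{\diff t}v=0$ at $t=\pi/2$. The only genuine computation is the reflection identity $|v_-(t)\rangle=|v_+(\pi-t)\rangle$, $\Lambda_-(t)=\Lambda_+(\pi-t)$, and I expect the main — and rather mild — obstacle to be no more than careful bookkeeping of the $\pm\pi/3$ argument shifts and the quasi-periodicity signs; beyond \cref{lem:QOnBasis} and the product rule no new idea is needed.
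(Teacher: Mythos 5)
Your proposal is correct and follows essentially the same route as the paper: differentiate \eqref{eqn:QOnV} in $t$ for both $\epsilon=\pm$ and cancel the $\frac{\diff\q}{\diff t}$ term using the two relations $\frac{\diff}{\diff t}|v_-\rangle|_{\pi/2}=-|\dot v_+\rangle$ and $\frac{\diff}{\diff t}\Lambda_-|_{\pi/2}=-\dot\Lambda_+$, which the paper states without proof and which you correctly derive from the reflection identities $|v_-(t)\rangle=|v_+(\pi-t)\rangle$, $\Lambda_-(t)=\Lambda_+(\pi-t)$. The only addition beyond the paper's terse proof is that you make the subtraction step and the reflection argument explicit, which is a helpful elaboration rather than a different method.
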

\begin{lemma}
 \label{lem:DualBasis}
  For each $\epsilon,\epsilon'=\pm$, we have
\begin{equation}
  \left(\langle w_\epsilon|\otimes \langle w_{\epsilon'}|\right)\q=\vartheta_1(\pi/3,p)\vartheta_2(t,p)\Lambda_\epsilon \delta_{\epsilon\epsilon'}\langle w_\epsilon|.
\end{equation}
\begin{proof}
  The proof is a straightforward calculation using standard identities for the Jacobi theta functions.
\end{proof}
\end{lemma}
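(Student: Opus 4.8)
The plan is to obtain the identity as a direct consequence of \cref{lem:QOnBasis} and the orthogonality relation \eqref{eqn:ScalarProdWV}, by evaluating both sides on a basis of $V$. Since $\q:V\to V\otimes V$ and $\langle w_\epsilon|\otimes\langle w_{\epsilon'}|\in(V\otimes V)^\ast$, both sides of the claimed equation are elements of $V^\ast$, so it is enough to check that they agree when applied to a basis of $V$.

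First I would treat the generic case $t\neq\pi/2$, for which $|v_+\rangle,|v_-\rangle$ form a basis of $V$ by \cref{lem:Basis1}. For each $\mu=\pm$, \cref{lem:QOnBasis} gives $\q\,|v_\mu\rangle=\Lambda_\mu|v_\mu\rangle\otimes|v_\mu\rangle$, whence
\[
 \left(\langle w_\epsilon|\otimes\langle w_{\epsilon'}|\right)\q\,|v_\mu\rangle
 =\Lambda_\mu\,\langle w_\epsilon|v_\mu\rangle\,\langle w_{\epsilon'}|v_\mu\rangle
 =\Lambda_\mu\,\bigl(\vartheta_1(\pi/3,p)\vartheta_2(t,p)\bigr)^2\,\delta_{\epsilon\mu}\delta_{\epsilon'\mu},
\]
the last step using \eqref{eqn:ScalarProdWV}. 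Applying the right-hand side of the asserted identity to $|v_\mu\rangle$ and using \eqref{eqn:ScalarProdWV} once more yields
\[
 \vartheta_1(\pi/3,p)\vartheta_2(t,p)\,\Lambda_\epsilon\,\delta_{\epsilon\epsilon'}\,\langle w_\epsilon|v_\mu\rangle
 =\Lambda_\epsilon\,\bigl(\vartheta_1(\pi/3,p)\vartheta_2(t,p)\bigr)^2\,\delta_{\epsilon\epsilon'}\delta_{\epsilon\mu}.
\]
These two expressions agree for every $\mu$: both vanish when $\epsilon\neq\epsilon'$, and both equal $\Lambda_\epsilon\bigl(\vartheta_1(\pi/3,p)\vartheta_2(t,p)\bigr)^2\delta_{\epsilon\mu}$ when $\epsilon=\epsilon'$. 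Since $|v_+\rangle,|v_-\rangle$ span $V$, this establishes the lemma for all $t\neq\pi/2$.

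It remains to cover the value $t=\pi/2$. I would do this by continuity: $\q$, the covectors $\langle w_\epsilon|$ and the numbers $\Lambda_\epsilon$ all depend analytically on $t$ near $\pi/2$, so both sides of the identity are analytic $V^\ast$-valued functions of $t$ that agree on a punctured neighbourhood of $\pi/2$ and hence at $t=\pi/2$ as well. Alternatively, one may verify this case on the spot by evaluating both sides on the basis $|v_+\rangle,|\dot v_+\rangle$ of \cref{lem:Basis2}, feeding in \cref{lem:QOnVBar} and the $t$-derivative of \eqref{eqn:ScalarProdWV}.

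I do not expect any genuine obstacle: once \cref{lem:QOnBasis} is at hand the argument is purely linear-algebraic. The only laborious alternative would be a self-contained verification that bypasses that lemma---substituting the theta-function parameterisation \eqref{eqn:JacobiThetaParam} into the explicit form \eqref{eqn:XYZLocalSupercharge} of $\q$, letting $\langle w_\epsilon|\otimes\langle w_{\epsilon'}|$ act in the basis $|{\uparrow}\rangle,|{\downarrow}\rangle$, and collapsing the resulting combinations of $\vartheta_{1}$ and $\vartheta_4$ at arguments $t\pm\pi/3$ by means of the standard addition and three-term identities for Jacobi theta functions. This is exactly the computation underlying \cref{lem:QOnBasis}, so reusing that lemma is the economical route.
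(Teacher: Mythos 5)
Your argument is correct and is a valid proof, but it takes a different route from the paper's. Where the paper proves \cref{lem:DualBasis} by a direct computation---expanding $\left(\langle w_\epsilon|\otimes\langle w_{\epsilon'}|\right)\q$ in the canonical basis $|{\uparrow}\rangle,|{\downarrow}\rangle$ and collapsing the products of $\vartheta_1,\vartheta_4$ at the shifted arguments $t\pm\pi/3$ with theta-function addition identities, exactly as for \cref{lem:QOnBasis}---you instead \emph{derive} it from \cref{lem:QOnBasis} together with the pairing \eqref{eqn:ScalarProdWV}, by observing that both sides are covectors on the two-dimensional space $V$ and evaluating them on the basis $|v_\pm\rangle$. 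This is economical: all the theta-function algebra has already been absorbed into \cref{lem:QOnBasis} and \eqref{eqn:ScalarProdWV}, and what remains is two lines of linear algebra. Your treatment of the exceptional value $t=\pi/2$ (where $|v_+\rangle,|v_-\rangle$ degenerate) by analytic continuation in $t$ is also fine, since every ingredient is built from $\vartheta_1,\vartheta_2,\vartheta_4$ at fixed nome $p$ and hence is holomorphic in $t$; the alternative check on the basis $|v_+\rangle,|\dot v_+\rangle$ of \cref{lem:Basis2}, fed by \cref{lem:QOnVBar} and the $t$-derivative of \eqref{eqn:ScalarProdWV}, would work equally well. The trade-off between the two approaches is the usual one: the paper's direct computation is self-contained and gives an independent check of \cref{lem:QOnBasis}, whereas your derivation makes the logical dependence on \cref{lem:QOnBasis} explicit and avoids repeating theta-function manipulations.
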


\subsection{Supersymmetry singlets: (co)homology}

\label{sec:Cohomology}

It follows from \eqref{eqn:HFromQ} that the spectrum of a supersymmetric Hamiltonian $H$ is non-negative. If it contains the eigenvalue $E=0$, then the corresponding eigenstates are the solutions of the equations
\begin{subequations}
\label{eqn:E0StateEqs}
\begin{equation}
  \Q|\Psi\rangle = 0,\quad \text{for } L\geqslant 1,
\end{equation}
and
\begin{equation}
\Q^\dagger|\Psi\rangle = 0, \quad \text{for }L\geqslant 2.
\end{equation}
\end{subequations}
These eigenstates are called \textit{supersymmetry singlets} or zero-energy states. The aim of this and the following section is to investigate the absence or existence of supersymmetry singlets for the Hamiltonian $H$ of \cref{prop:HSUSYHXYZ} as a function of $(p,t)\in \bar{\mathbb D}$. To this end, we exploit the relation between supersymmetry and (co)homology \cite{witten:82}.

\paragraph{(Co)homology.} Let $\Q$ denote a generic supercharge and $\Q^\dagger$ its adjoint. For $L=1$, we define $\mathcal H^1 = \ker \{\Q:V\to V^{2}\}$ and $\mathcal H_1 = V/\text{im}\{\Q^\dagger:V^{2}\to V\}$. For each $L\geqslant 2$, we define the quotient spaces
\begin{align}
  \mathcal H^L = \frac{\ker \{\Q:V^L\to V^{L+1}\}}{\text{im}\{\Q:V^{L-1}\to V^{L}\}} \quad\text{and}\quad \mathcal H_L = \frac{\ker \{\Q^\dagger:V^L\to V^{L-1}\}}{\text{im}\{\Q^\dagger:V^{L+1}\to V^{L}\}}.
\end{align}
The direct sums
\begin{equation}
  \mathcal H^\bullet = \bigoplus_{L=1}^\infty \mathcal H^L \quad\text{and}\quad \mathcal H_\bullet = \bigoplus_{L=1}^\infty \mathcal H_L 
\end{equation}
are often referred to as the cohomology of the supercharge $\Q$ and the homology of the adjoint supercharge $\Q^\dagger$, respectively \cite{masson:08}. The space of the supersymmetry singlets of $H$ is isomorphic to both $\mathcal H^L$ and $\mathcal H_L$ for each $L\geqslant 1$. Hence, the computation of the (co)homology allows us to investigate the absence or existence of supersymmetry singlets. 

Let us briefly recall some terminology and notation \cite{hagendorf:17,hagendorf:18}. For $L\geqslant 2$, the elements of $\mathcal H^L$ are equivalence classes of states that are annihilated by the supercharge $\Q$. These states are called representatives. We write $[|\phi\rangle]\in \mathcal H^L$ for the equivalence class of a representative $|\phi\rangle\in  \text{ker}\{\Q:V^L\to V^{L+1}\}$. For $L\geqslant 1$, the elements of $\mathcal H_L$ are equivalence classes, too. If $L=1$, then they are represented by states $|\phi'\rangle \in V$; if $L\geqslant 2$, they are represented by states $|\phi'\rangle \in  \text{ker}\{\Q^\dagger:V^L\to V^{L-1}\}$. As before, we denote the equivalence class of such a representative by $[|\phi'\rangle]$.

\paragraph{Auxiliary results.} To compute the (co)homology for the supercharge of the XYZ Hamiltonian, we establish three auxiliary results.

\begin{lemma}
 \label{lem:Lem1}
 Let $|u_+\rangle,\, |u_-\rangle$ be a basis of $V$ and $\q$ a local supercharge defined by
 \begin{equation}
   \q|u_+\rangle = |u_+\rangle \otimes |u_+\rangle, \quad \q|u_-\rangle = |u_-\rangle \otimes |u_-\rangle,
 \end{equation}
  then $\mathcal H^L = 0$ for each $L\geqslant 1$.
 \begin{proof}
   For $L=1$ the statement $\mathcal H^1=0$ is immediate since $|u_+\rangle$ and $|u_-\rangle$ form a basis of $V$.
    
For $L\geqslant 2$, let $|\psi\rangle\in \ker \{\Q:V^L\to V^{L+1}\}$. We write $|\psi\rangle =  |u_+\rangle\otimes |\psi_+\rangle  + |u_-\rangle\otimes|\psi_-\rangle$ with unique states $|\psi_+\rangle,\,|\psi_-\rangle \in V^{L-1}$. It follows from $\Q\ket{\psi} =0$ that
   \begin{equation}
     |u_+\rangle\otimes(|u_+\rangle \otimes |\psi_+\rangle+\Q|\psi_+\rangle)+|u_-\rangle\otimes(|u_-\rangle \otimes |\psi_-\rangle+\Q|\psi_-\rangle)=0.
   \end{equation}
   Since $|u_+\rangle$ and $|u_-\rangle$ form a basis of $V$, we find $|u_\pm\rangle \otimes |\psi_\pm\rangle = -\Q|\psi_\pm\rangle$. Therefore, we have 
   \begin{equation}
    |\psi\rangle = -\Q(|\psi_+\rangle + |\psi_-\rangle) \in \text{im}\{\Q:V^{L-1} \to V^L\}.
   \end{equation}
   This implies that $\mathcal H^L = 0$.
    \end{proof}
\end{lemma}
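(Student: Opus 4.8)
The plan is to prove that $\ker\{\Q\colon V^{L}\to V^{L+1}\}=\text{im}\{\Q\colon V^{L-1}\to V^{L}\}$ for each $L\geqslant 2$, and separately to treat the case $L=1$. Since $\q|u_\pm\rangle=|u_\pm\rangle\otimes|u_\pm\rangle$, one checks in one line that $(\q\otimes 1-1\otimes\q)\q|u_\pm\rangle=0$, so $\q$ is coassociative and hence $\Q^2=0$; this already gives $\text{im}\subseteq\ker$, so only the reverse inclusion needs an argument. For $L=1$ nothing is required: $\Q=-\q$ on $V$ sends the basis $|u_+\rangle,|u_-\rangle$ to the linearly independent vectors $|u_+\rangle\otimes|u_+\rangle$ and $|u_-\rangle\otimes|u_-\rangle$, hence is injective, so $\mathcal H^{1}=\ker\Q=0$.

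For $L\geqslant 2$ I would take a cocycle $|\psi\rangle\in\ker\{\Q\colon V^{L}\to V^{L+1}\}$ and expand it along the first tensor factor in the given basis, $|\psi\rangle=|u_+\rangle\otimes|\psi_+\rangle+|u_-\rangle\otimes|\psi_-\rangle$ with unique $|\psi_\pm\rangle\in V^{L-1}$. The key computation is to split $\Q=\sum_{j=1}^{L}(-1)^j\q_j$ into the $j=1$ term and the rest: the former acts as $-\q$ on the first site and produces $-|u_\epsilon\rangle\otimes|u_\epsilon\rangle\otimes|\psi_\epsilon\rangle$, while the terms with $j\geqslant 2$ fix the first factor $|u_\epsilon\rangle$ and, after the shift $j\mapsto j-1$ that turns $(-1)^j$ into $-(-1)^{j-1}$, reassemble into $-|u_\epsilon\rangle\otimes\Q|\psi_\epsilon\rangle$ with $\Q$ now the supercharge on $V^{L-1}$. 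Hence $\Q|\psi\rangle=-\sum_{\epsilon=\pm}|u_\epsilon\rangle\otimes\bigl(|u_\epsilon\rangle\otimes|\psi_\epsilon\rangle+\Q|\psi_\epsilon\rangle\bigr)$, and the linear independence of $|u_+\rangle,|u_-\rangle$ forces $|u_\epsilon\rangle\otimes|\psi_\epsilon\rangle=-\Q|\psi_\epsilon\rangle$ for each $\epsilon=\pm$. Summing over $\epsilon$ yields $|\psi\rangle=-\Q(|\psi_+\rangle+|\psi_-\rangle)\in\text{im}\{\Q\colon V^{L-1}\to V^{L}\}$, and therefore $\mathcal H^{L}=0$.

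The only genuinely error-prone part of this argument is the sign bookkeeping when detaching the first site from $\Q$ on $V^{L}$ and recognising the remainder as $\Q$ on $V^{L-1}$: the alternating signs in \eqref{eqn:DefQ} must be tracked with care. Everything else is purely formal and short. A more structural alternative would be to regard $(V^{\bullet},\Q)$ as a concatenation product of the two acyclic one-site complexes attached to $|u_+\rangle$ and $|u_-\rangle$ and to invoke a K\"unneth-type vanishing, but the direct computation sketched above is self-contained and I would prefer it here.
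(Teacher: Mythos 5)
Your proof is correct and follows essentially the same route as the paper: decompose $|\psi\rangle$ along the first tensor factor in the $|u_\pm\rangle$ basis, detach the $j=1$ term of $\Q$, reindex the remaining sum to recognise $\Q$ on $V^{L-1}$, and conclude $|\psi\rangle=-\Q(|\psi_+\rangle+|\psi_-\rangle)$. The only difference is that you spell out the $L=1$ case (injectivity of $\Q=-\q$ on $V$) slightly more explicitly than the paper's one-line remark, and the sign bookkeeping you flag as delicate is carried out correctly.
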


\begin{lemma}
  \label{lem:Lem2}
  Let $|u_+\rangle,\, |u_-\rangle$ be a basis of $V$ and $\q$ a local supercharge defined by
  \begin{equation}
    \q|u_+\rangle = |u_+\rangle \otimes |u_+\rangle, \quad \q|u_-\rangle = |u_+\rangle\otimes |u_-\rangle + |u_-\rangle \otimes |u_+\rangle,
  \end{equation}
  then $\mathcal H^L = 0$ for each $L\geqslant 1$.
  \begin{proof}
    For $L=1$, $\mathcal H^1=0$ follows immediately from the fact that $|u_+\rangle,\, |u_-\rangle$ is a basis of $V$. 
    
   For $L\geqslant 2$, let $|\psi\rangle\in \ker \{\Q:V^L\to V^{L+1}\}$. Again, we write $|\psi\rangle =  |u_+\rangle\otimes |\psi_+\rangle  + |u_-\rangle\otimes|\psi_-\rangle$ with unique states $|\psi_+\rangle,|\psi_-\rangle \in V^{L-1}$. The condition $\Q\ket\psi =0$ yields
  \begin{equation}
|u_+\rangle\otimes(|\psi\rangle +\Q |\psi_+\rangle)+ |u_-\rangle \otimes(|u_+\rangle \otimes |\psi_-\rangle + \Q|\psi_-\rangle) = 0.
  \end{equation}
  Since $|u_+\rangle$ and $|u_-\rangle$ span $V$, we obtain
  \begin{equation}
    |\psi\rangle = -\Q|\psi_+\rangle \in \text{im}\{\Q:V^{L-1} \to V^L\}.
   \end{equation}
   Hence, $\mathcal H^L = 0$.
  \end{proof}
\end{lemma}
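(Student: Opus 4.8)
The plan is to establish $\mathcal H^L=0$ for every $L\geqslant 1$ directly, i.e.\ to show that every $\Q$-closed state in $V^L$ is $\Q$-exact. The engine of the argument is the same as in \cref{lem:Lem1}: decompose an arbitrary state along its first tensor factor with respect to the basis $|u_+\rangle,|u_-\rangle$ of $V$, and exploit the very rigid action of $\q$ on these two vectors. For the base case $L=1$ one has $\Q=-\q$ on $V$, and the images $\q|u_+\rangle=|u_+\rangle\otimes|u_+\rangle$ and $\q|u_-\rangle=|u_+\rangle\otimes|u_-\rangle+|u_-\rangle\otimes|u_+\rangle$ are visibly linearly independent in $V\otimes V$; hence $\q$ is injective and $\mathcal H^1=\ker\{\Q:V\to V^2\}=0$.

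For $L\geqslant 2$, take $|\psi\rangle\in\ker\{\Q:V^L\to V^{L+1}\}$ and write $|\psi\rangle=|u_+\rangle\otimes|\psi_+\rangle+|u_-\rangle\otimes|\psi_-\rangle$ with the unique states $|\psi_\pm\rangle\in V^{L-1}$ furnished by the basis decomposition. I would then expand $\Q=\sum_{j=1}^{L}(-1)^j\q_j$ by isolating the $j=1$ term: the operator $\q_1$ replaces the leading $|u_\pm\rangle$ by $\q|u_\pm\rangle$, while the remaining terms act only on the last $L-1$ sites and reassemble, after an index shift and the attendant sign, into $|u_\pm\rangle\otimes\Q|\psi_\pm\rangle$ on $V^{L-1}$. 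Now collect the result according to whether the leading tensor factor is $|u_+\rangle$ or $|u_-\rangle$, using that $\q|u_+\rangle$ has $|u_+\rangle$ in its left slot and that the first summand of $\q|u_-\rangle$ also carries $|u_+\rangle$ on the left, so that $|\psi_-\rangle$ is fed back into the $|u_+\rangle$-channel. The vanishing of $\Q|\psi\rangle$ then forces, on the $|u_+\rangle$-component, the relation $|\psi\rangle+\Q|\psi_+\rangle=0$, whence $|\psi\rangle=-\Q|\psi_+\rangle\in\text{im}\{\Q:V^{L-1}\to V^L\}$; the $|u_-\rangle$-component yields the further relation $|u_+\rangle\otimes|\psi_-\rangle+\Q|\psi_-\rangle=0$, which is automatically consistent and not needed. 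This gives $\mathcal H^L=0$.

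The one place that demands care is the bookkeeping in the middle step: one must check that peeling off the first site converts the ``tail'' of $\Q$ on $V^L$ into $\Q$ on $V^{L-1}$ with exactly the right global sign, which is precisely where the alternating signs $(-1)^j$ in the definition \eqref{eqn:DefQ} of the supercharge are used, and one must keep the two basis channels $|u_+\rangle$ and $|u_-\rangle$ of the leading factor cleanly separated. Everything else — uniqueness of the decomposition (immediate since $|u_\pm\rangle$ is a basis of $V$) and the final identification of $|\psi\rangle$ as $\Q$-exact — is routine, mirroring the proof of \cref{lem:Lem1}.
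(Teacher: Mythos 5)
Your proposal is correct and follows essentially the same approach as the paper: the same decomposition $|\psi\rangle=|u_+\rangle\otimes|\psi_+\rangle+|u_-\rangle\otimes|\psi_-\rangle$, the same index shift for the $j\geqslant 2$ terms of $\Q$, and the same extraction of the $|u_+\rangle$-channel to conclude $|\psi\rangle=-\Q|\psi_+\rangle$. The only difference is your slightly more explicit justification of the $L=1$ base case (noting that $\q|u_+\rangle$ and $\q|u_-\rangle$ are linearly independent), which the paper leaves as immediate.
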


\begin{lemma}
\label{lem:Lem3}
  Let $|u_+\rangle,\, |u_-\rangle$ be a basis of $V$ and $\q$ a local supercharge defined by
  \begin{equation}
    \q|u_+\rangle=0, \quad \q|u_-\rangle = |u_-\rangle \otimes |u_-\rangle,
  \end{equation}
  then $\mathcal H^L = \mathbb C[|u_+\rangle^{\otimes L}]$ for each $L\geqslant 1$.
  \begin{proof}
  For each $L\geqslant 1$, we define a mapping $S:V^{L}\to V^{L+1}$ by
  \begin{equation}
    S|\psi\rangle =  |u_+\rangle\otimes |\psi\rangle .
  \end{equation}
   It satisfies the commutation relation $S\Q =-\Q S$ on $V^L$. Hence, the mapping $S^\sharp:\mathcal H^{L} \to \mathcal H^{L+1}$, given by
  \begin{equation}
    S^\sharp[|\psi\rangle] = [|u_+\rangle\otimes|\psi\rangle ], 
  \end{equation}
  is well defined \cite{masson:08}. We prove that $S^\sharp$ is a bijection.
    
  First, we show that $S^\sharp$ is injective. This is straightforward for $L=1$. For $L\geqslant 2$, we show that the kernel of $S^\sharp$ is zero in the cohomology. This is equivalent to the statement that any state $|\psi\rangle \in \text{ker}\{\Q:V^L\to V^{L+1}\}$ with
   \begin{equation}
      S|\psi\rangle = \Q|\phi\rangle,
    \end{equation}
   for some $|\phi\rangle \in V^L$, belongs to $\text{im}\,\{\Q:V^{L-1}\to V^L\}$.
   To see this, we write $|\phi\rangle = |u_+\rangle\otimes |\phi_+\rangle + |u_-\rangle\otimes |\phi_-\rangle$ with unique states $|\phi_+\rangle,\,|\phi_-\rangle\in V^{L-1}$. It follows that
   \begin{equation}
   \ket{u_+}\otimes  |\psi\rangle = -\ket{u_+}\otimes \Q|\phi_+\rangle -\ket{u_-}\otimes \left( \ket{u_-}\otimes \ket{\phi_-} +\Q|\phi_-\rangle \right).
   \end{equation}
   Since $|u_+\rangle,\, |u_-\rangle$ form a basis of $V$, we infer $|\psi\rangle =- \Q|\phi_+\rangle$, which proves the injectivity.
   
   Second, we show that $S^\sharp$ is surjective. To this end, we fix $L\geqslant 2$ and  consider a representative 
   $|\psi\rangle \in V^{L}$ of an element of $\mathcal H^{L}$. As before, we write 
   $|\psi\rangle = \ket{u_+}\otimes |\psi_+\rangle +\ket{u_-}\otimes |\psi_-\rangle$ with unique states $|\psi_+\rangle,|\psi_-\rangle \in V^{L-1}$. The equation $\Q|\psi\rangle = 0$ implies
   \begin{equation}
     \Q|\psi_+\rangle = 0, \quad \Q |\psi_-\rangle = -\ket{u_-}\otimes \ket{\psi_-},
   \end{equation}
   and therefore
   \begin{equation}
    |\psi\rangle = |u_+\rangle \otimes |\psi_+\rangle - \Q |\psi_-\rangle.
   \end{equation}
   Hence, $[|\psi\rangle] = [|u_+\rangle\otimes |\psi_+\rangle ] = S^\sharp[|\psi_+\rangle]$ with $|\psi_+\rangle\in \text{ker}\{\Q:V^{L-1}\to V^{L}\}$. This proves the surjectivity.
   
   Since $S^\sharp$ is a bijection, it follows that $\mathcal H^{L} = (S^\sharp)^{L-1}(\mathcal H^{1})$ for each $L\geqslant 2$. One checks that 
$\mathcal H^1 = \mathbb C[|u_+\rangle]$. Hence, $\mathcal H^{L} =\mathbb C[|u_+\rangle^{\otimes L}]$.
  \end{proof}
\end{lemma}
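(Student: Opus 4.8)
The plan is to reduce the computation of the cohomology in arbitrary length $L$ to the computation in length $L=1$, by exhibiting an explicit chain-homotopy-type structure. The key observation is that the local supercharge $\q$ treats the two basis vectors very differently: $|u_+\rangle$ is annihilated, while $|u_-\rangle$ behaves like the ``generating'' vector of a one-component free structure. Concretely, first I would introduce the insertion map $S:V^L\to V^{L+1}$, $S|\psi\rangle = |u_+\rangle\otimes|\psi\rangle$, and check directly from $\Q=\sum_{j=1}^L(-1)^j\q_j$ together with $\q|u_+\rangle=0$ that $S\Q=-\Q S$ on $V^L$. The sign comes from the reindexing $\q_j\mapsto \q_{j+1}$ under prepending one site. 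This anticommutation relation guarantees that $S$ descends to a well-defined map $S^\sharp:\mathcal H^L\to\mathcal H^{L+1}$ on cohomology classes, $S^\sharp[|\psi\rangle]=[|u_+\rangle\otimes|\psi\rangle]$.

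The heart of the argument is then to prove that $S^\sharp$ is a bijection. For this I would split any state $|\psi\rangle\in V^L$ along the first tensor factor as $|\psi\rangle = |u_+\rangle\otimes|\psi_+\rangle + |u_-\rangle\otimes|\psi_-\rangle$ with $|\psi_\pm\rangle\in V^{L-1}$ uniquely determined (using that $|u_+\rangle,|u_-\rangle$ is a basis of $V$). Applying $\Q$ and using $\q_1|u_+\rangle=0$, $\q_1|u_-\rangle=|u_-\rangle\otimes|u_-\rangle$, the condition $\Q|\psi\rangle=0$ unpacks, after matching the coefficients of $|u_+\rangle$ and $|u_-\rangle$ in the first slot, into the pair of conditions $\Q|\psi_+\rangle=0$ and $\Q|\psi_-\rangle = -|u_-\rangle\otimes|\psi_-\rangle$. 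The second one is exactly the statement that $|u_-\rangle\otimes|\psi_-\rangle = -\Q|\psi_-\rangle$, so that $|u_-\rangle\otimes|\psi_-\rangle$ is a coboundary; hence $[|\psi\rangle]=[|u_+\rangle\otimes|\psi_+\rangle]=S^\sharp[|\psi_+\rangle]$, giving surjectivity. For injectivity I would suppose $S|\psi\rangle=\Q|\phi\rangle$ for a cocycle $|\psi\rangle$, expand $|\phi\rangle$ the same way, and compare the $|u_-\rangle$-component in the leading slot to conclude that the $|u_-\rangle$-part of $|\phi\rangle$ drops out and $|\psi\rangle=-\Q|\phi_+\rangle$ is a coboundary. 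With $S^\sharp$ a bijection, iterating gives $\mathcal H^L=(S^\sharp)^{L-1}(\mathcal H^1)$, and the final step is the base case: $\Q|\psi\rangle=0$ for $|\psi\rangle\in V$ forces (writing $|\psi\rangle=c_+|u_+\rangle+c_-|u_-\rangle$ and using $\q|u_-\rangle=|u_-\rangle\otimes|u_-\rangle\ne0$) that $c_-=0$, so $\mathcal H^1=\mathbb C[|u_+\rangle]$, whence $\mathcal H^L=\mathbb C[|u_+\rangle^{\otimes L}]$.

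I expect the main obstacle to be purely bookkeeping: getting the signs right in $S\Q=-\Q S$ and being careful that the three length regimes ($L=1$, $L=2$, $L\geqslant 3$) in the definition of $\Q$ and $\Q^\dagger$ do not create edge cases in the induction. In particular, the case $L=2\to L=1$ of $S^\sharp$ needs the $L=1$ conventions $\mathcal H^1=\ker\{\Q:V\to V^2\}$, and one must make sure that the splitting argument still works when $|\psi_\pm\rangle$ live in $V^{0}$-type degenerate situations, i.e. when $L-1=1$. None of this is deep, but it is where an otherwise clean argument could slip. Everything else is linear algebra over the fixed basis $\{|u_+\rangle,|u_-\rangle\}$ and follows the template already used in \cite{hagendorf:17,hagendorf:18}.
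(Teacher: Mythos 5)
Your proposal is correct and follows essentially the same route as the paper's proof: the insertion map $S|\psi\rangle = |u_+\rangle\otimes|\psi\rangle$, the anticommutation $S\Q=-\Q S$ making $S^\sharp$ well-defined on cohomology, the first-slot splitting $|\psi\rangle = |u_+\rangle\otimes|\psi_+\rangle + |u_-\rangle\otimes|\psi_-\rangle$ to prove both injectivity and surjectivity of $S^\sharp$, and the iteration down to $\mathcal H^1=\mathbb C[|u_+\rangle]$. The edge case you flag is not actually an issue: for $L\geqslant 2$ the components $|\psi_\pm\rangle$ always live in $V^{L-1}$ with $L-1\geqslant 1$, so one never encounters $V^0$, and the $L=1$ base case is handled directly by the convention $\mathcal H^1=\ker\{\Q:V\to V^2\}$, exactly as you note.
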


\paragraph{Results for the XYZ supercharge.} In the remainder of this section, $\Q$ denotes the supercharge constructed from the local supercharge \eqref{eqn:XYZLocalSupercharge} for the XYZ Hamiltonian. We apply the auxiliary results to this case.

\begin{proposition}
  \label{prop:Cohom}
  Let $L\geqslant 1$, and $(p,t)\in \bar{ \mathbb D}$.
  We have
  \begin{equation}
    \mathcal H^L =
    \begin{cases}
      0, & \text{if } t\neq \pi/6,\\
      \mathbb C[|v_+\rangle^{\otimes L}], &\text{if } t=\pi/6.
    \end{cases}
  \end{equation}
  \begin{proof}
    We distinguish three cases.
    
    First, we consider $t \neq \pi/2,\pi/6$. In this case, it follows from \cref{lem:Basis1} that $|v_+\rangle$ and $|v_-\rangle$ form a basis of $V$. Furthermore, the constants $\Lambda_\pm$, defined in \eqref{eqn:DefLambdaSigma}, are non-vanishing. Hence, the states
    \begin{equation}
      |u_+\rangle = \Lambda_+ |v_+\rangle, \quad |u_-\rangle = \Lambda_- |v_-\rangle
\end{equation}
form a basis of $V$. We find from \cref{lem:QOnBasis} that $\q|u_+\rangle= |u_+\rangle \otimes |u_+\rangle$, $\q|u_-\rangle= |u_-\rangle \otimes |u_-\rangle$. Hence, we apply \cref{lem:Lem1} and conclude that $\mathcal H^L=0$ for each $L\geqslant 1$.
    
 Second, we suppose that $t = \pi/2$. It follows from \cref{lem:Basis2} that the states $|v_+\rangle$ and $|\dot v_+\rangle$, defined in \eqref{eqn:DefVPlusBar}, form a basis of $V$. We define the states
    \begin{equation}
       |u_+\rangle = \Lambda_+ |v_+\rangle, \quad
       |u_-\rangle = \dot \Lambda_+|v_+\rangle+\Lambda_+ |\dot v_+\rangle.
       \end{equation}
   These states form a basis of $V$ because $\Lambda_+,\dot \Lambda_+\neq 0$ for $t=\pi/2$. Moreover, we have $\q|u_+\rangle = |u_+\rangle \otimes |u_+\rangle, \, \q|u_-\rangle = |u_+\rangle \otimes |u_-\rangle + |u_-\rangle \otimes |u_+\rangle$, thanks to \cref{lem:QOnVBar}. Therefore, it follows from \cref{lem:Lem2} that $\mathcal H^L =0$ for each $L\geqslant 1$.
    
Third, we analyse the case where $t=\pi/6$. In this case, we have $\Lambda_+=0$ and $\Lambda_-\neq 0$. The states
\begin{equation}
  |u_+\rangle = |v_+\rangle, \quad |u_-\rangle = \Lambda_-|v_-\rangle 
\end{equation}  
 constitute a basis of $V$. They obey the relations $\q |u_+\rangle = 0$ and $\q|u_-\rangle = |u_-\rangle \otimes |u_-\rangle$. According to \cref{lem:Lem3}, we have
    \begin{equation}
      \mathcal H^L = \mathbb C[|u_+\rangle^{\otimes L}] =\mathbb C[|v_+\rangle^{\otimes L}],
    \end{equation}
    for each $L\geqslant 1$.
     \end{proof}
\end{proposition}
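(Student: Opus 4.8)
The plan is to reduce the computation of $\mathcal H^L$ to the three model situations isolated in \cref{lem:Lem1,lem:Lem2,lem:Lem3}. By \cref{lem:QOnBasis} the local supercharge is ``diagonal'' on the theta-function states, $\q|v_\epsilon\rangle=\Lambda_\epsilon\,|v_\epsilon\rangle\otimes|v_\epsilon\rangle$ for $\epsilon=\pm$, so the cohomology depends only on two pieces of data: whether $|v_+\rangle$ and $|v_-\rangle$ span $V$, and whether the scalars $\Lambda_\pm$ of \eqref{eqn:DefLambdaSigma} vanish. Accordingly, the first step is to locate, as $(p,t)$ runs over $\bar{\mathbb D}$, the points at which these degeneracies occur.

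For the spanning property I would simply invoke \cref{lem:Basis1,lem:Basis2}: $\{|v_+\rangle,|v_-\rangle\}$ is a basis of $V$ for every $t\neq\pi/2$, whereas at $t=\pi/2$ one has $|v_-\rangle=|v_+\rangle$ and $\{|v_+\rangle,|\dot v_+\rangle\}$ is a basis instead. For the vanishing of $\Lambda_\pm$ I would read off from \eqref{eqn:DefLambdaSigma} that, for fixed $(p,t)\in\bar{\mathbb D}$, one has $\Lambda_\epsilon=0$ precisely when $\vartheta_2(t+\epsilon\pi/3,p)=0$; since the only zero of $u\mapsto\vartheta_2(u,p)$ with $\text{Re}\,u\in[\pi/3,5\pi/6]$ and $|\text{Im}\,u|\leqslant s/2$ is $u=\pi/2$, this forces $\Lambda_+=0\iff t=\pi/6$ and $\Lambda_-\neq0$ throughout $\bar{\mathbb D}$ (the value $t=5\pi/6$ that would make $\Lambda_-$ vanish lies outside $\mathbb R_p$). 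I would also check that the prefactor in \eqref{eqn:DefLambdaSigma} is finite and non-zero on $\bar{\mathbb D}$, which again follows from the location of the zeros of the theta functions involved, with nomes $p$ and $p^2$.

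With this information the argument splits into three cases. If $t\neq\pi/2,\pi/6$, then $\Lambda_\pm\neq0$; setting $|u_\pm\rangle=\Lambda_\pm|v_\pm\rangle$ gives, by \cref{lem:Basis1,lem:QOnBasis}, a basis of $V$ with $\q|u_\pm\rangle=|u_\pm\rangle\otimes|u_\pm\rangle$, so \cref{lem:Lem1} yields $\mathcal H^L=0$. If $t=\pi/2$, I would take $|u_+\rangle=\Lambda_+|v_+\rangle$ and $|u_-\rangle=\dot\Lambda_+|v_+\rangle+\Lambda_+|\dot v_+\rangle$, which form a basis by \cref{lem:Basis2} because $\Lambda_+\neq0$; a short computation using \cref{lem:QOnBasis,lem:QOnVBar} then shows $\q|u_+\rangle=|u_+\rangle\otimes|u_+\rangle$ and $\q|u_-\rangle=|u_+\rangle\otimes|u_-\rangle+|u_-\rangle\otimes|u_+\rangle$, so \cref{lem:Lem2} again gives $\mathcal H^L=0$. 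Finally, if $t=\pi/6$, then $\Lambda_+=0$ and $\Lambda_-\neq0$; the pair $|u_+\rangle=|v_+\rangle$, $|u_-\rangle=\Lambda_-|v_-\rangle$ is a basis of $V$ on which $\q$ acts as $\q|u_+\rangle=0$, $\q|u_-\rangle=|u_-\rangle\otimes|u_-\rangle$, so \cref{lem:Lem3} gives $\mathcal H^L=\mathbb C[|u_+\rangle^{\otimes L}]=\mathbb C[|v_+\rangle^{\otimes L}]$. Combining the three cases proves the proposition.

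The step I expect to be most delicate is the second one: one must make sure that within the rectangle $\mathbb R_p$ the only coincidence between $|v_+\rangle$ and $|v_-\rangle$, and the only zeros of $\Lambda_\pm$, are those listed above, and that no spurious pole of $\Lambda_\pm$ intervenes. This requires keeping careful track of the zero sets of the Jacobi theta functions under the shifts $t\mapsto t\pm\pi/3$ and the two nomes $p$ and $p^2$. Once the degeneracy loci are correctly pinned down, the reduction to \cref{lem:Lem1,lem:Lem2,lem:Lem3} by a diagonal rescaling of $|v_\pm\rangle$ is routine.
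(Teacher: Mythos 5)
Your proposal is correct and follows essentially the same case split and reduction to \cref{lem:Lem1,lem:Lem2,lem:Lem3} via the rescaled basis $|u_\pm\rangle$ that the paper itself uses. The only difference is that you spell out the theta-function zero-counting that justifies the claims $\Lambda_+ = 0 \iff t = \pi/6$ and $\Lambda_- \neq 0$ on $\bar{\mathbb D}$, which the paper simply asserts; that verification is accurate and a welcome addition.
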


\begin{proposition}
\label{prop:Hom}
Let $L\geqslant 1$ and $(p,t) \in \mathbb{\bar D}$. We have
  \begin{equation}
    \mathcal H_L =
    \begin{cases}
      0, &\text{if } t\neq \pi/6,\\
      \mathbb C[|w_+\rangle^{\otimes L}], &\text{if } t=\pi/6.
    \end{cases}
  \end{equation}
\begin{proof}
  First, we consider $t\neq \pi/6$. In this case, $\mathcal H_L=0$ for each $L\geqslant 1$ follows immediately from \cref{prop:Cohom} and the fact that $\mathcal H^L$ and $\mathcal H_L$ are isomorphic.
  
  Second, we consider $t = \pi/6$ and compute $\mathcal H_L$. To this end, we note that \cref{lem:DualBasis} implies
  \begin{equation}
    \label{eqn:QDaggerPi6}   
 \Q^\dagger\left(|w_\epsilon\rangle \otimes |w_{\epsilon'}\rangle\right) = -\vartheta_1(\pi/3,p)^2\Lambda_\epsilon\delta_{\epsilon\epsilon'}|w_\epsilon\rangle,
  \end{equation}
  for each $\epsilon,\epsilon'=\pm$.
 Furthermore, we have $\Lambda_+=0$ and $\Lambda_- \neq 0$. For $L=1$, we find
  \begin{equation}
    \mathcal H_1 = V/\text{im}\{\Q^{\dagger}:V^2\to V\}=V/\mathbb C|w_-\rangle = \mathbb C [|w_+\rangle].
  \end{equation}
  For $L\geqslant 2$, \cref{prop:Cohom} implies that $\mathcal H_L$ is one-dimensional. Hence, $\mathcal H_L = \mathbb C[|\omega\rangle]$ for some $|\omega\rangle\in V^L$ that is in the kernel of $\Q^\dagger$, but not in its image. We claim that $|\omega\rangle = |w_+\rangle^{\otimes L}$ is a valid choice. Indeed, on the one hand \eqref{eqn:QDaggerPi6} implies $\Q^\dagger|\omega\rangle = 0$. On the other hand, we use \eqref{eqn:ScalarProdWV} to compute the scalar product
  \begin{equation}
    \langle \omega|\left(|v_+\rangle^{\otimes L}\right) = \langle w_+|v_+\rangle^L  =\vartheta_1(\pi/3,p)^{2L}, 
  \end{equation}
  which is non-zero. If $|\omega\rangle = \Q^\dagger |\phi\rangle$ for some $|\phi\rangle \in V^{L+1}$ then 
$\langle \omega|\left(|v_+\rangle^{\otimes L}\right)= \langle \phi|\Q\left(|v_+\rangle^{\otimes L}\right) = 0$. This is a contradiction and therefore proves the claim.
  \end{proof}
\end{proposition}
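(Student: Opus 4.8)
The plan is to deduce $\mathcal H_L$ from the cohomology computation of \cref{prop:Cohom} together with the general principle that, for any supercharge, the homology $\mathcal H_L$ and the cohomology $\mathcal H^L$ are isomorphic (both being isomorphic to the space of supersymmetry singlets of $H$). This disposes of the case $t\neq\pi/6$ at once: there $\mathcal H^L=0$, hence $\mathcal H_L=0$ as well.

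For $t=\pi/6$ the same isomorphism shows that $\mathcal H_L$ is one-dimensional, so the task reduces to exhibiting a representative. First I would transpose \cref{lem:DualBasis}: expanding $\Q=\sum_{j}(-1)^j\q_j$ and taking adjoints of the identity $(\langle w_\epsilon|\otimes\langle w_{\epsilon'}|)\q=\vartheta_1(\pi/3,p)\vartheta_2(t,p)\Lambda_\epsilon\delta_{\epsilon\epsilon'}\langle w_\epsilon|$ yields the action of $\Q^\dagger$ on the product states, namely $\Q^\dagger(|w_\epsilon\rangle\otimes|w_{\epsilon'}\rangle)=-\vartheta_1(\pi/3,p)^2\Lambda_\epsilon\delta_{\epsilon\epsilon'}|w_\epsilon\rangle$. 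At $t=\pi/6$ one has $\Lambda_+=0$ and $\Lambda_-\neq0$, so $\text{im}\{\Q^\dagger:V^2\to V\}=\mathbb C|w_-\rangle$; since $|w_+\rangle,|w_-\rangle$ is a basis of $V$ (by \eqref{eqn:ScalarProdWV} their Gram matrix against the basis $|v_+\rangle,|v_-\rangle$ of \cref{lem:Basis1} is $\vartheta_1(\pi/3,p)\vartheta_2(\pi/6,p)$ times the identity, which is nonzero because $\pi/6\neq\pi/2$), I get $\mathcal H_1=V/\mathbb C|w_-\rangle=\mathbb C[|w_+\rangle]$.

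For $L\geqslant2$ at $t=\pi/6$ I would check that $|w_+\rangle^{\otimes L}$ is a valid representative. It lies in $\ker\Q^\dagger$ because every local contraction entering $\Q^\dagger$ carries a factor $\Lambda_+=0$, by the displayed action above. It is not in $\text{im}\,\Q^\dagger$: pairing it with the cohomology representative $|v_+\rangle^{\otimes L}$ of \cref{prop:Cohom}, \eqref{eqn:ScalarProdWV} gives $\langle w_+^{\otimes L}|v_+^{\otimes L}\rangle=\langle w_+|v_+\rangle^L=(\vartheta_1(\pi/3,p)\vartheta_2(\pi/6,p))^L\neq0$, whereas if $|w_+\rangle^{\otimes L}=\Q^\dagger|\phi\rangle$ then this equals $\langle\phi|\Q|v_+\rangle^{\otimes L}=0$, because $\q|v_+\rangle=\Lambda_+|v_+\rangle\otimes|v_+\rangle=0$ at $t=\pi/6$ (\cref{lem:QOnBasis}) forces $\Q|v_+\rangle^{\otimes L}=0$. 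The contradiction shows $[|w_+\rangle^{\otimes L}]\neq0$, and one-dimensionality of $\mathcal H_L$ then gives $\mathcal H_L=\mathbb C[|w_+\rangle^{\otimes L}]$.

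The one step that needs care is the transposition of \cref{lem:DualBasis} into the explicit formula for $\Q^\dagger$ on the product states $|w_\epsilon\rangle\otimes|w_{\epsilon'}\rangle$: one must correctly handle the adjoint of the alternating sum $\Q=\sum_j(-1)^j\q_j$ and see how the signs and the diagonal factor $\Lambda_\epsilon\delta_{\epsilon\epsilon'}$ combine, in particular to produce the squared factor $\vartheta_1(\pi/3,p)^2$. Everything else is either a dimension count inherited from \cref{prop:Cohom} or the short pairing argument above, and no theta-function identities beyond those already invoked for \cref{lem:QOnBasis,lem:DualBasis} are required.
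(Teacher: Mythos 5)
Your proposal is correct and follows essentially the same route as the paper: you dispose of $t\neq\pi/6$ via the isomorphism $\mathcal H_L\cong\mathcal H^L$, transpose \cref{lem:DualBasis} to obtain the explicit action of $\Q^\dagger$ on $|w_\epsilon\rangle\otimes|w_{\epsilon'}\rangle$, treat $L=1$ directly, and for $L\geqslant 2$ combine the one-dimensionality inherited from \cref{prop:Cohom} with the pairing of $|w_+\rangle^{\otimes L}$ against $|v_+\rangle^{\otimes L}$ to rule out exactness. The only cosmetic difference is that you leave the overlap as $\vartheta_1(\pi/3,p)\vartheta_2(\pi/6,p)$ rather than simplifying to $\vartheta_1(\pi/3,p)^2$ using $\vartheta_2(\pi/6,p)=\vartheta_1(2\pi/3,p)=\vartheta_1(\pi/3,p)$, which does not affect the argument.
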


\subsection{Supersymmetry singlets: spin-chain ground states}

\label{sec:Singlets}

\paragraph{(Co)homology decompositions.} Let $\Q$ be a generic supercharge and $\Q^\dagger$ its adjoint. We recall the relations between their (co)homology and the supersymmetry singlets of the corresponding Hamiltonian $H$ \cite{witten:82}.

For $L=1$, any $|\Psi\rangle\in \mathcal H^1$ trivially is a singlet. For $L\geqslant 2$, let $|\phi\rangle$ represent a non-zero element of $\mathcal H^L$, then there is a state $|\gamma\rangle \in V^{L-1}$ such that 
\begin{equation}
  \label{eqn:CohomDecomp}
  |\Psi\rangle = |\phi\rangle + \Q|\gamma\rangle
\end{equation}
is a supersymmetry singlet. Conversely, any supersymmetry singlet can be written as a sum of a representative of a non-zero element of $\mathcal H^L$ and a state in the image of the supercharge. 

Likewise, let $L\geqslant 1$ and $|\phi'\rangle$ represent a non-zero element of $\mathcal H_L$ then there is a state $|\gamma'\rangle \in V^{L+1}$ such that 
\begin{equation}
  \label{eqn:HomDecomp}
  |\Psi\rangle = |\phi'\rangle + \Q^\dagger|\gamma'\rangle
\end{equation}
is a supersymmetry singlet. Conversely, any supersymmetry singlet can be written as a sum of a representative of a non-zero element of $\mathcal H_L$ and a state in the image of the adjoint supercharge.

In the following, we refer to \eqref{eqn:CohomDecomp} and \eqref{eqn:HomDecomp} as a cohomology and homology decomposition of a supersymmetry singlet $|\Psi\rangle$, respectively. For the XYZ supercharge and $t=\pi/6$, we use these decompositions to characterise the space of ground states of the Hamiltonian $H$.

\begin{theorem}
  \label{thm:CohomRep}
  Let $L\geqslant 1$ and $(p,t)\in \bar{\mathbb D}$. If $t\neq \pi/6$, then the Hamiltonian $H$ does not possess supersymmetry singlets. Conversely, if $t=\pi/6$ then the space of supersymmetry singlets of $H$ is one-dimensional, and spanned by
  \begin{equation}
    \label{eqn:CohomologyRepresentation}
    |\Psi_L\rangle =
    \begin{cases}
      |v_+\rangle, & L=1,\\
      |v_+\rangle^{\otimes L} + \Q|\gamma_L\rangle,& L\geqslant 2,
    \end{cases}
  \end{equation}
 where  $|\gamma_L\rangle \in V^{L-1}$.
 \begin{proof}
    First, we consider $t\neq \pi/6$. In this case, it follows from \cref{prop:Cohom} that $\mathcal H^L=0$. Hence, $H$ does not possess supersymmetry singlets.
    
    Second, for $t=\pi/6$ the \cref{prop:Cohom} states that $\mathcal H^L=\mathbb C[|v_+\rangle^{\otimes L}]$. Hence, the space of the supersymmetry singlets of $H$ is one-dimensional. In fact, the decomposition for $L\geqslant 2$ follows from \eqref{eqn:CohomDecomp}.
  \end{proof}
\end{theorem}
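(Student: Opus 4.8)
The plan is to prove \cref{thm:CohomRep} by combining \cref{prop:Cohom} with the general (co)homology decomposition \eqref{eqn:CohomDecomp}, and to separate the argument into the two cases $t\neq\pi/6$ and $t=\pi/6$ exactly as the statement suggests.

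First I would dispose of the case $t\neq\pi/6$. By \cref{prop:Cohom}, $\mathcal H^L=0$ for every $L\geqslant 1$. Since the space of supersymmetry singlets of $H$ is isomorphic to $\mathcal H^L$ (this is the standard relation between supersymmetry and cohomology, recalled in \cref{sec:Cohomology}), it follows immediately that $H$ has no supersymmetry singlets. Equivalently, recalling \eqref{eqn:HFromQ}, the spectrum of $H$ is bounded below by zero, and the absence of a zero eigenvalue means $H$ is strictly positive on $V^L$.

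Next I would treat $t=\pi/6$. Here \cref{prop:Cohom} gives $\mathcal H^L=\mathbb C[|v_+\rangle^{\otimes L}]$, which is one-dimensional; hence the space of supersymmetry singlets is one-dimensional as well. For $L=1$ every element of $\mathcal H^1=\ker\{\Q:V\to V^2\}$ is by definition already a singlet, and since $t=\pi/6$ forces $\Lambda_+=0$, \cref{lem:QOnBasis} gives $\q|v_+\rangle=0$, so $|v_+\rangle$ spans this kernel; this yields the $L=1$ line of \eqref{eqn:CohomologyRepresentation}. For $L\geqslant 2$, I would invoke the cohomology decomposition \eqref{eqn:CohomDecomp}: since $|v_+\rangle^{\otimes L}$ represents the (unique up to scalar) non-zero class of $\mathcal H^L$, there exists $|\gamma_L\rangle\in V^{L-1}$ such that $|\Psi_L\rangle=|v_+\rangle^{\otimes L}+\Q|\gamma_L\rangle$ is a genuine supersymmetry singlet, i.e. annihilated by both $\Q$ and $\Q^\dagger$. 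Because the singlet space is one-dimensional, $|\Psi_L\rangle$ spans it, which is precisely the asserted form.

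The only subtle point is that the decomposition \eqref{eqn:CohomDecomp} is a statement I am allowed to quote from the general supersymmetry formalism (it is stated in \cref{sec:Singlets}, citing \cite{witten:82}), so the theorem is essentially a direct specialisation of \cref{prop:Cohom} together with that formalism; there is no real obstacle beyond correctly matching the representative $|v_+\rangle^{\otimes L}$ to the class produced by \cref{prop:Cohom} and checking the $L=1$ case separately. I would not compute $|\gamma_L\rangle$ explicitly, since the statement only claims its existence, though I would remark that an explicit formula, or at least the value $E_0$ of the ground-state energy after undoing the rescaling of \cref{prop:HSUSYHXYZ}, will be extracted later when proving \cref{thm:MainTheorem1}.
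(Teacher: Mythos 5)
Your proof is correct and follows essentially the same route as the paper: both dispose of $t\neq\pi/6$ by citing \cref{prop:Cohom} to conclude $\mathcal H^L=0$, and for $t=\pi/6$ both read off one-dimensionality of the singlet space from $\mathcal H^L=\mathbb C[|v_+\rangle^{\otimes L}]$ and obtain the form of $|\Psi_L\rangle$ from the decomposition \eqref{eqn:CohomDecomp}. Your added remarks on the $L=1$ case and on strict positivity of $H$ are harmless elaborations and do not change the argument.
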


\begin{proposition}
  For $t=\pi/6$ and each $L\geqslant 1$, the state \eqref{eqn:CohomologyRepresentation} can be written as
  \begin{equation}
    \label{eqn:HomologyRepresentation}
    |\Psi_L\rangle = \mu_L|w_+\rangle^{\otimes L} + \Q^\dagger|\gamma'_L\rangle,
  \end{equation}
  with $|\gamma'_L\rangle \in V^{L+1}$. The constant $\mu_L$ is non-zero and given by
  \begin{equation}
  \label{eqn:DefMuL}
    \mu_L = \frac{\left(\langle v_+ |^{\otimes L}\right)|\Psi_L\rangle}{\vartheta_1(\pi/3,p)^{2L}}.
  \end{equation}
  \begin{proof}
    The decomposition \eqref{eqn:HomologyRepresentation} follows from $\mathcal H_L = \mathbb C[|w_+\rangle^{\otimes L}]$ for $t=\pi/6$, found in \cref{prop:Hom}. To find the coefficient $\mu_L$, it is sufficient to compute the scalar product of both sides of \eqref{eqn:HomologyRepresentation} with $|v_+\rangle^{\otimes L}$. It has to be non-zero, because otherwise $|\Psi_L\rangle$ would be in the image of $\Q^\dagger$. This would imply $|\Psi_L\rangle=0$ \cite{hagendorf:17} and thus contradict \cref{prop:Hom}.
  \end{proof}
\end{proposition}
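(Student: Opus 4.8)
The plan is to read the homology form of $|\Psi_L\rangle$ off \cref{prop:Hom}, to pin down the scalar $\mu_L$ by a single pairing, and to exclude $\mu_L=0$ with the standard positivity argument for supersymmetry singlets.

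First I would invoke \cref{prop:Hom}: at $t=\pi/6$ it asserts $\mathcal H_L=\mathbb C[|w_+\rangle^{\otimes L}]$. By \cref{thm:CohomRep} the state $|\Psi_L\rangle$ of \eqref{eqn:CohomologyRepresentation} is a non-zero supersymmetry singlet, and by the homology decomposition \eqref{eqn:HomDecomp} any supersymmetry singlet is the sum of a representative of an element of $\mathcal H_L$ and a state in $\text{im}\,\Q^\dagger$. Since $\mathcal H_L$ is spanned by $[|w_+\rangle^{\otimes L}]$, such a representative equals $\mu_L|w_+\rangle^{\otimes L}$ up to a state in $\text{im}\,\Q^\dagger$; absorbing that correction into the image term gives \eqref{eqn:HomologyRepresentation} for some $\mu_L\in\mathbb C$ and $|\gamma'_L\rangle\in V^{L+1}$.

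Next I would fix $\mu_L$ by taking the scalar product of \eqref{eqn:HomologyRepresentation} with $|v_+\rangle^{\otimes L}$. The $\Q^\dagger$-term disappears: \cref{lem:QOnBasis} gives $\q|v_+\rangle=\Lambda_+\,|v_+\rangle\otimes|v_+\rangle$, and the constant $\Lambda_+$ of \eqref{eqn:DefLambdaSigma} vanishes at $t=\pi/6$, so $\q|v_+\rangle=0$, hence $\Q(|v_+\rangle^{\otimes L})=0$, and therefore $\langle v_+|^{\otimes L}\,\Q^\dagger|\gamma'_L\rangle=(\langle\gamma'_L|\Q(|v_+\rangle^{\otimes L}))^\ast=0$. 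For the surviving term I would use \eqref{eqn:ScalarProdWV}: at $t=\pi/6$ it reads $\langle w_+|v_+\rangle=\vartheta_1(\pi/3,p)\vartheta_2(\pi/6,p)=\vartheta_1(\pi/3,p)^2$ (the last step being the theta-function identity already used in the proof of \cref{prop:Hom}), and since the relevant theta values are real for real $p$ and $t$ one has $\langle v_+|w_+\rangle=\langle w_+|v_+\rangle$, so $\langle v_+|^{\otimes L}(|w_+\rangle^{\otimes L})=\vartheta_1(\pi/3,p)^{2L}$. Together these give $\langle v_+|^{\otimes L}|\Psi_L\rangle=\mu_L\,\vartheta_1(\pi/3,p)^{2L}$, i.e. \eqref{eqn:DefMuL}.

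Finally, $\mu_L\neq0$: if $\mu_L=0$, then $|\Psi_L\rangle=\Q^\dagger|\gamma'_L\rangle\in\text{im}\,\Q^\dagger$, and since $|\Psi_L\rangle$ is a singlet one has $\Q|\Psi_L\rangle=0$, whence $\langle\Psi_L|\Psi_L\rangle=\langle\Psi_L|\Q^\dagger|\gamma'_L\rangle=(\langle\gamma'_L|\Q|\Psi_L\rangle)^\ast=0$; thus $|\Psi_L\rangle=0$, contradicting \cref{thm:CohomRep} (compare \cite{hagendorf:17}). There is no real obstacle here: the only points requiring a little care are the vanishing $\Q(|v_+\rangle^{\otimes L})=0$ at $t=\pi/6$, which kills the $\Q^\dagger$-contribution in the pairing, and the elementary reality of the theta values used to identify $\langle v_+|w_+\rangle$ with $\langle w_+|v_+\rangle$; everything else is immediate from \cref{prop:Hom}, \cref{thm:CohomRep} and \cref{lem:QOnBasis}, and the whole argument is uniform in $L\geqslant1$.
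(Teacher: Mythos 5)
Your proof is correct and follows essentially the same route as the paper's: read off the homology decomposition from \cref{prop:Hom}, pair with $|v_+\rangle^{\otimes L}$ (using $\Q(|v_+\rangle^{\otimes L})=0$ at $t=\pi/6$ and $\langle w_+|v_+\rangle=\vartheta_1(\pi/3,p)^2$) to identify $\mu_L$, and exclude $\mu_L=0$ by noting that a singlet in $\operatorname{im}\Q^\dagger$ must vanish. The only difference is that you spell out the steps the paper leaves implicit (the vanishing of the $\Q^\dagger$ term in the pairing and the norm computation behind the citation to \cite{hagendorf:17}).
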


\paragraph{Alternative decompositions.} The (co)homology decompositions \eqref{eqn:CohomDecomp} and \eqref{eqn:HomDecomp} of a supersymmetry singlet $|\psi\rangle$ are not unique. The reason is that the representatives of $|\phi\rangle$ and $|\phi'\rangle$ are only defined up to a state in the image of $\Q$ or $\Q^\dagger$, respectively. We exploit the non-uniqueness to compute two alternative decompositions for the supersymmetry singlet $|\Psi_L\rangle$. To this end, we define
\begin{align}
  \begin{split}
  |\chi\rangle &= \ket{v_+}\otimes \ket{v_+}- \kappa^2\ket{v_-}\otimes \ket{v_-},\\
|\alpha\rangle &= \ket{w_+}\otimes \ket{w_+} +\kappa^{-1} \ket{w_-}\otimes \ket{w_+},
\end{split}
\end{align}
where $ \kappa = \vartheta_3(\pi/3,p)/\vartheta_3(0,p)$.
\begin{proposition}
For $t=\pi/6$ and each $L\geqslant 2$, the supersymmetry singlet $|\Psi_L\rangle$ can be written as
\begin{equation}
  \label{eqn:AlternativeDecomp1}
  |\Psi_L\rangle = |\chi\rangle \otimes |v_+\rangle^{\otimes (L-2)}+ \Q|\delta_L\rangle,
\end{equation}
for some state $|\delta_L\rangle \in V^{L-1}$, and as
\begin{equation}
  \label{eqn:AlternativeDecomp2}
  |\Psi_L\rangle = \mu_L |\alpha\rangle \otimes |w_+\rangle^{\otimes (L-2)} + \Q^\dagger|\delta_L'\rangle,
\end{equation}
for some state $|\delta_L'\rangle \in V^{L+1}$. Here, $\mu_L$ is the constant defined in \eqref{eqn:DefMuL}.
\begin{proof}
  The proof consists of two simple calculations. We focus on \eqref{eqn:AlternativeDecomp1}. Using $\q|v_+\rangle=0,\,\q|v_-\rangle=\Lambda_-|v_-\rangle \otimes |v_-\rangle$ with $\Lambda_-\neq 0$ for $t=\pi/6$, we obtain
  \begin{equation}
    |v_+\rangle^{\otimes L} =|\chi\rangle \otimes |v_+\rangle^{\otimes (L-2)}-\Q\left(\kappa^2\Lambda_-^{-1}|v_-\rangle\otimes |v_+\rangle^{\otimes (L-2)}\right).
  \end{equation}
  We use this in \eqref{eqn:CohomologyRepresentation} and obtain \eqref{eqn:AlternativeDecomp1} with
  \begin{equation}
    |\delta_L\rangle = |\gamma_{L}\rangle-\kappa^2\Lambda_-^{-1}|v_-\rangle\otimes |v_+\rangle^{\otimes (L-2)}.
  \end{equation}
  The proof of \eqref{eqn:AlternativeDecomp2} is similar.
\end{proof}
\end{proposition}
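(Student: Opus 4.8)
The statement to prove is the last Proposition: for $t=\pi/6$ and $L\geqslant 2$, the supersymmetry singlet $|\Psi_L\rangle$ admits the two "alternative" decompositions \eqref{eqn:AlternativeDecomp1} and \eqref{eqn:AlternativeDecomp2}, in which the representative of the relevant (co)homology class is localized on the first two tensor factors rather than spread over all $L$ sites. The underlying idea is that the representatives $|v_+\rangle^{\otimes L}$ and $|w_+\rangle^{\otimes L}$ of the classes generating $\mathcal H^L$ and $\mathcal H_L$ (see \cref{prop:Cohom,prop:Hom}) are only defined modulo $\mathrm{im}\,\Q$ and $\mathrm{im}\,\Q^\dagger$ respectively, so it suffices to exhibit, at the level of the first two factors, a state in $\mathrm{im}\,\Q$ (resp. $\mathrm{im}\,\Q^\dagger$) that converts $|v_+\rangle^{\otimes 2}$ into $|\chi\rangle$ (resp. $|w_+\rangle^{\otimes 2}$ into $|\alpha\rangle$), and then tensor with the remaining $|v_+\rangle^{\otimes(L-2)}$ (resp. $|w_+\rangle^{\otimes(L-2)}$).

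The plan for \eqref{eqn:AlternativeDecomp1} is as follows. At $t=\pi/6$ we have, by \cref{lem:QOnBasis}, $\q|v_+\rangle = 0$ and $\q|v_-\rangle = \Lambda_-\,|v_-\rangle\otimes|v_-\rangle$ with $\Lambda_-\neq 0$; moreover $|v_+\rangle,|v_-\rangle$ still form a basis of $V$ (since $t=\pi/6\neq\pi/2$, \cref{lem:Basis1} applies). Consider the state $\kappa^2\Lambda_-^{-1}\,|v_-\rangle\otimes|v_+\rangle^{\otimes(L-2)}\in V^{L-1}$ and apply $\Q$. Since $\Q = \sum_{j}(-1)^j\q_j$ and $\q$ annihilates every $|v_+\rangle$, only the term $\q_1$ survives: $\Q\bigl(\kappa^2\Lambda_-^{-1}|v_-\rangle\otimes|v_+\rangle^{\otimes(L-2)}\bigr) = -\kappa^2\Lambda_-^{-1}(\q|v_-\rangle)\otimes|v_+\rangle^{\otimes(L-2)} = -\kappa^2\,|v_-\rangle\otimes|v_-\rangle\otimes|v_+\rangle^{\otimes(L-2)}$. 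Adding $|v_+\rangle^{\otimes L}$ gives exactly $|\chi\rangle\otimes|v_+\rangle^{\otimes(L-2)}$ with $|\chi\rangle=|v_+\rangle^{\otimes 2}-\kappa^2|v_-\rangle^{\otimes 2}$, i.e. $|v_+\rangle^{\otimes L} = |\chi\rangle\otimes|v_+\rangle^{\otimes(L-2)} - \Q\bigl(\kappa^2\Lambda_-^{-1}|v_-\rangle\otimes|v_+\rangle^{\otimes(L-2)}\bigr)$. Substituting this into the cohomology decomposition \eqref{eqn:CohomologyRepresentation} of $|\Psi_L\rangle$ and collecting the $\Q$-exact terms yields \eqref{eqn:AlternativeDecomp1} with $|\delta_L\rangle = |\gamma_L\rangle - \kappa^2\Lambda_-^{-1}|v_-\rangle\otimes|v_+\rangle^{\otimes(L-2)}$.

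For \eqref{eqn:AlternativeDecomp2} I would run the dual argument. By \cref{lem:DualBasis} at $t=\pi/6$, $(\langle w_\epsilon|\otimes\langle w_{\epsilon'}|)\q = \vartheta_1(\pi/3,p)\vartheta_2(\pi/6,p)\,\Lambda_\epsilon\,\delta_{\epsilon\epsilon'}\langle w_\epsilon|$, hence $\Q^\dagger(|w_\epsilon\rangle\otimes|w_{\epsilon'}\rangle) = -\vartheta_1(\pi/3,p)^2\,\Lambda_\epsilon\,\delta_{\epsilon\epsilon'}\,|w_\epsilon\rangle$ (cf. \eqref{eqn:QDaggerPi6}), so $\Q^\dagger$ kills $|w_+\rangle^{\otimes 2}$ and maps $|w_-\rangle^{\otimes 2}$ to a non-zero multiple of $|w_-\rangle$, while $\Q^\dagger$ acts on $|w_+\rangle^{\otimes L}$ only through the first two factors. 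Applying $\Q^\dagger$ to a suitable multiple of $|w_-\rangle\otimes|w_+\rangle^{\otimes(L-1)}$ (with coefficient chosen to produce the factor $\kappa^{-1}$ in $|\alpha\rangle$) and using $\mathcal H_L=\mathbb C[|w_+\rangle^{\otimes L}]$ from \cref{prop:Hom} together with the homology decomposition \eqref{eqn:HomologyRepresentation}, one obtains $\mu_L|w_+\rangle^{\otimes L} = \mu_L|\alpha\rangle\otimes|w_+\rangle^{\otimes(L-2)} + \Q^\dagger(\text{something})$, and hence \eqref{eqn:AlternativeDecomp2}. The only genuine checks are (i) that $\q$ indeed annihilates $|v_+\rangle$ and has the stated action on $|v_-\rangle$ at $t=\pi/6$ — which is just the vanishing $\Lambda_+=0$ already recorded in the proof of \cref{prop:Cohom} — and (ii) the bookkeeping of the constant $\kappa = \vartheta_3(\pi/3,p)/\vartheta_3(0,p)$, which must match the coefficient appearing when one rewrites $|w_-\rangle\otimes|w_+\rangle$ after applying $\Q^\dagger$; I expect this constant-tracking in the homological half to be the only mildly delicate point, and it reduces to a couple of theta-function identities of the same type already used in \cref{lem:QOnBasis,lem:DualBasis}. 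Everything else is the two "simple calculations" the statement advertises.
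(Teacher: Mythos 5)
Your proof of \eqref{eqn:AlternativeDecomp1} is exactly the paper's argument: apply $\Q$ to $\kappa^2\Lambda_-^{-1}|v_-\rangle\otimes|v_+\rangle^{\otimes(L-2)}$, note that only $\q_1$ contributes because $\q|v_+\rangle=0$ at $t=\pi/6$, and substitute into the cohomology decomposition; the paper declares \eqref{eqn:AlternativeDecomp2} ``similar'', and you correctly identify the dual strategy via \eqref{eqn:QDaggerPi6} and \cref{prop:Hom}. One slip in your sketch of the homological half: you propose ``applying $\Q^\dagger$ to a suitable multiple of $|w_-\rangle\otimes|w_+\rangle^{\otimes(L-1)}$'', but that state lies in $V^L$ and $\Q^\dagger$ maps it into $V^{L-1}$, whereas the required $\Q^\dagger$-exact correction must live in $V^L$ and hence be the image of a state in $V^{L+1}$; the preimage you actually want is (a multiple of) $|w_-\rangle\otimes|w_-\rangle\otimes|w_+\rangle^{\otimes(L-1)}$, for which only the $j=1$ term of $\Q^\dagger=\sum_j(-1)^j\q_j^\dagger$ survives because $\Lambda_+=0$ kills $j\geqslant 3$ and $\delta_{\epsilon\epsilon'}$ kills $j=2$.
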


Finally, we point out that for $t=\pi/6$, the basis states $|v_\pm\rangle$ and their duals $|w_\pm\rangle$, as well as $|\chi\rangle$ and $|\alpha\rangle$, can up to factor be written in terms of polynomials in $\zeta$ and $y$. This property can be shown with the help of identities between Jacobi theta functions.
\begin{lemma}
  \label{lem:VWPolynomial}
  We have $|v_\pm\rangle = C_{\pm}|\bar v_\pm\rangle$ and $|w_\pm\rangle = C_\mp|\bar w_\pm\rangle$, where
  \begin{align}
    |\bar v_+\rangle &= y(1-\zeta y^2)|{\uparrow}\rangle + (\zeta-y^2)|{\downarrow}\rangle, & |\bar v_-\rangle &= |{\uparrow}\rangle - y|{\downarrow}\rangle,\\
    |\bar w_+\rangle &= y|{\uparrow}\rangle + |{\downarrow}\rangle, & |\bar w_-\rangle &= (\zeta-y^2)|{\uparrow}\rangle - y(1-\zeta y^2)|{\downarrow}\rangle,
  \end{align}
  and $C_+ = (1-\zeta^2)^{-2/3}y^{-1}\vartheta_3(\pi/3,p^2),\,C_- = \vartheta_3(\pi/3,p^2)$.
\end{lemma}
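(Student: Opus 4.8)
The plan is to specialise the defining formulas \eqref{eqn:DefVSigma} and \eqref{eqn:DefWSigma} to $t=\pi/6$ and simplify the resulting theta functions with elementary identities. Since $\pi/6+\pi/3=\pi/2$ and $\pi/6-\pi/3=-\pi/6$, the parity of $\vartheta_1$ and $\vartheta_4$ together with the quarter-period shifts $\vartheta_1(u+\pi/2,p^2)=\vartheta_2(u,p^2)$ and $\vartheta_4(u+\pi/2,p^2)=\vartheta_3(u,p^2)$ turn the four states into
\begin{align*}
  |v_+\rangle &= \vartheta_3(0,p^2)\ket{\uu}+\vartheta_2(0,p^2)\ket{\dd}, & |v_-\rangle &= \vartheta_4(\pi/6,p^2)\,(\ket{\uu}-y\ket{\dd}),\\
  |w_+\rangle &= \vartheta_4(\pi/6,p^2)\,(y\ket{\uu}+\ket{\dd}), & |w_-\rangle &= \vartheta_2(0,p^2)\ket{\uu}-\vartheta_3(0,p^2)\ket{\dd},
\end{align*}
where I have already inserted that, at $t=\pi/6$, the parameterisation \eqref{eqn:JacobiThetaParam} reads $y=\vartheta_1(\pi/6,p^2)/\vartheta_4(\pi/6,p^2)$ and $\zeta=(\vartheta_1(\pi/3,p^2)/\vartheta_4(\pi/3,p^2))^2$ (the latter using $\vartheta_j(2\pi/3,p^2)=\vartheta_j(\pi/3,p^2)$ for $j\in\{1,4\}$, by $\pi$-periodicity and parity). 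One further quarter-period shift gives $\vartheta_4(\pi/6,p^2)=\vartheta_3(-\pi/3,p^2)=\vartheta_3(\pi/3,p^2)$, so the second column above is immediately $|v_-\rangle=C_-|\bar v_-\rangle$ and $|w_+\rangle=C_-|\bar w_+\rangle$ with $C_-=\vartheta_3(\pi/3,p^2)$.

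For $|v_+\rangle$ I would use supersymmetry rather than a theta identity. At $t=\pi/6$ the factor $\vartheta_2(t+\pi/3,p)$ in \eqref{eqn:DefLambdaSigma} equals $\vartheta_2(\pi/2,p)=0$, so $\Lambda_+=0$, and \cref{lem:QOnBasis} gives $\q\,|v_+\rangle=\Lambda_+\,|v_+\rangle\otimes|v_+\rangle=0$, i.e.\ $|v_+\rangle\in\ker\q$. A one-line computation from the explicit form \eqref{eqn:XYZLocalSupercharge} shows that the $\ket{\uu\dd}$-component of $\q(\alpha\ket{\uu}+\beta\ket{\dd})$ equals $\alpha(\zeta-y^2)-\beta\,y(1-\zeta y^2)$ (only $\q_\phi$ contributes to it). Requiring this to vanish for the non-zero vector $|v_+\rangle$ forces $|v_+\rangle\propto y(1-\zeta y^2)\ket{\uu}+(\zeta-y^2)\ket{\dd}=|\bar v_+\rangle$, hence $|v_+\rangle=C_+|\bar v_+\rangle$ for some $C_+$. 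The relation for $|w_-\rangle$ then follows for free from \eqref{eqn:ScalarProdWV}: $\langle w_-|v_+\rangle=0$ makes $\langle w_-|$ proportional to the covector orthogonal to $|\bar v_+\rangle$, namely $(\zeta-y^2)\langle{\uu}|-y(1-\zeta y^2)\langle{\dd}|$, and comparing with the explicit expressions above identifies the constant as the same $C_+$, so $|w_-\rangle=C_+|\bar w_-\rangle$.

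It remains to put $C_+$ in the stated closed form. Comparing the two expressions for $|v_+\rangle$ gives $C_+=\vartheta_2(0,p^2)/(\zeta-y^2)$. Writing $a=\vartheta_1(\pi/6,p^2)$, $b=\vartheta_4(\pi/6,p^2)$, $c=\vartheta_1(\pi/3,p^2)$, $d=\vartheta_4(\pi/3,p^2)$, the addition theorem in the form $\vartheta_2(0,p^2)\,a\,\vartheta_4(0,p^2)^2=c^2b^2-d^2a^2$, combined with $\zeta-y^2=(c^2b^2-d^2a^2)/(d^2b^2)$, collapses this to $C_+=d^2b^2/(a\,\vartheta_4(0,p^2)^2)$. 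Using $b=\vartheta_3(\pi/3,p^2)$ and $y^{-1}=b/a$, all that remains is to check $d^2/\vartheta_4(0,p^2)^2=(1-\zeta^2)^{-2/3}$; this follows from the duplication formula $\vartheta_4(2\pi/3,p^2)\,\vartheta_4(0,p^2)^3=d^4-c^4$ together with $\vartheta_4(2\pi/3,p^2)=d$, which give $1-\zeta^2=(d^4-c^4)/d^4=\vartheta_4(0,p^2)^3/d^3$ (all quantities positive for $0<p<1$, so the cube root is the positive one). Substituting back yields $C_+=(1-\zeta^2)^{-2/3}y^{-1}\vartheta_3(\pi/3,p^2)$, as claimed.

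The main obstacle is bookkeeping rather than anything conceptual: one has to select the correct classical theta identities — a single quadratic addition theorem and its associated duplication formula, evaluated at the special arguments $\pi/6,\pi/3,\pi/2$ — and keep track of the positive cube-root branch in $(1-\zeta^2)^{2/3}$. I expect the identification of $C_+$, in particular deciding which addition/duplication identities to invoke, to be the only non-routine step; the directions of all four vectors come either directly from the $t=\pi/6$ specialisation or, for $|v_+\rangle$, from $\Lambda_+=0$ via \cref{lem:QOnBasis}, with $|w_-\rangle$ then fixed by the duality \eqref{eqn:ScalarProdWV}.
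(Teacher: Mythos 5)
Your proof is correct. The paper does not write out an argument for \cref{lem:VWPolynomial}; it only remarks that the claim ``can be shown with the help of identities between Jacobi theta functions,'' which suggests a purely computational verification via repeated quadratic theta identities. Your route is genuinely different for the two nontrivial vectors: instead of a direct theta computation, you use the supersymmetric degeneration $\Lambda_+=0$ at $t=\pi/6$ (from \cref{lem:QOnBasis}) to place $|v_+\rangle$ in $\ker\q$, read off the direction of that kernel from the $\ket{\uu\dd}$-component of the explicit local supercharge \eqref{eqn:XYZLocalSupercharge}, and then fix $|w_-\rangle$ for free via the orthogonality $\langle w_-|v_+\rangle=0$ from \eqref{eqn:ScalarProdWV}. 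Theta identities are then confined to the overall factor $C_+$, where you correctly invoke the addition theorem $\vartheta_1(x+y)\vartheta_1(x-y)\vartheta_4(0)^2=\vartheta_1(x)^2\vartheta_4(y)^2-\vartheta_4(x)^2\vartheta_1(y)^2$ at $(x,y)=(\pi/3,\pi/6)$ and the duplication formula $\vartheta_4(2z)\vartheta_4(0)^3=\vartheta_4(z)^4-\vartheta_1(z)^4$ at $z=\pi/3$, together with positivity of all quantities for $0<p<1$ to select the cube-root branch. (You also silently use $\zeta-y^2\neq 0$ on the locus $P(\zeta,y)=0$, which is easily checked since $P(\zeta,\pm\sqrt{\zeta})=2\zeta(\zeta^2-1)\neq 0$ for $0<\zeta<1$; worth stating, since it is what makes the $\ket{\uu\dd}$-component constraint have a one-dimensional solution space.) The upshot is that your proof trades some theta bookkeeping for the supersymmetric structure already established in \cref{sec:Parameterisation}, which is a clean economy; both routes land on the same constants.
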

\begin{lemma}
  \label{lem:ChiAlphaPolynomial}
 We have $|\chi\rangle = D_+|\bar \chi\rangle$ and $|\alpha\rangle = D_-|\bar \alpha\rangle$ with
  \begin{align}
    |\bar \chi\rangle &= y^2(\zeta-2+\zeta y^2)|{\uparrow\uparrow}\rangle+ y (y^2-1)(|{\uparrow\downarrow}\rangle + |{\downarrow\uparrow}\rangle) - (\zeta+(\zeta-2)y^2)|{\downarrow\downarrow}\rangle,\\
    |\bar \alpha\rangle &=  y\left(|{\uparrow\uparrow}\rangle-|{\downarrow\downarrow}\rangle\right) +|{\uparrow\downarrow}\rangle-y^2|{\downarrow\uparrow} \rangle,
  \end{align}
  where $D_+=\zeta(y^2-1)C_+^2 $ and $D_-=\zeta (y^2-1)y^{-1}(\zeta-1)^{-1}C_-^2$.
\end{lemma}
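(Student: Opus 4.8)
The strategy is to reduce the identities $|\chi\rangle = D_+|\bar\chi\rangle$ and $|\alpha\rangle = D_-|\bar\alpha\rangle$ to the already-available factorisations of the single-spin vectors in \cref{lem:VWPolynomial}, and then expand the resulting tensor products. First I would substitute $|v_\pm\rangle = C_\pm|\bar v_\pm\rangle$ into the definition $|\chi\rangle = |v_+\rangle\otimes|v_+\rangle - \kappa^2 |v_-\rangle\otimes|v_-\rangle$, obtaining
\begin{equation}
  |\chi\rangle = C_+^2\, |\bar v_+\rangle\otimes|\bar v_+\rangle - \kappa^2 C_-^2\, |\bar v_-\rangle\otimes|\bar v_-\rangle .
\end{equation}
For this to collapse onto a single multiple of $|\bar\chi\rangle$, the two theta-function prefactors must be proportional with a ratio that is itself a polynomial (up to the common factor $C_+^2$); the key input here is the value of $\kappa = \vartheta_3(\pi/3,p)/\vartheta_3(0,p)$ together with the relation between $C_+$ and $C_-$ and the $\zeta$-parameterisation \eqref{eqn:JacobiThetaParam} at $t=\pi/6$. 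Concretely I would show that $\kappa^2 C_-^2 / C_+^2$ equals $\zeta(y^2-1)$ times an expression that conspires with $|\bar v_\pm\rangle$ so that the bracketed combination becomes $(y^2-1)^{-1}\zeta^{-1}|\bar\chi\rangle$ — i.e. verify the scalar identity $\kappa^2 C_-^2 = -y^2 C_+^2$ (or whatever the correct relation is after substituting $C_\pm$ from \cref{lem:VWPolynomial} and the theta-quotient for $\zeta$). The analogous computation for $|\alpha\rangle = |w_+\rangle\otimes|w_+\rangle + \kappa^{-1}|w_-\rangle\otimes|w_+\rangle$ uses $|w_\pm\rangle = C_\mp|\bar w_\pm\rangle$, giving $|\alpha\rangle = C_-^2\,|\bar w_+\rangle\otimes|\bar w_+\rangle + \kappa^{-1}C_+C_-\,|\bar w_-\rangle\otimes|\bar w_+\rangle$, which factors as $C_-^2$ times $|\bar w_+\rangle\otimes\bigl(|\bar w_+\rangle + \kappa^{-1}(C_+/C_-)|\bar w_-\rangle\bigr)$; one then checks that $\kappa^{-1}C_+/C_- = y(1-\zeta y^2)^{-1}$ (or the appropriate rational function) so that the inner combination is a polynomial multiple of the vector $(\,|\bar w_+\rangle$-part$)$ assembling to $|\bar\alpha\rangle$.

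Having reduced everything to polynomial vector identities, the remaining work is the expansion of two $2\times 2$ tensor products and the comparison of the four components in each case. For $|\bar\chi\rangle$ this means checking that $|\bar v_+\rangle\otimes|\bar v_+\rangle$ and $|\bar v_-\rangle\otimes|\bar v_-\rangle$ combine, with the coefficient found above, to give exactly the stated coefficients $y^2(\zeta-2+\zeta y^2)$, $y(y^2-1)$, $y(y^2-1)$, $-(\zeta+(\zeta-2)y^2)$ on $|{\uparrow\uparrow}\rangle$, $|{\uparrow\downarrow}\rangle$, $|{\downarrow\uparrow}\rangle$, $|{\downarrow\downarrow}\rangle$; the off-diagonal components are automatically symmetric because both $|\bar v_\pm\rangle\otimes|\bar v_\pm\rangle$ are symmetric tensors, which is a useful consistency check. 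The computation of $D_\pm$ is then just reading off the overall constant: $D_+ = C_+^2 \cdot (\text{the common polynomial factor pulled out})$, which one matches against the claimed $D_+=\zeta(y^2-1)C_+^2$, and similarly $D_-=\zeta(y^2-1)y^{-1}(\zeta-1)^{-1}C_-^2$.

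The main obstacle is not the tensor-product bookkeeping but establishing the two scalar theta-function identities relating $\kappa$, $C_+$, $C_-$ and the $\zeta$-quotient at $t=\pi/6$ — these require the same kind of addition and multiplication formulas for $\vartheta_1,\dots,\vartheta_4$ that underlie \cref{lem:QOnBasis} and \cref{lem:VWPolynomial}. In practice I would derive them by specialising the identities already used to prove \cref{lem:VWPolynomial}: since $|\chi\rangle$ and $|\alpha\rangle$ are built from $|v_\pm\rangle$ and $|w_\pm\rangle$ by fixed linear combinations with coefficients $1$ and $\pm\kappa^{\mp1}$, and since $\kappa$ is itself a ratio of theta constants that appears naturally in those proofs, the needed relations should fall out with little extra effort once the $t=\pi/6$ specialisations of $\vartheta_j(t\pm\pi/3,p^2)$ are written in terms of $\vartheta_j(\pi/6\pm\pi/3,p^2)=\vartheta_j(\pm\pi/6\ \text{or}\ \pi/2,p^2)$ and simplified using the half-period and $1/3$-period relations. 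I would present the two identities as a short preliminary computation and then let the component comparison finish the proof.
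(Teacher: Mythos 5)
Your plan is essentially correct, and it is the natural way to organise the argument: substitute the factorisations $|v_\pm\rangle=C_\pm|\bar v_\pm\rangle$, $|w_\pm\rangle=C_\mp|\bar w_\pm\rangle$ from \cref{lem:VWPolynomial}, reduce the collapse onto $|\bar\chi\rangle$ and $|\bar\alpha\rangle$ to one scalar theta-function identity, and finish with a direct component comparison. Two small points would have to be fixed in a written-out version.

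First, in the $|\alpha\rangle$ computation the common tensor factor is the \emph{second} factor, not the first: from $|\alpha\rangle = |w_+\rangle\otimes |w_+\rangle + \kappa^{-1}|w_-\rangle\otimes |w_+\rangle$ one obtains
\begin{equation}
  |\alpha\rangle = C_- \bigl(C_-|\bar w_+\rangle + \kappa^{-1}C_+|\bar w_-\rangle\bigr)\otimes |\bar w_+\rangle,
\end{equation}
whereas your draft factored out $|\bar w_+\rangle$ on the left. This is a bookkeeping slip but it changes which single-spin combination you need to match against the first leg of $|\bar\alpha\rangle$.

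Second, the two scalar identities you guessed are not the correct ones, although you did hedge. Working the component comparison through, the unique scalar that makes the $|\chi\rangle$ combination collapse is
\begin{equation}
  \frac{\kappa^2 C_-^2}{C_+^2} = y^2(1-\zeta)^2 ,
\end{equation}
and the one for $|\alpha\rangle$ is $\kappa^{-1}C_+/C_- = \bigl(y(1-\zeta)\bigr)^{-1}$. These two are not independent: both reduce to the single relation $\kappa\,C_-/C_+ = y(1-\zeta)$, which, after inserting the expressions for $C_\pm$ from \cref{lem:VWPolynomial}, becomes the theta-constant identity
\begin{equation}
  \kappa^3 = \left(\frac{\vartheta_3(\pi/3,p)}{\vartheta_3(0,p)}\right)^{3} = \frac{1-\zeta}{(1+\zeta)^2}, \qquad \zeta = \left(\frac{\vartheta_1(2\pi/3,p^2)}{\vartheta_4(2\pi/3,p^2)}\right)^{2}.
\end{equation}
Once this is verified via standard product and Landen-type formulas, both constants $D_\pm$ in the lemma fall out exactly as stated, and the tensor-product expansion is then routine (with the symmetry of the off-diagonal $|{\uparrow\downarrow}\rangle,|{\downarrow\uparrow}\rangle$ components of $|\bar\chi\rangle$ serving as the consistency check you mention). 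The paper gives no explicit proof for this lemma (it simply appeals to theta identities), so your plan fills it in along the intended route; correcting the two points above would make it complete.
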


\paragraph{The XYZ ground states.} We now return to the XYZ Hamiltonian defined in \eqref{eqn:SUSYXYZHamiltonian} and prove \cref{thm:MainTheorem1}. To this end, we introduce the polynomial
\begin{equation}
  P(\zeta,y) = \zeta(1+y^4)-(3-\zeta^2)y^2.
\end{equation}
It is straightforward to see that, given $0<\zeta<1$, the biquadratic equation $P(\zeta,y)=0$ for $y$ possesses four real solutions. They have particularly simple expressions in the parameterisation by Jacobi theta functions.
\begin{lemma}
\label{lem:ZeroesP}
Let $0<\zeta<1$ and $y$ be parametrised according to \eqref{eqn:JacobiThetaParam} with $0<p<1$, then the solutions of $P(\zeta,y)=0$ are given by
\begin{equation}
  y_0 = \frac{\vartheta_1(\pi/6,p^2)}{\vartheta_4(\pi/6,p^2)}, \quad y_1 = \frac{\vartheta_4(\pi/6,p^2)}{\vartheta_1(\pi/6,p^2)}, \quad y_2 =- \frac{\vartheta_4(\pi/6,p^2)}{\vartheta_1(\pi/6,p^2)}, \quad  y_3=-\frac{\vartheta_1(\pi/6,p^2)}{\vartheta_4(\pi/6,p^2)}.
  \label{eqn:ZeroesP}
\end{equation}
In particular, $P(\zeta,y)=0$ for $(\zeta,y)\in \mathbb D$ if and only if $y=y_0$.
\begin{proof}
  First, we substitute the parameterisation \eqref{eqn:JacobiThetaParam} into the polynomial $P(\zeta,y)$ and find
  \begin{equation}
  P(\zeta,y) = \frac{C\vartheta_1(\pi/6-t,p)\vartheta_1(\pi/6+t,p)}{\vartheta_4(t,p^2)^4},
\end{equation}
where $C=(\vartheta_1(\pi/3,p)\vartheta_4(0,p^2)/\vartheta_4(\pi/3,p^2))^2$. The right-hand side vanishes if and only if
\begin{equation}
  t = \pm \pi/6,\, t= \pm \pi/6 +\i  s \mod \pi,2\i s,
\end{equation}
where $s>0$ is defined through $p=e^{-s}$. The evaluations of $y$ at these values of $t$ lead to the four roots given in \eqref{eqn:ZeroesP}.

Second, we conclude from \eqref{eqn:ZeroesP} that $(\zeta,y_0)\in \mathbb D$ but $(\zeta,y_\alpha)\notin \mathbb D$ for $\alpha=1,2,3$.
\end{proof}
\end{lemma}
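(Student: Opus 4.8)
The plan is to prove \cref{lem:ZeroesP} by exploiting the theta-function parameterisation \eqref{eqn:JacobiThetaParam}, which converts the biquadratic polynomial identity into an identity between Jacobi theta functions. First I would substitute $\zeta = (\vartheta_1(2\pi/3,p^2)/\vartheta_4(2\pi/3,p^2))^2$ and $y = \vartheta_1(t,p^2)/\vartheta_4(t,p^2)$ into $P(\zeta,y) = \zeta(1+y^4) - (3-\zeta^2)y^2$ and clear the common denominator $\vartheta_4(t,p^2)^4$. The numerator is then a polynomial in theta functions; my claim is that it factorises as $C\,\vartheta_1(\pi/6-t,p)\vartheta_1(\pi/6+t,p)$ with the stated constant $C=(\vartheta_1(\pi/3,p)\vartheta_4(0,p^2)/\vartheta_4(\pi/3,p^2))^2$. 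To establish this factorisation I would use standard addition/product formulas for theta functions — in particular the quadratic identities that express $\vartheta_1(u+v,p)\vartheta_1(u-v,p)$ as a bilinear combination of $\vartheta_j(u,p^2)$ and $\vartheta_k(v,p^2)$ — specialising $v = \pi/6$ (and $v = \pi/3$ for the factor absorbed into $\zeta$) and using the special value $\vartheta_1(\pi/3,p^2)$ etc. It may be cleaner to verify the factorisation by checking that both sides are holomorphic functions of $t$ with the same quasi-periodicity and the same zeros in a fundamental domain, then matching one normalisation constant; this avoids grinding through the product formulas directly.

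Next, granted the factorisation, the zeros of $P(\zeta,y)$ as a function of $t$ are exactly the zeros of $\vartheta_1(\pi/6\pm t, p)$. Since $\vartheta_1(z,p)$ vanishes precisely on the lattice $z \in \pi\mathbb{Z} + 2\i s\, \mathbb{Z}$ (with $p = e^{-s}$; I would use whichever nome-convention the paper fixes consistently), the solutions are $t = \pm\pi/6$ and $t = \pm\pi/6 + \i s \pmod{\pi, 2\i s}$. I would then evaluate $y = \vartheta_1(t,p^2)/\vartheta_4(t,p^2)$ at these four values of $t$. For $t = \pm\pi/6$ one gets $y = \pm y_0$ with $y_0 = \vartheta_1(\pi/6,p^2)/\vartheta_4(\pi/6,p^2)$ directly, using that $\vartheta_1$ is odd and $\vartheta_4$ is even. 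For $t = \pm\pi/6 + \i s$ I would use the quasi-periodicity relations of $\vartheta_1$ and $\vartheta_4$ under shifts by the quarter-period $\i s$ (which with nome $p^2$ amounts to the imaginary half-period $\i s$): these shifts interchange $\vartheta_1 \leftrightarrow \vartheta_4$ up to a common exponential prefactor that cancels in the ratio, producing $y = \pm \vartheta_4(\pi/6,p^2)/\vartheta_1(\pi/6,p^2) = \pm y_1$. Collecting the four values gives exactly $\{y_0,y_1,y_2,y_3\}$ in \eqref{eqn:ZeroesP}.

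Finally, for the "in particular" statement I would check which of the four roots lies in $\mathbb D$, i.e. which has $0 \leqslant |y| \leqslant 1$ and $\mathop{\text{Re}} y \geqslant 0$. All four roots are real, so the sign conditions pick out $y_0$ and $y_1$ over $y_2, y_3$; and since $0 < p < 1$ forces $0 < \vartheta_1(\pi/6,p^2) < \vartheta_4(\pi/6,p^2)$ (because $\vartheta_1$ at an interior point of $(0,\pi/2)$ is dominated by $\vartheta_4$ there for real $p$ — this is an elementary monotonicity/positivity fact about the theta series, which I would state with a one-line justification from the product representation), we have $0 < y_0 < 1 < y_1$, so only $y_0$ satisfies $|y| \leqslant 1$. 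Hence $P(\zeta,y) = 0$ with $(\zeta,y) \in \mathbb D$ iff $y = y_0$.

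The main obstacle is the theta-function factorisation in the first step: getting $P(\zeta,y)$ into the form $C\,\vartheta_1(\pi/6-t,p)\vartheta_1(\pi/6+t,p)/\vartheta_4(t,p^2)^4$ requires a careful application of the quadratic theta identities (and juggling the two nomes $p$ and $p^2$ that appear in \eqref{eqn:JacobiThetaParam}), and verifying the precise constant $C$. The zero-counting and domain-checking steps are then routine consequences of standard quasi-periodicity.
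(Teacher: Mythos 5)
Your proposal follows the same route as the paper: substitute the theta-function parameterisation into $P(\zeta,y)$, obtain the factorisation $P=C\,\vartheta_1(\pi/6-t,p)\vartheta_1(\pi/6+t,p)/\vartheta_4(t,p^2)^4$, read off the zeros in $t$ from the zero set of $\vartheta_1$, evaluate $y(t)$ at those $t$ using quasi-periodicity (in particular $y(t+\i s)=1/y(t)$), and finally use $0<y_0<1<y_1$ together with the sign conditions to isolate $y_0$ as the only root in $\mathbb D$. The verification strategy you suggest for the factorisation (matching zeros, quasi-periodicity and a single normalisation constant) is a sound way to fill in what the paper states without detail.

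There is one slip that would break the argument if taken literally. You assert that $\vartheta_1(z,p)$ with $p=e^{-s}$ vanishes on the lattice $\pi\mathbb{Z}+2\i s\,\mathbb{Z}$; in fact, since $p=e^{\i\pi\tau}$ gives $\pi\tau=\i s$, the zero lattice is $\pi\mathbb{Z}+\i s\,\mathbb{Z}$. The factor of $2$ belongs to the nome $p^2$ that appears in the parameterisation of $y(t)$, not to the nome $p$ that appears in the factorised form of $P$. With your stated lattice, the zero set of $\vartheta_1(\pi/6\mp t,p)$ modulo $\pi,2\i s$ would reduce to $t=\pm\pi/6$ only, and the two additional roots $t=\pm\pi/6+\i s$ you then list would not arise. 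The correct statement — zero lattice $\pi\mathbb{Z}+\i s\,\mathbb{Z}$ for $\vartheta_1(\cdot,p)$, combined with the relevant fundamental domain for $y(t)$ being $\pi,2\i s$ — does produce four inequivalent $t$-values and hence the four roots $\pm y_0,\pm y_1$, which is what you and the paper both conclude. So the slip is a bookkeeping error on the two nomes, easily repaired; the overall argument and conclusion are correct.
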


In terms of the parameterisation \eqref{eqn:JacobiThetaParam}, this lemma implies that $P(\zeta,y)$ vanishes for $(p,t)\in \bar{\mathbb D}$ if and only if $t=\pi/6$. We exploit this property in the following proof.

\begin{proof}[Proof of \cref{thm:MainTheorem1}]

First, we prove the theorem for $(\zeta,y)\in \mathbb D$. To this end, we recall the relation \eqref{eqn:RelationsHam} that expresses the Hamiltonian $H$ in terms of $H_{\textup{\tiny XYZ}}$ for $L\geqslant 1$ sites:
\begin{equation}
  \label{eqn:RelationHams}
  H = x\left(H_{\textup{\tiny XYZ}}+\frac{(L-1)(3+\zeta^2)}{4}+2\lambda_0\right).
\end{equation}
The factor $x$ in this relation is positive for all $(\zeta,y)\in \mathbb D$. Hence, the spaces of the ground states of $H$ and $H_{\textup{\tiny XYZ}}$ are equal. We use the parameterisation of $(\zeta,y)\in \mathbb D$ by $(p,t)\in \bar{\mathbb D}$. According to \cref{thm:CohomRep}, the space of the ground states of $H$ is spanned by the supersymmetry singlet $|\Psi_L\rangle$ if and only if $t=\pi/6$. We use \cref{lem:ZeroesP} to conclude that the space of the ground states of $H_{\textup{\tiny XYZ}}$ consists of supersymmetry singlets if and only if $y=y_0$. According to \eqref{eqn:RelationsHam} the corresponding ground-state eigenvalue of this Hamiltonian is
\begin{equation}
    E_0 = \left.-\frac{(L-1)(\zeta^2+3)}{4}-2\lambda_0\right|_{y=y_0}=-\frac{(L-1)(\zeta^2+3)}{4}-\frac{(1+\zeta)^2}{2}.
  \end{equation}
  
  Second, we consider $0<\zeta<1$ and $(\zeta,y)\notin \mathbb D$. In this case, it follows from \eqref{eqn:Pi} that there is an integer $1\leqslant \alpha \leqslant 3$ such that
  \begin{equation}
H_{\textup{\tiny XYZ}}(\zeta,y) = \mathcal R^\alpha(-\pi)H_{\textup{\tiny XYZ}}(\zeta,\bar y)\mathcal R^\alpha(\pi)
  \end{equation}
  with $(\zeta,\bar y)\in \mathbb D$. Since $\mathcal R^\alpha(\pi)$ is a unitary operator, the two Hamiltonians in this equality have the same spectrum. Furthermore, writing $\Q=\Q(\zeta,y)$ to indicate the dependence of the supercharge on $\zeta$ and $y$, we have 
\begin{equation}
\Q (\zeta,y_\alpha)=\mathcal R^\alpha (-\pi) \Q (\zeta,y_0) \mathcal R^\alpha(\pi).
\end{equation}
The state $|\Psi^{\alpha}_L\rangle = \mathcal R^\alpha(-\pi)|\Psi_L\rangle$ is a supersymmetry singlet with respect to the supercharge $\Q(\zeta, y_\alpha)$.
We conclude from these two observations that the space of the ground states of $H_{\textup{\tiny XYZ}}(\zeta,y)$ is a space of supersymmetry singlets if and only if $\bar y = y_0$, and hence $y = y_\alpha$. This space is one-dimensional and spanned by the supersymmetry singlet $|\Psi^\alpha_L\rangle$.
\end{proof}

\section{The transfer-matrix eigenvalue}
\label{sec:8V}

The purpose of this section is to prove \cref{thm:MainTheorem2}. To this end, we recall a few elementary properties of the transfer matrix and its relation to the XYZ Hamiltonian in \cref{sec:TM8V}. In \cref{sec:TMSUSY}, we establish a commutation relation between the transfer matrix and the supercharge of the XYZ spin chain. We use this commutation relation in \cref{sec:TMEV} to evaluate the action of the transfer matrix on the supersymmetry singlet $|\Psi_L\rangle$. It allows us to establish the explicit formula for the eigenvalue $\Lambda_L$ and prove the theorem.

\subsection{The transfer matrix}
\label{sec:TM8V}

\paragraph{Transfer matrix.} The transfer matrices of the eight-vertex model on the square lattice can be constructed from its $R$-matrix. This $R$-matrix is an operator $R:V\otimes V \to V\otimes V$. In the canonical basis $|{\uparrow\uparrow}\rangle,|{\uparrow\downarrow}\rangle,|{\downarrow\uparrow}\rangle,|{\downarrow\downarrow}\rangle$ of $V\otimes V$ it reads
\begin{equation}
  R =
  \begin{pmatrix}
    a & 0 & 0 & d\\
    0 & b & c & 0\\
    0 & c & b & 0\\
    d & 0 & 0 & a
  \end{pmatrix},
\end{equation}
where $a,b,c,d$ are the vertex weights.
Let us consider the space $V_0 \otimes V^L = V_0 \otimes V_1 \otimes \cdots \otimes V_L$, where $V_0 = V$ is the so-called auxiliary space. We denote by $R_{ij}, \,0\leqslant i < j \leqslant L,$ the $R$-matrix acting non-trivially only on the factors $V_i$ and $V_j$ of the tensor product $V_0 \otimes V^L$. For convenience, we introduce the abbreviations 
\begin{equation}
  U_{0,[i,j]} =  R_{0j}R_{0j-1}\cdots R_{0i},\quad \bar U_{0,[i,j]} =  R_{0i}R_{0i+1}\cdots R_{0j},
  \label{eqn:DefU}
\end{equation}
for 
$1\leqslant i\leqslant j\leqslant L$. We also define $U_{0,[j+1,j]} = \bar U_{0,[j+1,j]}=1$ for $j=0,\dots,L$.

The transfer matrix of the eight-vertex model for a strip with $L$ vertical lines and open boundary conditions is an operator $\mathcal T:V^L\to V^L$ defined as 
\begin{equation}
  \mathcal T = \text{tr}_0\left(K_0^+ U_{0,[1,L]}K_0^-\bar U_{0,[1,L]}\right).
\end{equation}
The trace is taken with respect to the auxiliary space $V_0$. Moreover, $K_0^\pm$ are operators $K^\pm:V\to V$ acting on the auxiliary space. They are called $K$-matrices and encode the boundary conditions.

To investigate the properties of the transfer matrix, it is often convenient to use a parameterisation of the vertex weights in terms of Jacobi theta functions \cite{baxterbook}. We use
\begin{align}
\label{eqn:param8V}
\begin{split}
a(u)= \rho\,\vth{4}{2\eta}{p^2} \vth{4}{u}{p^2} \vth{1}{u+2\eta}{p^2},\\
b(u)= \rho\,\vth{4}{2\eta}{p^2} \vth{1}{u}{p^2} \vth{4}{u+2\eta}{p^2},\\
c(u)= \rho\,\vth{1}{2\eta}{p^2} \vth{4}{u}{p^2} \vth{4}{u+2\eta}{p^2},\\
d(u)= \rho\, \vth{1}{2\eta}{p^2} \vth{1}{u}{p^2} \vth{1}{u+2\eta}{p^2}.
\end{split}
\end{align}
Here, $\rho$ is a constant, $u$ the \textit{spectral parameter} and $\eta$ the \textit{crossing parameter}. With this parameterisation, the $R$-matrix of the eight-vertex model $R=R(u)$ obeys the Yang-Baxter equation: For all $u,v$, we have
\begin{equation}
  \label{eqn:YBE}
  R_{12}(u-v)R_{13}(u)R_{23}(v)= R_{23}(v)R_{13}(u)R_{12}(u-v).
\end{equation}
Furthermore, we choose
\begin{equation}
  K^- = K(u), \quad K^{+}=K(u+2\eta),
  \label{eqn:ChoiceForKpm}
\end{equation}
where the operator $K=K(u)$ is a solution of the \textit{reflection equation}: For all $u$ and $v$ it obeys
\begin{equation}
  \label{eqn:bYBE}
  R_{12}(u-v)K_1(u)R_{12}(u+v)K_2(v)=K_2(v)R_{12}(u+v)K_1(u)R_{12}(u-v),
\end{equation}
where $K_i(u)$ denotes the operator $K(u)$ acting on $V_i$.

Let us write $\mathcal T = \mathcal T(u)$ to stress the dependence of the transfer matrix on the spectral parameter. The choice \eqref{eqn:ChoiceForKpm} implies that transfer matrices with different spectral parameters commute: We have
\begin{equation}
  \mathcal T(u)\mathcal T(v) = \mathcal T(v)\mathcal T(u),
  \label{eqn:CommT}
\end{equation}
for all $u$ and $v$ \cite{sklyanin:88}. The proof of this commutation relation is based on the Yang-Baxter equation \eqref{eqn:YBE} and the reflection equation \eqref{eqn:bYBE}.

\paragraph{Transfer matrix and Hamiltonian.} We now recall the relation between the transfer matrix and the Hamiltonian of the XYZ spin chain \cite{sklyanin:88}. To this end, we use the $K$-matrix
\begin{equation}
  \label{eqn:KMatrixGeneral}
   K(u) = 1 + \sum_{\alpha=1}^3 \frac{\vartheta_1(u,p)}{\vartheta_{5-\alpha}(u,p)}\mu_\alpha\sigma^\alpha.
 \end{equation}
Here, $\mu_1,\mu_2,\mu_3$ are arbitrary complex numbers. Up to an overall factor, this $K$-matrix is the most general solution to the reflection equation \eqref{eqn:bYBE} of the eight-vertex model \cite{inami:94,hou:95,vega:94}.

\begin{proposition}
  \label{prop:TMLogDer}
 We have the logarithmic derivative
  \begin{equation}
  \label{eqn:TMLogDer}
  \mathcal T(0)^{-1}\mathcal T'(0) = L\left(\frac{a'(0)+c'(0)}{a(0)}\right) - \frac{2b'(0)}{a(0)}H_{\textup{\tiny XYZ}}.
  \end{equation}
  Here, $H_{\textup{\tiny XYZ}}$ is the Hamiltonian \eqref{eqn:XYZHamiltonian} of the open XYZ spin chain with the anisotropy parameters
  \begin{equation}
  \label{eqn:GeneralAP}
  J_1 = 1 + \frac{d'(0)}{b'(0)}, \quad J_2 = 1 - \frac{d'(0)}{b'(0)}, \quad J_3 = \frac{a'(0) - c'(0)}{b'(0)},
\end{equation}
and the boundary terms
\begin{equation}
    \label{eqn:hB}
  h_{\textup{\tiny B}}^\pm = -\frac{\vartheta_1(2\eta,p)}{2} \sum_{\alpha=1}^3 \frac{J_\alpha \mu_\alpha}{\vartheta_{5-\alpha} (2\eta, p)} \sigma^\alpha.
\end{equation}
\begin{proof}
  We have $R(0)=a(0)P$, where $P$ is the permutation operator on $V\otimes V$, $\text{tr}\,K(u)=2,\,\text{tr}\,K'(u)=0$ and $K(u=0)=1$. After a standard calculation, we obtain the logarithmic derivative
  \begin{equation}
    \label{eqn:TMLogDerIntermediate}
    \mathcal T(0)^{-1}\mathcal T'(0) = \frac{2}{a(0)}\sum_{j=1}^{L-1}\check R_{jj+1}'(0) + K_1'(0) + \frac{1}{a(0)}\text{tr}_0 \left(K_0(2\eta)\check R_{0L}'(0)\right),
  \end{equation}
  where $\check R(u) = PR(u)$. The $\check R$-matrix has the property
  \begin{equation}
    \check R'(0) = \frac{a'(0)+c'(0)}{2}+ \frac{b'(0)}{2} \sum_{\alpha=1}^3 J_\alpha \sigma^\alpha\otimes \sigma^\alpha ,
  \end{equation}
  where the anisotropy parameters $J_1,J_2,J_3$ are given by \eqref{eqn:GeneralAP}. The insertion of this expression into \eqref{eqn:TMLogDerIntermediate} leads to \eqref{eqn:TMLogDer} with the boundary terms
  \begin{equation}
    \label{eqn:hBFromK}
    (h_{\text{\tiny B}}^-)_1 = -\frac{a(0)}{2b'(0)}K_1'(0),\quad (h_{\text{\tiny B}}^+)_L = -\frac{1}{4}\text{tr}_0\left(K_0(2\eta)\sum_{\alpha=1}^3 J_\alpha \sigma^\alpha_0 \otimes \sigma^\alpha_L  \right).
   \end{equation}
  The evaluation of the partial trace for $h_{\text{\tiny B}}^+$ is straightforward and leads to the expression given in \eqref{eqn:hB}. To see that $h_{\text{\tiny B}}^-$ is given by the same expression, we first note that \eqref{eqn:param8V} and \eqref{eqn:GeneralAP} lead to 
\begin{equation}\label{eqn:JalphaJ}
  J_\alpha = J 
  \frac{\vartheta_{5-\alpha}(2 \eta, p)}{ \vartheta_{5-\alpha}(0, p) }  \quad\text{for}\quad \alpha=1,2,3,
  \end{equation}
 where $J= (\vartheta_4(0,p^2)/\vartheta_4(2\eta,p^2))^2$. Hence, we obtain
\begin{equation}
   h_{\text{\tiny B}}^- =  -\frac{a(0)\vartheta_1'(0,p)}{2 b'(0)} \sum_{\alpha=1}^3 \frac{\mu_\alpha}{\vartheta_{5-\alpha} (0, p)} \sigma^\alpha 
    =  -\frac{a(0)\vartheta_1'(0,p)}{2 J b'(0)}   \sum_{\alpha=1}^3 \frac{J_\alpha \mu_\alpha}{\vartheta_{5-\alpha} (2 \eta, p)} \sigma^\alpha.
\end{equation}
 It remains to be shown that $a(0) \vartheta_1'(0,p)/(J b'(0))= \vartheta_1(2\eta,p)$, which can be accomplished with the help of identities for Jacobi theta functions \cite{gradshteyn:07}.
\end{proof}
\end{proposition}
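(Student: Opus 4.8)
The plan is to carry out the standard Sklyanin-type derivation of the XYZ Hamiltonian as a logarithmic derivative of the double-row transfer matrix, exploiting two elementary facts at the special point $u=0$: the $R$-matrix degenerates, $R(0)=a(0)P$ with $P$ the permutation operator on $V\otimes V$, and the $K$-matrix \eqref{eqn:KMatrixGeneral} satisfies $K(0)=1$, $\text{tr}\,K(u)=2$ and $\text{tr}\,K'(u)=0$ (the latter two because $K-1$ and $K'$ are linear combinations of the traceless Pauli matrices). First I would evaluate $\mathcal T(0)$ itself: inserting $U_{0,[1,L]}(0)=a(0)^LP_{0L}\cdots P_{01}$, $\bar U_{0,[1,L]}(0)=a(0)^LP_{01}\cdots P_{0L}$ and $K^-(0)=1$, the string $P_{0L}\cdots P_{01}P_{01}\cdots P_{0L}$ telescopes to the identity, so $\mathcal T(0)=a(0)^{2L}\,\text{tr}_0 K_0(2\eta)=2a(0)^{2L}$ is a nonzero scalar. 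Hence $\mathcal T(0)^{-1}\mathcal T'(0)=\tfrac1{2a(0)^{2L}}\mathcal T'(0)$, and the whole problem reduces to computing $\mathcal T'(0)$.

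Next I would differentiate $\mathcal T(u)=\text{tr}_0\big(K_0^+(u)\,U_{0,[1,L]}(u)\,K_0^-(u)\,\bar U_{0,[1,L]}(u)\big)$ by the Leibniz rule, producing four groups of terms. (i) The term with $K^+$ differentiated: everything between the monodromies again collapses to $a(0)^{2L}$, so this term is $a(0)^{2L}\,\text{tr}_0 K_0'(2\eta)=0$. (ii) The term with $K^-$ differentiated: pushing the innermost $P_{01}$ through $K_0^{-\prime}(0)=K_0'(0)$ converts it to $K_1'(0)$, and the remaining permutations telescope away, leaving $2a(0)^{2L}K_1'(0)$. (iii)+(iv) The terms with one $R_{0k}$ inside $U_{0,[1,L]}$ resp. $\bar U_{0,[1,L]}$ differentiated: here I use $R_{0k}'(0)P_{0k}=P_{0k}R_{0k}'(0)=\check R_{0k}'(0)$ together with the invariance of $\check R'(0)$ under the exchange of its two spaces. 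After telescoping the surrounding permutation operators, the differentiated factor becomes $\check R_{k,k+1}'(0)$ for $1\leqslant k\leqslant L-1$ (which, being independent of the auxiliary space, comes out of $\text{tr}_0$ up to the factor $\text{tr}_0 K_0(2\eta)=2$) and $\check R_{0L}'(0)$ for $k=L$; the two monodromies contribute identically. Collecting all terms and dividing by $\mathcal T(0)$ reproduces the intermediate identity
\[
\mathcal T(0)^{-1}\mathcal T'(0)=\frac{2}{a(0)}\sum_{j=1}^{L-1}\check R_{j,j+1}'(0)+K_1'(0)+\frac1{a(0)}\,\text{tr}_0\big(K_0(2\eta)\,\check R_{0L}'(0)\big).
\]

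It then remains to rewrite the three contributions. From the explicit block form of $\check R(u)=PR(u)$ one reads off directly $\check R'(0)=\tfrac{a'(0)+c'(0)}{2}+\tfrac{b'(0)}{2}\sum_\alpha J_\alpha\,\sigma^\alpha\otimes\sigma^\alpha$ with the $J_\alpha$ of \eqref{eqn:GeneralAP}, so the bulk sum yields $\tfrac{(L-1)(a'(0)+c'(0))}{a(0)}$ plus $-\tfrac{2b'(0)}{a(0)}$ times the nearest-neighbour part of $H_{\textup{\tiny XYZ}}$. For the end $L$, using $\text{tr}_0 K_0(2\eta)=2$ and $\text{tr}_0\big(K_0(2\eta)\sigma_0^\alpha\big)=2\,\vartheta_1(2\eta,p)\mu_\alpha/\vartheta_{5-\alpha}(2\eta,p)$ from \eqref{eqn:KMatrixGeneral}, the $\check R_{0L}'(0)$ term splits into $\tfrac{a'(0)+c'(0)}{a(0)}$ — completing the constant to $L\,\tfrac{a'(0)+c'(0)}{a(0)}$ — plus $-\tfrac{2b'(0)}{a(0)}(h_{\textup{\tiny B}}^+)_L$ with $h_{\textup{\tiny B}}^+$ as in \eqref{eqn:hB}. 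For the end $1$, $K_1'(0)=\vartheta_1'(0,p)\sum_\alpha \mu_\alpha\sigma_1^\alpha/\vartheta_{5-\alpha}(0,p)$, and to bring this to the same shape one uses $J_\alpha=J\,\vartheta_{5-\alpha}(2\eta,p)/\vartheta_{5-\alpha}(0,p)$ with $J=(\vartheta_4(0,p^2)/\vartheta_4(2\eta,p^2))^2$, reducing the matching to the single theta identity $a(0)\,\vartheta_1'(0,p)=J\,b'(0)\,\vartheta_1(2\eta,p)$.

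The computation is essentially routine, and the only real obstacles are two bookkeeping points. The first is keeping track of the telescoping of permutation operators in steps (ii)–(iv): one must verify that every interaction produced beyond the nearest-neighbour bulk genuinely lives only at the two ends of the chain, which hinges on $R_{0k}'(0)P_{0k}=\check R_{0k}'(0)$ and the exchange symmetry of $\check R'(0)$. The second is the Jacobi-theta bookkeeping of the last step — computing $a'(0),b'(0),c'(0),d'(0)$ from the parameterisation \eqref{eqn:param8V}, establishing the proportionality $J_\alpha\propto\vartheta_{5-\alpha}(2\eta,p)/\vartheta_{5-\alpha}(0,p)$, and verifying $a(0)\vartheta_1'(0,p)/(Jb'(0))=\vartheta_1(2\eta,p)$ — which I would dispatch with standard theta-function identities.
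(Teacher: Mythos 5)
Your proposal is correct and follows essentially the same route as the paper's proof: the same degeneration facts at $u=0$, the same intermediate formula for $\mathcal T(0)^{-1}\mathcal T'(0)$, the same expression for $\check R'(0)$, and the same handling of the two boundary terms via the partial trace and the relation $J_\alpha = J\,\vartheta_{5-\alpha}(2\eta,p)/\vartheta_{5-\alpha}(0,p)$. You simply fill in the telescoping-of-permutations bookkeeping that the paper relegates to a ``standard calculation,'' and that bookkeeping is carried out correctly.
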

An immediate consequence of \eqref{eqn:CommT}, \eqref{eqn:TMLogDer} and $\mathcal T(0) = 2a(0)^{2L}$ is:
\begin{corollary}
  \label{corr:HTComm}
  We have $[H_{\textup{\tiny XYZ}},\mathcal T(u)]=0$ where the XYZ Hamiltonian has the anisotropy parameters \eqref{eqn:GeneralAP} and boundary terms \eqref{eqn:hB}.
\end{corollary}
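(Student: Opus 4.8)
The plan is to obtain the corollary as a direct consequence of three facts already in hand: the commutativity \eqref{eqn:CommT} of transfer matrices with distinct spectral parameters, the logarithmic-derivative formula \eqref{eqn:TMLogDer} of \cref{prop:TMLogDer}, and the evaluation of $\mathcal T(0)$ as a multiple of the identity.

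First I would differentiate the relation $\mathcal T(u)\mathcal T(v)=\mathcal T(v)\mathcal T(u)$ with respect to $v$ and specialise to $v=0$. The parameterisation \eqref{eqn:param8V} makes the entries of $R(u)$, and hence of $\mathcal T(v)$, analytic in $v$ near $v=0$, so this is legitimate and yields $[\mathcal T(u),\mathcal T'(0)]=0$ for all $u$. Next I would use $\mathcal T(0)=2a(0)^{2L}\cdot 1$: this follows from $R(0)=a(0)P$, from $K^-=K(0)=1$, from the telescoping of permutation operators $U_{0,[1,L]}(0)\,\bar U_{0,[1,L]}(0)=a(0)^{2L}$, and from $\text{tr}\,K=2$. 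Since $a(0)\neq 0$ for generic parameters (again from \eqref{eqn:param8V}), $\mathcal T(0)$ is an invertible scalar operator, so it commutes with everything and $[\mathcal T(u),\mathcal T(0)^{-1}\mathcal T'(0)]=0$.

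Finally I would substitute \eqref{eqn:TMLogDer}. Its right-hand side is a scalar multiple of the identity minus $(2b'(0)/a(0))H_{\textup{\tiny XYZ}}$; the scalar term drops out of the commutator, leaving
\begin{equation}
  \frac{2b'(0)}{a(0)}\,[\mathcal T(u),H_{\textup{\tiny XYZ}}]=0 .
\end{equation}
Because \eqref{eqn:param8V} gives $b'(0)=\rho\,\vartheta_4(2\eta,p^2)^2\,\vartheta_1'(0,p^2)\neq 0$ (and $a(0)\neq 0$), this forces $[\mathcal T(u),H_{\textup{\tiny XYZ}}]=0$, with $H_{\textup{\tiny XYZ}}$ carrying the anisotropy parameters \eqref{eqn:GeneralAP} and boundary terms \eqref{eqn:hB} read off from \cref{prop:TMLogDer}. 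I do not anticipate a genuine obstacle here: all the substance has already been invested in \eqref{eqn:CommT} --- which rests on the Yang--Baxter equation \eqref{eqn:YBE} and the reflection equation \eqref{eqn:bYBE} via the choice \eqref{eqn:ChoiceForKpm} --- and in the computation of the logarithmic derivative in \cref{prop:TMLogDer}; what remains is only the bookkeeping above, with the single point worth a line of care being the non-vanishing of $a(0)$ and $b'(0)$, which the theta-function parameterisation supplies at once.
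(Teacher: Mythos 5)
Your proposal is correct and is exactly the argument the paper has in mind: it presents the corollary as an ``immediate consequence'' of the commutativity \eqref{eqn:CommT}, the logarithmic-derivative formula \eqref{eqn:TMLogDer}, and the scalar value $\mathcal T(0)=2a(0)^{2L}$, and you have simply spelled out the bookkeeping (differentiate \eqref{eqn:CommT} at $v=0$, invert the scalar $\mathcal T(0)$, and divide by the nonzero constant $2b'(0)/a(0)$). Nothing is missing.
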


\paragraph{Supersymmetric eight-vertex model.} We now consider the crossing parameter
\begin{equation}
  \eta = \frac{\pi}{3},
\end{equation}
real $\rho,u$, and $0<p<1$. For this choice, the vertex weights $a,b,c,d$ are real and obey the relation \eqref{eqn:CL8V} that defines the supersymmetric eight-vertex model. The spin chain's anisotropy parameters \eqref{eqn:GeneralAP} coincide with the expressions given in \eqref{eqn:SUSYXYZHamiltonian}, where $0<\zeta<1$ is defined by \eqref{eqn:Zabcd}.
 
 It follows from \cref{corr:HTComm} that the transfer matrix of the eight-vertex model commutes with the Hamiltonian \eqref{eqn:SUSYXYZHamiltonian} provided that the parameters of the $K$-matrix are given by 
\begin{equation}
  \label{eqn:DefMu}
  \mu_1 = \frac{\vartheta_4(\eta,p)}{\vartheta_1(\eta,p)}\frac{2\mathop{\textup{Re}}y}{1+|y|^2}, \quad \mu_2 = \frac{\vartheta_3(\eta,p)}{\vartheta_1(\eta,p)}\frac{2\mathop{\textup{Im}}y}{1+|y|^2}, \quad \mu_3 = \frac{\vartheta_2(\eta,p)}{\vartheta_1(\eta,p)}\frac{1-|y|^2}{1+|y|^2}.
\end{equation}
It is possible to express the corresponding $K$-matrices $K^\pm$ in terms of the vertex weights and the parameter $y$, by means of identities for Jacobi theta functions. We anticipated their expressions in the introduction:
\begin{proposition}
\label{prop:KpmWeights}
For the choice \eqref{eqn:DefMu} the K-matrices $K^\pm$ are given by \eqref{eqn:DefKpm}.
\end{proposition}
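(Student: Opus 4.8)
The plan is to prove \cref{prop:KpmWeights} by an explicit theta-function computation: substitute the parameters \eqref{eqn:DefMu} into the general $K$-matrix \eqref{eqn:KMatrixGeneral}, and identify the resulting coefficients of $\sigma^1,\sigma^2,\sigma^3$ with those of \eqref{eqn:DefKpm} after re-expressing the quadratic combinations of vertex weights through the parameterisation \eqref{eqn:param8V}.

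First I would insert \eqref{eqn:DefMu} into \eqref{eqn:KMatrixGeneral}. For $K^-=K(u)$ the factor $\vartheta_{5-\alpha}(\eta,p)$ appearing in $\mu_\alpha$ cancels part of the denominator $\vartheta_{5-\alpha}(u,p)$ in \eqref{eqn:KMatrixGeneral}, so that the coefficient of $\sigma^\alpha$ in $K^-$ becomes $\frac{\vartheta_1(u,p)\,\vartheta_{5-\alpha}(\eta,p)}{\vartheta_1(\eta,p)\,\vartheta_{5-\alpha}(u,p)}$ times the respective $y$-dependent factor $\frac{2\,\mathrm{Re}\,y}{1+|y|^2}$, $\frac{2\,\mathrm{Im}\,y}{1+|y|^2}$, $\frac{1-|y|^2}{1+|y|^2}$; for $K^+=K(u+2\eta)$ the same holds with $u$ replaced by $u+2\eta$. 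Comparing with \eqref{eqn:DefKpm}, the $y$-factors already match, so the statement reduces to six scalar identities between ratios of theta functions in the nome $p$ at $u$ (resp.\ $u+2\eta$) and $\eta$ on one side, and ratios of the quadratic combinations $ab\pm cd$, $ac\pm bd$, $b^2-d^2$, $2ab+b^2+d^2$ (and the analogous combinations of the weights entering $K^+$) on the other.

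Next I would factorise those quadratic combinations using \eqref{eqn:param8V}. With $\eta=\pi/3$ one finds that each of $ab\pm cd$, $ac\pm bd$, $b^2-d^2$ and $2ab+b^2+d^2$ is a product of a prefactor depending only on $\eta$ (through theta functions at $2\eta$ in the nome $p^2$) and theta functions of $u$ and $u+2\eta$ in the nome $p^2$. The common $(u+2\eta)$-factors cancel between numerator and denominator, so each of the six identities becomes a relation of the schematic form $\vartheta_1(u,p)/\vartheta_{5-\alpha}(u,p)=c_\alpha(p)\cdot\bigl(\text{ratio of quadratic theta expressions in }u,\ \text{nome }p^2\bigr)$ with a $u$-independent constant $c_\alpha(p)$, whose value is pinned down by specialising to $u=\eta$ and using $2\eta=\pi-\eta$ to relate $\vartheta_j(2\eta,\cdot)$ to $\vartheta_j(\eta,\cdot)$. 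Finally, the remaining identities follow from the standard Landen-type relations between the nomes $p$ and $p^2$ (see \cite{whittaker:27,gradshteyn:07}), which express the quadratic theta expressions in $u$ (nome $p^2$) through single theta functions in $u$ (nome $p$); substituting these turns each identity into a tautology. The computation for $K^+$ is obtained from that for $K^-$ by the substitution $u\mapsto u+2\eta$.

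I expect the main obstacle to be bookkeeping rather than conceptual: one must juggle the two nomes ($p$ for the $K$-matrix, $p^2$ for the weights) and select the precise Landen identities needed. The least symmetric case is the $\sigma^3$ component, where the denominator $2ab+b^2+d^2$ — not an obviously factorisable combination — has to be recognised as a perfect theta product; this succeeds precisely because $\eta=\pi/3$, equivalently because the weights obey \eqref{eqn:CL8V}.
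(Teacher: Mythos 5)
The paper gives no explicit proof of this proposition, saying only that the $K$-matrices "can be expressed in terms of the vertex weights and the parameter $y$ by means of identities for Jacobi theta functions"; your plan — substitute \eqref{eqn:DefMu} into \eqref{eqn:KMatrixGeneral}, factorise the quadratic weight combinations via \eqref{eqn:param8V} so that the $(u+2\eta)$-dependent pieces cancel, and then invoke Landen-type identities between the nomes $p$ and $p^2$ — is exactly this verification. One small caution: at $u=\eta$ (with $\eta=\pi/3$) both $a$ and $d$ vanish because $\vartheta_1(u+2\eta,p^2)=\vartheta_1(\pi,p^2)=0$, so the ratios like $(ab+cd)/(ac+bd)$ are formally $0/0$ there and your normalisation check must be performed only after the common $\vartheta_1(u+2\eta,p^2)\vartheta_4(u+2\eta,p^2)$ factors have been cancelled.
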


In the next proposition, we consider the transfer matrix of the supersymmetric eight-vertex model with these $K$-matrices and with $y$ being a solution of \eqref{eqn:YEqn}. For this case, we show that if $\Lambda_L$, defined \eqref{eqn:DefLambda},  is a transfer-matrix eigenvalue, then its eigenspace is contained in the space of the supersymmetry singlet of the XYZ Hamiltonian.
\begin{proposition}
  \label{prop:LambdaLSinglets}
  Let $L\geqslant 1$, $0<\zeta<1$ and $y$ be a solution of \eqref{eqn:YEqn}. If $|\psi\rangle \in V^L$ obeys 
  \begin{equation}
    \mathcal T|\psi\rangle = \Lambda_L|\psi\rangle,
  \end{equation} 
  where $\Lambda_L$ is given in \eqref{eqn:DefLambda}, then $|\psi\rangle$ is a supersymmetry singlet of the $XYZ$ Hamiltonian \eqref{eqn:SUSYXYZHamiltonian} with $\zeta$ given by \eqref{eqn:Zabcd}.
\begin{proof}
  We use the theta-function parameterisation of the eight-vertex model. It follows from \eqref{eqn:TMLogDer} that $|\psi\rangle$ is an eigenstate of the XYZ Hamiltonian \eqref{eqn:SUSYXYZHamiltonian} for the eigenvalue
  \begin{equation}
    E = -L\left(\frac{a'(0)-c'(0)}{2b'(0)}+1\right) -\frac{a(0)}{4b'(0)}\text{tr}\left(K'(0)K(2\eta)\right).
  \end{equation}
  In the first term on the right-hand side of this equality, we recognise the expression \eqref{eqn:GeneralAP} for the anisotropy parameter $J_3=\frac{1}{2}(\zeta^2-1)$. %
  To compute the second term, we use the parameterisation  \eqref{eqn:KMatrixGeneral} of the $K$-matrix in terms of the parameters $\mu_1,\mu_2, \mu_3$ given by \eqref{eqn:DefMu}, as well as the expression \eqref{eqn:JalphaJ}  for the anisotropy parameters. We have
  \begin{equation}
    \frac{a(0)}{4b'(0)}\text{tr}\left(K'(0)K(2\eta)\right) = \frac{1}{2} \sum_{\alpha=1}^3 J_\alpha  \frac{\vartheta_1^2(2\eta, p)}{\vartheta^2_{5-\alpha}(2\eta,p)} \mu_\alpha^2  = 2\sum_{\alpha=1}^3 \frac{\lambda_\alpha^2}{J_\alpha}.
  \end{equation}
The constants $\lambda_1,\lambda_2,\lambda_3$ are given in \eqref{eqn:SUSYXYZHamiltonian}. We use their explicit expression and the relation \eqref{eqn:YEqn} between $\zeta$ and $y$ to compute $\sum_{\alpha=1}^3\lambda_\alpha^2/J_\alpha = (\zeta^2+4\zeta-1)/8$. This yields the eigenvalue
    \begin{equation}
      E = - \frac{L(3+\zeta^2)}{4} - \frac{\zeta^2+4\zeta-1}{4}.
  \end{equation}
  We conclude that $E$ is the ground-state eigenvalue $E_0$, defined in \eqref{eqn:E0}. It follows from \cref{thm:MainTheorem1} that $|\psi\rangle$ is a supersymmetry singlet.
  \end{proof} 
\end{proposition}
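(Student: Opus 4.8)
The goal is to prove \cref{prop:LambdaLSinglets}: if $|\psi\rangle$ is a transfer-matrix eigenvector with eigenvalue $\Lambda_L = (a+b)^{2L}\,\textup{tr}(K^+K^-)$, then $|\psi\rangle$ is a supersymmetry singlet.

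\paragraph{Plan of the proof.} The strategy is to exploit the commutation relation of \cref{corr:HTComm}, which tells us that $\mathcal T$ and $H_{\textup{\tiny XYZ}}$ are simultaneously diagonalisable (in fact, since we only need one direction, it suffices that any $\mathcal T$-eigenspace is $H_{\textup{\tiny XYZ}}$-invariant). The plan is therefore: first, use the logarithmic derivative formula \eqref{eqn:TMLogDer} at $u=0$ to convert the statement ``$\mathcal T|\psi\rangle = \Lambda_L|\psi\rangle$'' into the statement ``$H_{\textup{\tiny XYZ}}|\psi\rangle = E|\psi\rangle$'' for a specific value of $E$ read off from $\Lambda_L$. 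Second, compute $E$ explicitly and recognise it as the supersymmetric ground-state energy $E_0$ of \eqref{eqn:E0}. Third, invoke \cref{thm:MainTheorem1}: since $y$ solves \eqref{eqn:YEqn}, the $E_0$-eigenspace of $H_{\textup{\tiny XYZ}}$ is exactly the space of supersymmetry singlets, so $|\psi\rangle$ is one.

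\paragraph{Carrying out the steps.} For the first step, I would work in the theta-function parameterisation \eqref{eqn:param8V}, so that $\mathcal T=\mathcal T(u)$ and \cref{prop:TMLogDer} applies. Since all transfer matrices commute by \eqref{eqn:CommT}, the eigenspace $\{|\psi\rangle : \mathcal T|\psi\rangle=\Lambda_L|\psi\rangle\}$ (more precisely, one needs the full $\mathcal T(u)$-invariance, which follows because $|\psi\rangle$ lies in a joint eigenspace as $u$ varies, or simply because $\mathcal T(0)^{-1}\mathcal T'(0)$ commutes with $\mathcal T$) is $H_{\textup{\tiny XYZ}}$-invariant; differentiating the eigenvalue relation and using $\mathcal T(0)=2a(0)^{2L}$ turns $|\psi\rangle$ into an $H_{\textup{\tiny XYZ}}$-eigenvector. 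Reading \eqref{eqn:TMLogDer} backwards, the eigenvalue is
\[
  E = -L\left(\frac{a'(0)-c'(0)}{2b'(0)}+1\right) - \frac{a(0)}{4b'(0)}\,\textup{tr}\!\left(K'(0)K(2\eta)\right),
\]
using $\Lambda_L' / \Lambda_L$ evaluated via \eqref{eqn:DefLambda} together with \eqref{eqn:GeneralAP}; here one identifies $(a'(0)-c'(0))/b'(0) = J_3 = \tfrac12(\zeta^2-1)$. For the second step, I would insert the explicit $K$-matrix \eqref{eqn:KMatrixGeneral} with the parameters $\mu_\alpha$ of \eqref{eqn:DefMu}, use the relation \eqref{eqn:JalphaJ} between the $J_\alpha$ and the theta functions at $2\eta$, and reduce $\tfrac{a(0)}{4b'(0)}\textup{tr}(K'(0)K(2\eta))$ to the sum $2\sum_\alpha \lambda_\alpha^2/J_\alpha$ in terms of the boundary-field constants $\lambda_\alpha$ of \eqref{eqn:hBPM2}. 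The key input is then the evaluation $\sum_{\alpha=1}^3 \lambda_\alpha^2/J_\alpha = (\zeta^2+4\zeta-1)/8$, which uses the constraint \eqref{eqn:YEqn} on $y$; this gives $E = -L(3+\zeta^2)/4 - (\zeta^2+4\zeta-1)/4$. Finally, a short rearrangement confirms $E = E_0$ with $E_0$ as in \eqref{eqn:E0}, and then \cref{thm:MainTheorem1} finishes the argument.

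\paragraph{Main obstacle.} The conceptual content is light; the work is entirely in the trigonometric/elliptic bookkeeping. The step I expect to be most error-prone is the explicit evaluation of $\sum_{\alpha}\lambda_\alpha^2/J_\alpha$: it requires combining the three boundary-field expressions \eqref{eqn:hBPM2}, which involve $\textup{Re}\,y$, $\textup{Im}\,y$ and $|y|^2$, with the anisotropy parameters \eqref{eqn:APCL}, and then simplifying the resulting rational function of $y$ and $\bar y$ using precisely the relation $\zeta(1+y^4)-(3-\zeta^2)y^2=0$ (and, since $y$ is real here, $\bar y = y$). One has to be careful that the simplification really does collapse to a constant in $\zeta$ alone — this is exactly where the hypothesis that $y$ solves \eqref{eqn:YEqn} is used, and it is the only place the hypothesis enters. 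A secondary, purely mechanical obstacle is verifying the theta-function identity $a(0)\vartheta_1'(0,p)/(J\,b'(0)) = \vartheta_1(2\eta,p)$ needed to put $h_{\textup{\tiny B}}^-$ and $h_{\textup{\tiny B}}^+$ on the same footing, but that has effectively been dealt with already in the proof of \cref{prop:TMLogDer}.
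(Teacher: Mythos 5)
Your proof is correct and follows essentially the same route as the paper's: differentiate the transfer-matrix eigenvalue relation at $u=0$ via \eqref{eqn:TMLogDer}, identify the resulting $H_{\textup{\tiny XYZ}}$-eigenvalue, reduce the boundary contribution to $2\sum_\alpha \lambda_\alpha^2/J_\alpha$, use \eqref{eqn:YEqn} to evaluate this to $(\zeta^2+4\zeta-1)/4$, recognise $E=E_0$, and invoke \cref{thm:MainTheorem1}. The only difference is that you spell out the remark about joint $T(u)$-invariance needed to differentiate the eigenvalue equation, which the paper leaves implicit.
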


\paragraph{Transformations of the transfer matrix.} The transfer matrix of the supersymmetric eight-vertex model with the $K$-matrices \eqref{eqn:DefKpm} has a simple transformation behaviour under certain spin rotations. Let us write $\mathcal T = \mathcal T(a,b,c,d;y)$, to stress the dependence of the transfer matrix on the vertex weights $a,b,c,d$ and the parameter $y$. We have
\begin{align}
  \label{eqn:TPi}
  \begin{split}
  \mathcal R^{1}(\pi)\mathcal T(a,b,c,d;y)\mathcal R^{1}(-\pi) &= \mathcal T(a,b,c,d;y^{-1}),\\
    \mathcal R^{2}(\pi)\mathcal T(a,b,c,d;y)\mathcal R^{2}(-\pi) &= \mathcal T(a,b,c,d;-y^{-1}),\\
  \mathcal R^{3}(\pi)\mathcal T(a,b,c,d;y)\mathcal R^{3}(-\pi) &= \mathcal T(a,b,c,d;-y),
  \end{split}
\end{align}
which is similar to \eqref{eqn:Pi}. (It is possible to work out the transformation behaviour under rotations by the angle $\theta=\pi/2$, but we will not use it.) We note that since these transformations are unitary, the transfer matrices on the right-hand side of these equalities have the same spectrum as $\mathcal T(a,b,c,d;y)$.

\subsection{The transfer matrix and the supercharges}
\label{sec:TMSUSY}
In this section, we establish a commutation relation between the transfer matrix of the supersymmetric eight-vertex model with open boundary conditions and the supercharge of the supersymmetric open XYZ spin chain. To this end, we first establish local relations between the $R$-matrix of the eight-vertex model, the $K$-matrices, the local supercharge of the XYZ Hamiltonian, and certain auxiliary operators. Second, we combine these local relations with the definition of the transfer matrix to obtain the commutation relation.

\paragraph{Local relations.}%
We follow the strategy of \cite{hagendorf:18} and define two operators $A^\uparrow,A^\downarrow:V \to V\otimes V$. Their action on the basis states $|{\uparrow}\rangle$ and $|{\downarrow}\rangle$ is given by
\begin{align}
  \begin{split}
  A^\uu \ket{\uu} &= d \left( -\frac{c}{a}\ket{\uu\dd} + \ket{\dd\uu}\right), \quad A^\uu\ket{\dd} = c\left( \ket{\uu\uu} - \frac{d}{b}\ket{\dd\dd}\right),\\
  A^\downarrow \ket{\uu} &= c\left( \ket{\dd\dd} - \frac{d}{b}\ket{\uu\uu}\right), \quad A^\downarrow\ket{\dd} = d \left( -\frac{c}{a}\ket{\dd\uu} + \ket{\uu\dd}\right).
   \end{split}
\end{align}
We also define an operator $A_\phi:V \to V\otimes V$ through the following action on the basis states:
\begin{align}
\begin{split}
A_\phi\ket\uu &= (2a+b)\phi_\uu\ket{\uu\uu} + (a+2b)\phi_\dd  \ket{\uu\dd}+ c\phi_\dd \ket{\dd\uu} + d \phi_\dd \ket{\dd\dd},\\
A_\phi\ket\dd &= (2a+b)\phi_\dd\ket{\dd\dd} + (a+2b)\phi_\uu  \ket{\dd\uu}+ c\phi_\uu \ket{\uu\dd} + d \phi_\uu \ket{\uu\uu}.
\end{split}
\end{align}
Here, $\phi_\uu = y(y^2\zeta -1)$ and $\phi_\dd = \zeta-y^2$ are the components of the state $|\phi\rangle$ defined in \eqref{eqn:XYZLocalSupercharge}. We use the operators $A^\uparrow,A^{\downarrow}$, and $A_\phi$ to define the linear combination
\begin{equation}
  \label{eqn:DefA}
  A = (1-y^2\zeta)A^\uparrow +y(y^2-\zeta) A^\downarrow + A_\phi.
\end{equation}

We also need an action of $A,A^\uparrow,A^\downarrow$ and $A_\phi$ on the space $V_0\otimes V^L$.
 To this end, we introduce the following notation: For each operator $B:V\to V\otimes V$ we define $B_0^j:V_0\otimes V^L\to V_0\otimes V^{L+1}$, $j=1,\dots,L+1$ by
\begin{equation}
  B_0^1 = B \otimes \underset{L}{\underbrace{1 \otimes \cdots \otimes 1}}
\end{equation}
and, recursively,
\begin{equation}
B^{j+1}_0 = P_{jj+1} B_0^{j},
\end{equation}
for each $j = 1, \dots,L$. Here, $P_{jj+1}$, $j=1,\dots,L$ denotes the permutation operator acting on the factors $V_j$ and $V_{j+1}$ of the tensor product $V_0\otimes V^{L+1}.$

In the next two lemmas, we establish several relations between the $R$-matrix of the supersymmetric eight-vertex model, the $K$-matrices $K^\pm$ defined in \eqref{eqn:DefKpm}, the local supercharge $\q$ and the operator $A$.
\begin{lemma}
\label{lem:qRA}
For each $i=1,\dots,L$ we have
\begin{subequations}\label{eqn:RRq}
\begin{align}
\label{eqn:RRq1}
R_{0j}R_{0j+1} (1\otimes \q_j)+(a+b) (1\otimes \q_j) R_{0j} &= R_{0j} A_0^{j+1}+A_0^jR_{0j},\\
\label{eqn:RRq2}
R_{0j+1}R_{0j} (1\otimes \q_j) +(a+b) (1\otimes \q_j) R_{0j} &= R_{0j+1} A_0^{j}+A_0^{j+1}R_{0j},
\end{align}
\end{subequations}
if and only if \eqref{eqn:CL8V} holds.
\begin{proof}
 The multiplication of \eqref{eqn:RRq1} from the left by $P_{jj+1}$ yields \eqref{eqn:RRq2} by virtue of $P_{jj+1}\q_j = \q_j$. Hence, it is sufficient to prove \eqref{eqn:RRq1}. 
 
 The key observation is that each of the relations
 \begin{align}
 \begin{split}
   & R_{01}R_{02} (1\otimes 
   (\q^\uparrow)_1) +(a+b)  (1\otimes (\q^\uparrow)_1) R_{01} = R_{01} (A^\uparrow)_0^{2}+(A^\uparrow)_0^1R_{01},\\
    & R_{01}R_{02}  (1\otimes (\q^\downarrow)_1)+(a+b) (1\otimes(\q^\downarrow)_1
     )  R_{01} = R_{01} (A^\downarrow)_0^{2}+(A^\downarrow)_0^1R_{01},\\
    & R_{01}R_{02} (1\otimes (\q_\phi)_1) +(a+b) (1\otimes (\q_\phi)_1) R_{01} = R_{01} (A_\phi)_0^{2}+(A_\phi)_0^1R_{01},
 \end{split}
 \end{align}
 holds if (and only if) the vertex weights obey \eqref{eqn:CL8V}, as follows from a straightforward calculation. Using the definition \eqref{eqn:DefA}, we obtain \eqref{eqn:RRq1} for $j=1$. Its generalisation to $j=2,\dots, L$ is readily obtained through the conjugation with appropriate products of permutation operators.
\end{proof}
\end{lemma}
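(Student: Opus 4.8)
The plan is to reduce the general statement to the single-site case and then to the three building-block relations for $\q^\uparrow$, $\q^\downarrow$, $\q_\phi$, as the lemma itself indicates. First, I would dispose of the equivalence between \eqref{eqn:RRq1} and \eqref{eqn:RRq2}: multiplying \eqref{eqn:RRq1} on the left by $P_{jj+1}$ and using $P_{jj+1}\q_j=\q_j$ (since $\q_j$ acts only on $V_j$, which $P_{jj+1}$ does not touch in the slot that $\q_j$ reads), together with $P_{jj+1}R_{0j}R_{0j+1}=R_{0j+1}R_{0j}P_{jj+1}$ — or more simply by tracking how $P_{jj+1}$ commutes past the operators on the right-hand side using the definition $A_0^{j+1}=P_{jj+1}A_0^{j}$ — turns the first relation into the second. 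So it suffices to prove \eqref{eqn:RRq1}, and by conjugating with the appropriate product of permutation operators $P_{12}P_{23}\cdots$ it suffices to prove it for $j=1$, i.e.\ on $V_0\otimes V_1\otimes V_2$.

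Next, by linearity and the definition \eqref{eqn:DefA} of $A$, it is enough to establish the three displayed relations for $\q^\uparrow$, $\q^\downarrow$, and $\q_\phi$ separately, with $A^\uparrow$, $A^\downarrow$, $A_\phi$ respectively on the right. Each of these is a polynomial identity between $8\times 8$ matrices (operators on $\mathbb C^2\otimes\mathbb C^2\otimes\mathbb C^2$) with entries that are polynomials in the vertex weights $a,b,c,d$. Concretely, one applies both sides to each of the two basis vectors $\ket{\uparrow}_1$, $\ket{\downarrow}_1$ of $V_1$ tensored with the four auxiliary-space configurations, using the explicit actions \eqref{eqn:XYZSC}, \eqref{eqn:DefQPhi} of $\q^\uparrow,\q^\downarrow,\q_\phi$, the explicit $R$-matrix, and the defining actions of $A^\uparrow,A^\downarrow,A_\phi$. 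Collecting coefficients of the basis states of $V_0\otimes V_2\otimes V_3$, one finds that all resulting equations are automatically satisfied except for a handful, each of which reduces (after using $d\cdot c\cdot$-type factors and clearing the $a,b$ in the denominators appearing in $A^\uparrow,A^\downarrow$) to the single relation $(a^2+ab)(b^2+ab)=(c^2+ab)(d^2+ab)$, i.e.\ \eqref{eqn:CL8V}; conversely, if \eqref{eqn:CL8V} fails, at least one coefficient equation is violated, giving the ``only if'' direction.

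The main obstacle is purely computational bookkeeping: verifying the three single-site relations requires expanding products such as $R_{01}R_{02}(1\otimes \q_1)$ on a twelve-dimensional space of inputs and matching against $R_{01}A_0^{2}+A_0^{1}R_{01}$, and doing so carefully enough to see that \emph{every} coefficient mismatch collapses to \eqref{eqn:CL8V} and nothing stronger. The structure of the operators $A^\uparrow,A^\downarrow$ (with the $c/a$ and $d/b$ ratios) is engineered precisely so that the denominators cancel against factors produced by the $R$-matrix, and the operator $A_\phi$ is built so that the $\q_\phi$-relation holds identically once \eqref{eqn:CL8V} is imposed; recognising this pattern is what makes the calculation tractable rather than miraculous. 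No deeper idea is needed beyond the reduction steps above, so I would present the reduction in the text and relegate the coefficient-matching to ``a straightforward calculation,'' exactly as the surrounding lemmas in the paper do.
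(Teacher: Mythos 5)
Your proof follows the same route as the paper: reduce \eqref{eqn:RRq2} to \eqref{eqn:RRq1} by left-multiplying with $P_{jj+1}$ and using $P_{jj+1}\q_j=\q_j$, reduce to $j=1$ by conjugation with permutation operators, split $\q$ and $A$ into the three building blocks $\q^\uparrow,\q^\downarrow,\q_\phi$ and $A^\uparrow,A^\downarrow,A_\phi$ by linearity, and verify the three single-site relations by a direct computation that collapses exactly to \eqref{eqn:CL8V}. One small bookkeeping slip: each side of the $j=1$ relation is a map $V_0\otimes V_1\to V_0\otimes V_1\otimes V_2$, i.e.\ $\mathbb C^4\to\mathbb C^8$, so the matrices to compare are $8\times 4$ and there are four (not eight or twelve) input basis vectors; this does not affect the argument.
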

\begin{lemma}
  \label{lem:RAK}
 The $K$-matrices \eqref{eqn:DefKpm} obey
   \begin{align}
    \label{eqn:RKA1}
    (a+b)A_0^1K_0^- & = R_{01}K_0^- A_0^1,\\
    (a+b)(A_0^1)^{t_0}(K_0^+)^{t_0} & = (R_{01})^{t_0}(K_0^+)^{t_0} (A_0^1)^{t_0},
    \label{eqn:RKA2}
  \end{align}
  if and only \eqref{eqn:CL8V} holds. 
  Here, the superscript $t_0$ denotes the transposition with respect to the auxiliary space.
  \begin{proof}
    The proof is a straightforward calculation.
  \end{proof}
\end{lemma}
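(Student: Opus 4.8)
The plan is to verify both identities by a direct computation on a basis, following the strategy used for the bulk relation in \cite{hagendorf:18} and for \cref{lem:qRA} above. The first observation is that, after the spectator tensor factors $V_2,\dots,V_{L+1}$ are stripped away, each side of \eqref{eqn:RKA1} and of \eqref{eqn:RKA2} is a linear map from the auxiliary space $V$ into $V\otimes V$: \eqref{eqn:RKA1} reduces to the equality of maps $V\to V\otimes V$ given by $(a+b)\,A\circ K^- = R\circ(K^-\otimes 1)\circ A$, where $K^-$ acts on the auxiliary factor. Hence it suffices to evaluate both sides on the two basis vectors $|{\uparrow}\rangle$ and $|{\downarrow}\rangle$ of the auxiliary space and to compare the coefficients of $|{\uparrow\uparrow}\rangle,|{\uparrow\downarrow}\rangle,|{\downarrow\uparrow}\rangle,|{\downarrow\downarrow}\rangle$. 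Throughout, $\zeta=cd/(ab)$ as fixed by \eqref{eqn:Zabcd}, so that the $y$-dependent coefficients of $K^\pm$ in \eqref{eqn:DefKpm} and of $A$ in \eqref{eqn:DefA} are rational functions of $a,b,c,d$ and of $y,\bar y$.

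For \eqref{eqn:RKA1}, I would write $K^-=1+k_1\sigma^1+k_2\sigma^2+k_3\sigma^3$ with the $k_\alpha$ read off from \eqref{eqn:DefKpm}, insert the decomposition $A=(1-y^2\zeta)A^\uparrow+y(y^2-\zeta)A^\downarrow+A_\phi$ together with the explicit action of $A^\uparrow,A^\downarrow,A_\phi$ and of the eight-vertex $R$-matrix, and expand $(a+b)\,A K^-|{s}\rangle$ and $R\,(K^-\otimes 1)\,A|{s}\rangle$ for $s=\uparrow,\downarrow$. Matching the four components of each of the two resulting vectors produces a small system of rational identities in $a,b,c,d,y,\bar y$; after clearing denominators, every one of these identities is, up to a nonzero factor, the polynomial $(a^2+ab)(b^2+ab)-(c^2+ab)(d^2+ab)$. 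This simultaneously yields the ``if'' direction (the identity holds whenever \eqref{eqn:CL8V} does) and the ``only if'' direction (at least one component is a nonzero multiple of that polynomial, so the identity forces \eqref{eqn:CL8V}). The relation \eqref{eqn:RKA2} is handled in the same way: taking the transpose with respect to the auxiliary space turns it into a relation of the same shape with $K^+$ in place of $K^-$ and with the auxiliary-transposed $R$-matrix, which is again of eight-vertex form with the roles of $c$ and $d$ interchanged; since \eqref{eqn:CL8V} is symmetric under $c\leftrightarrow d$, the same computation applies after this relabelling. Alternatively, one simply repeats the basis expansion verbatim with $K^+$ and the transposed operators.

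I expect the only real difficulty to be organisational: keeping track of the numerous $y$-dependent prefactors entering $A^\uparrow,A^\downarrow,A_\phi$, the state $|\phi\rangle$ of \eqref{eqn:XYZLocalSupercharge} and the matrices $K^\pm$, and confirming that after clearing denominators the handful of scalar equations coming from the two identities all collapse to a single copy of \eqref{eqn:CL8V}. There is no conceptual obstacle, and, in contrast with the theta-function manipulations needed elsewhere in the paper, no special-function identities are required: the computation is purely polynomial and is most safely carried out with a computer algebra system, treating $a,b,c,d$ and $\mathrm{Re}\,y$, $\mathrm{Im}\,y$, $|y|^2$ (equivalently $y$ and $\bar y$) as the free variables.
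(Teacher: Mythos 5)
Your proposal is correct and follows the same route the paper intends: the paper's proof is simply stated as ``a straightforward calculation,'' and you have spelled out precisely how to organise it — strip the spectator factors, reduce to maps $V\to V\otimes V$, expand both sides of \eqref{eqn:RKA1} on $\ket\uu,\ket\dd$, and check that the resulting scalar relations are, after clearing denominators and imposing $\zeta=cd/(ab)$, each a multiple of $(a^2+ab)(b^2+ab)-(c^2+ab)(d^2+ab)$, which delivers both the ``if'' and ``only if'' directions at once. Your observation that the auxiliary transpose $R^{t_0}$ is again an eight-vertex $R$-matrix with $c\leftrightarrow d$, combined with the $c\leftrightarrow d$ symmetry of \eqref{eqn:CL8V}, is a sensible way to reduce the work for \eqref{eqn:RKA2}, though as you note one must still swap $K^-$ for $K^+$ and use the transposed $A$-operators, so ``repeating the basis expansion verbatim'' is the unambiguous fallback. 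No gap; this is what the paper leaves implicit.
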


\paragraph{The commutation relation.} We now use the \cref{lem:qRA,lem:RAK} to compute a commutation relation between the transfer matrix and the supercharge. This generalises a relation established by Weston and Yang \cite{weston:17} for the six-vertex model, corresponding to $d=0,\,y=0$.
\begin{proposition}
  \label{prop:TQCommRel}
  If \eqref{eqn:CL8V} holds and the $K$-matrices $K^\pm$ are given by \eqref{eqn:DefKpm} then
\begin{equation}
   \label{eqn:CommRelTQ}
   \mathcal T \Q = (a+b)^2\Q \mathcal T.
\end{equation}
\begin{proof}
  First, we evaluate a commutator between the transfer matrix and the local supercharge $\q_j$. To this end, we use 
\begin{align}
R_{0k}(1\otimes \q_j) & = (1\otimes \q_j)  R_{0k}, &&\text{if } 1\leqslant k<j \leqslant L, \\
R_{0k}(1\otimes \q_j) & = (1\otimes \q_j) R_{0k-1}, &&\text{if } 1\leqslant j< k-1 \leqslant L-1.
\end{align}
We apply them together with \cref{lem:qRA} to obtain
  \begin{align*}
    \mathcal T\q_j-(a+b)^2\q_j \mathcal T = \text{tr}_0\left(K_0^+U_{0,[1,L+1]}K_0^-\bar U_{0,[1,j-1]}
    \left(R_{0j} A_0^{j+1}+A_0^jR_{0j}\right)
    \bar U_{0,[j+1,L]}
    \right)\\
    -(a+b)\text{tr}_0\left(K_0^+U_{0,[j+2,L+1]}\left(R_{0j+1}A_0^j+A_0^{j+1}R_{0j}\right)U_{0,[1,j-1]}K_0^-
    \bar U_{0,[1,L]}
    \right),
  \end{align*}
  for $j=1,\dots,L$.
  
  Second, we take an alternating sum of these equalities and find
  \begin{align}
    \label{eqn:TQAfterSum}
    \begin{split}
    \mathcal T \Q - (a+b)^2\Q \mathcal T = \, & \text{tr}_0\left(K_0^+U_{0,[2,L+1]}\left((a+b)A_0^1K_0^- - R_{01}K_0^-A_0^1\right)\bar U_{0,[1,L]}
    \right)\\
    & +(-1)^L\left(\text{tr}_0\left(K_0^+ R_{0L+1}\mathcal U A_0^{L+1}\right)-(a+b)\text{tr}_0\left(K_0^+A_0^{L+1} \mathcal U \right)\right),
    \end{split}
  \end{align}
  where we used the shorthand notation $\mathcal U = U_{0,[1,L]}K_0^- \bar U_{0,[1,L]}$.
  The relation \eqref{eqn:RKA1} implies that the first term on the right-hand side of \eqref{eqn:TQAfterSum} vanishes. To evaluate the second term, we compute
  \begin{align}
    \begin{split}
    &\text{tr}_0\left(K_0^+ R_{0L+1}\mathcal U A_0^{L+1}\right) = \text{tr}_0 \left(\mathcal U^{t_0}(R_{0L+1})^{t_0} 
    (K_0^+)^{t_0}(A_{0}^{L+1})^{t_0}\right)\\
    & = \, (a+b)\text{tr}_0 \left(\mathcal U^{t_0}(A_{0}^{L+1})^{t_0}(K_0^+)^{t_0}\right) =(a+b)\text{tr}_0\left(K_0^+A_0^{L+1} \mathcal U \right). 
    \end{split}
  \end{align}
  To establish this equality, we used the invariance of the trace under matrix transposition and applied the identity  $(R_{0L+1})^{t_0} (K_0^+)^{t_0}(A_{0}^{L+1})^{t_0}= (a+b)(A_{0}^{L+1})^{t_0}(K_0^+)^{t_0}$, which follows from \eqref{eqn:RKA2} after an appropriate multiplication with permutation operators. Hence, we conclude that the second term on the right-hand side of \eqref{eqn:TQAfterSum} vanishes, too. 
\end{proof}
\end{proposition}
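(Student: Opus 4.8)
The plan is to reduce everything to the two families of local identities already available, \cref{lem:qRA} and \cref{lem:RAK}, combined with a telescoping argument built on the definition $\Q=\sum_{j=1}^{L}(-1)^{j}\q_{j}$ and the double-row form $\mathcal T=\text{tr}_{0}(K_{0}^{+}\,U_{0,[1,L]}\,K_{0}^{-}\,\bar U_{0,[1,L]})$. First I would fix $j$ and compute the single-site defect $\mathcal T\q_{j}-(a+b)^{2}\q_{j}\mathcal T$. Inserting $1\otimes\q_{j}$ into the monodromy, the $R$-matrices whose legs lie strictly to one side of site $j$ either commute with $1\otimes\q_{j}$ or merely shift the site label by one, so only the two $R$-matrices straddling site $j$ in each row actually interact with the supercharge; on those, \cref{lem:qRA} turns a $\q_{j}$-insertion into an $A$-insertion at site $j$ or at site $j+1$, up to prefactors $(a+b)$ and $(a+b)^{2}$. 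The outcome is that $\mathcal T\q_{j}-(a+b)^{2}\q_{j}\mathcal T$ is a difference of traces over $V_{0}\otimes V^{L+1}$ of the monodromy carrying an $A$ at site $j$ versus at site $j+1$ --- a discrete derivative in the insertion position --- together with the analogous contribution from pushing $\q_{j}$ through the second row.

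Next I would form the alternating sum $\sum_{j}(-1)^{j}(\mathcal T\q_{j}-(a+b)^{2}\q_{j}\mathcal T)$. Because each single-site defect is a discrete derivative in the insertion site, the sum telescopes: the ``$A$ at site $j+1$'' contribution from $j$ cancels the ``$A$ at site $j$'' contribution from $j+1$. What survives are the two extreme terms, namely an $A$ inserted at site $1$ adjacent to $K^{-}$ (from $j=1$) and an $A$ inserted at the far site $L+1$ adjacent to $K^{+}$ (from $j=L$, with sign $(-1)^{L}$). This is precisely equation \eqref{eqn:TQAfterSum}: $\mathcal T\Q-(a+b)^{2}\Q\mathcal T$ equals a ``$K^{-}$-boundary'' term of the form $\text{tr}_{0}(K_{0}^{+}\,U_{0,[2,L+1]}\,((a+b)A_{0}^{1}K_{0}^{-}-R_{01}K_{0}^{-}A_{0}^{1})\,\bar U_{0,[1,L]})$, plus a ``$K^{+}$-boundary'' term $(-1)^{L}(\text{tr}_{0}(K_{0}^{+}R_{0L+1}\mathcal U A_{0}^{L+1})-(a+b)\,\text{tr}_{0}(K_{0}^{+}A_{0}^{L+1}\mathcal U))$ with $\mathcal U=U_{0,[1,L]}K_{0}^{-}\bar U_{0,[1,L]}$.

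It then remains to see that each boundary term vanishes. The $K^{-}$ term is immediate: its inner bracket $(a+b)A_{0}^{1}K_{0}^{-}-R_{01}K_{0}^{-}A_{0}^{1}$ is zero by \eqref{eqn:RKA1}. For the $K^{+}$ term I would use that the trace is invariant under transposition $t_{0}$ in the auxiliary space: transpose the first trace, invoke \eqref{eqn:RKA2} (conjugated with permutation operators to move it to the sites $0,L+1$) in the form $(R_{0L+1})^{t_{0}}(K_{0}^{+})^{t_{0}}(A_{0}^{L+1})^{t_{0}}=(a+b)(A_{0}^{L+1})^{t_{0}}(K_{0}^{+})^{t_{0}}$ to pull $R_{0L+1}$ past $K_{0}^{+}$ and $A_{0}^{L+1}$ at the cost of a factor $(a+b)$, and transpose back; the result is exactly $(a+b)\,\text{tr}_{0}(K_{0}^{+}A_{0}^{L+1}\mathcal U)$, so the two traces cancel. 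Hence $\mathcal T\Q-(a+b)^{2}\Q\mathcal T=0$, which is \eqref{eqn:CommRelTQ}.

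I expect the main obstacle to be organisational rather than computational: getting the telescoping bookkeeping exactly right --- including the genuine edge cases $j=1$ and $j=L$, where several of the ordered products $U_{0,[i,j]},\bar U_{0,[i,j]}$ degenerate to the identity --- and, more subtly, treating the two ends asymmetrically, since the $K^{-}$ end closes up directly via \eqref{eqn:RKA1} while the $K^{+}$ end only closes after transposing in the auxiliary space and using \eqref{eqn:RKA2}. This asymmetry is forced by the double-row geometry, in which $K^{+}$ sits inside the auxiliary trace with a transpose whereas $K^{-}$ sits between the two rows; once one anticipates it, the remainder is routine given \cref{lem:qRA} and \cref{lem:RAK}. (The case $d=0$, $y=0$ for the six-vertex model is due to Weston and Yang \cite{weston:17}, and the argument here follows the same pattern as in \cite{hagendorf:18}.)
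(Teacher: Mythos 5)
Your proposal is correct and follows essentially the same route as the paper: compute the single-site commutator $\mathcal T\q_j-(a+b)^2\q_j\mathcal T$ via \cref{lem:qRA} and the trivial commutations with $R$-matrices away from site $j$, telescope the alternating sum so that only the two boundary insertions survive, and then kill the $K^-$ boundary term directly by \eqref{eqn:RKA1} and the $K^+$ boundary term by transposing in the auxiliary space and invoking \eqref{eqn:RKA2}. You correctly anticipated both the ``discrete-derivative'' telescoping mechanism and the asymmetry between the two ends that forces a transposition at the $K^+$ end.
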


\subsection{The eigenvalue}
\label{sec:TMEV}

In this section, we prove \cref{thm:MainTheorem2}. We prepare its proof by establishing a few auxiliary results. Below, we denote by $\mathcal T$ the transfer matrix of the supersymmetric eight-vertex model on a strip with $L\geqslant 1$ vertical lines, the $K$-matrices $K^\pm$ defined in \eqref{eqn:DefKpm} and $t=\pi/6$. We compute the action of this transfer matrix on the supersymmetry singlet $|\Psi_L\rangle$ defined in \eqref{eqn:CohomologyRepresentation}. This singlet is an eigenstate of $H$, and thus of $H_{\textup{\tiny XYZ}}$. Therefore, it is an eigenstate of $\mathcal T$. The eigenvalue $\Lambda_L$ can be obtained as
\begin{equation}
  \Lambda_L = \frac{\langle \Psi_L|\mathcal T |\Psi_L\rangle}{\langle \Psi_L|\Psi_L\rangle}.
  \label{eqn:EVRitz}
\end{equation}
We evaluate this quotient by using the following proposition, whose proof is identical to the one of Proposition 3.4 in \cite{hagendorf:18}.
\begin{proposition}
\label{prop:QExactOp}
Let $L\geqslant 1$ and $|\psi\rangle\in V^L$ be a supersymmetry singlet with the decompositions $|\psi\rangle = |\phi\rangle + \Q|\gamma\rangle$ (or $|\psi\rangle = |\phi\rangle$ for $L=1$) and $|\psi\rangle = |\phi'\rangle + \Q^\dagger |\gamma'\rangle$. Let $\mathcal A$ be an operator defined on $V^L$ for each $L\geqslant 1$ that obeys the commutation relation
\begin{equation}
  \mathcal A \Q = \lambda \Q \mathcal A,
\end{equation}
with non-zero $\lambda$. Then we have 
\begin{equation}
  \langle \psi|\mathcal A|\psi\rangle = \langle \phi'|\mathcal A |\phi\rangle.
\end{equation}
\end{proposition}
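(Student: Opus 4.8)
The plan is to expand the numerator $\langle\psi|\mathcal A|\psi\rangle$ of the Rayleigh quotient by substituting the cohomology decomposition $|\psi\rangle = |\phi\rangle + \Q|\gamma\rangle$ into the ket and the homology decomposition $|\psi\rangle = |\phi'\rangle + \Q^\dagger|\gamma'\rangle$ into the bra, and then to eliminate every cross term using the commutation relation together with the (co)cycle conditions and the nilpotency of the supercharge. Taking the adjoint of the homology decomposition gives $\langle\psi| = \langle\phi'| + \langle\gamma'|\Q$, so that
\[
  \langle\psi|\mathcal A|\psi\rangle = \langle\phi'|\mathcal A|\phi\rangle + \langle\phi'|\mathcal A\Q|\gamma\rangle + \langle\gamma'|\Q\mathcal A|\phi\rangle + \langle\gamma'|\Q\mathcal A\Q|\gamma\rangle ,
\]
and it suffices to show that the last three terms vanish. (For $L=1$ the first decomposition reads $|\psi\rangle=|\phi\rangle$, so only the first two terms on the right-hand side appear.)

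For the second term I would apply $\mathcal A\Q = \lambda\Q\mathcal A$ on $V^{L-1}$ to rewrite it as $\lambda\langle\phi'|\Q\mathcal A|\gamma\rangle$, which vanishes because $\langle\phi'|\Q = 0$; this last identity follows by taking the adjoint of $\Q^\dagger|\phi'\rangle = 0$, valid since the representative $|\phi'\rangle$ lies in $\ker\Q^\dagger$. For the third term I would use the rearranged relation $\Q\mathcal A = \lambda^{-1}\mathcal A\Q$ (legitimate since $\lambda\neq 0$) to obtain $\lambda^{-1}\langle\gamma'|\mathcal A\Q|\phi\rangle$, which vanishes since $\Q|\phi\rangle = 0$ because $|\phi\rangle$ represents a class in $\mathcal H^L$ and hence lies in $\ker\Q$. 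For the fourth term, moving one factor of $\Q$ through $\mathcal A$ via the commutation relation produces a factor $\Q^2 = 0$. The case $L=1$ is dispatched by the same manipulation as for the third term, using that a supersymmetry singlet satisfies $\Q|\psi\rangle = 0$.

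I do not expect a genuine obstacle: the argument is short bookkeeping, and the heavier work (nilpotency of $\Q,\Q^\dagger$, the (co)homology computation, and the existence of the two decompositions) is already in place. The one point that needs care is keeping track of the tensor-product space on which $\mathcal A$ and $\Q$ act at each occurrence, so that the instance of $\mathcal A\Q=\lambda\Q\mathcal A$ and of the conditions $\Q|\phi\rangle=0$, $\Q^\dagger|\phi'\rangle=0$, $\Q^2=0$ invoked at each step is the correct one; the convention of writing these operators without decorating them by their domain is convenient but makes this the only thing to watch.
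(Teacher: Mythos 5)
Your proof is correct and follows essentially the same route as the paper's (which it imports from Proposition~3.4 of the companion paper): substitute the homology decomposition into the bra and the cohomology decomposition into the ket, then use $\mathcal A\Q=\lambda\Q\mathcal A$ together with $\Q|\phi\rangle=0$, $\Q^\dagger|\phi'\rangle=0$ and $\Q^2=0$ to annihilate the three cross terms. The bookkeeping of which instance of each relation is invoked on which $V^L$ is handled correctly, including the $L=1$ degenerate case.
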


It follows from \cref{prop:TQCommRel} that if $a+b\neq 0$ then we may apply \cref{prop:QExactOp} with $\mathcal A = \mathcal T$ and $\lambda=(a+b)^2$ to evaluate the matrix element $\langle \Psi_L|\mathcal T |\Psi_L\rangle$. Furthermore, we compute the square norm $\langle \Psi_L|\Psi_L\rangle$ with the help of this proposition for $\mathcal A = 1$ and $\lambda=1$. The resulting expressions depend on the choice of the decompositions of $|\Psi_L\rangle$. First, using \eqref{eqn:CohomologyRepresentation} and \eqref{eqn:HomologyRepresentation}, we have
\begin{equation}
  \label{eqn:ThetaL1}
  \Lambda_L = \frac{\left(\langle w_+|^{\otimes L}\right)\mathcal T\left(|v_+\rangle^{\otimes L}\right)}{\langle w_+|v_+\rangle^L},
\end{equation}
for each $L\geqslant 1$.
Second, using the alternative representations \eqref{eqn:AlternativeDecomp1} and \eqref{eqn:AlternativeDecomp2}, we find
\begin{equation}
  \label{eqn:ThetaL2}
  \Lambda_L = \frac{\left(\langle \alpha|\otimes \langle w_+|^{\otimes (L-2)}\right)\mathcal T\left(|\chi\rangle\otimes |v_+\rangle^{\otimes (L-2)}\right)}{\langle \alpha|\chi\rangle\langle w_+|v_+\rangle^{L-2}},
\end{equation}
for each $L\geqslant 2$. These two relations still hold if $a+b=0$. Indeed, the eigenvalues of a matrix are continuous functions of its entries \cite{meyer:00}. Hence, $\Lambda_L$ is a continuous function of $a,b,c,d$.

We exploit \eqref{eqn:ThetaL1} and \eqref{eqn:ThetaL2} to establish a recurrence relation for the eigenvalue $\Lambda_L$. To this end, we need the following two lemmas:
\begin{lemma}
\label{lem:MagicIdentity1}
  For $t=\pi/6$, the $K$-matrices \eqref{eqn:DefKpm} obey
\begin{equation}
  \label{eqn:ReccK1}
\frac{\bra{w_+} \textup{tr}_0\left( K^+_0R_{01}K^-_0R_{01}\right) \ket{v_+} }{\langle w_+ \ket{v_+}}
 = (a+b)^2  \textup{tr}(K^+ K^-).
\end{equation}
\begin{proof}
By virtue of \cref{lem:VWPolynomial}, it is sufficient to show that
\begin{equation}
  I = \bra{\bar w_+} \textup{tr}_0\left( K^+_0R_{01}K^-_0R_{01}\right) \ket{\bar v_+} -\langle \bar w_+ |{\bar v_+}\rangle(a+b)^2  \textup{tr}(K^+ K^-)
\end{equation}
vanishes. This difference is a rational expression of the vertex weights $a,b,c,d$, $\zeta$ and the parameter $y$. Using the relations \eqref{eqn:CL8V}, \eqref{eqn:YEqn} and \eqref{eqn:Zabcd}, we find after some algebra, that is indeed zero. 
\end{proof}
\end{lemma}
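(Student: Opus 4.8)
The plan is to reduce \eqref{eqn:ReccK1} to an identity between rational functions of the vertex weights and $y$, and then to verify that identity by a direct computation using \eqref{eqn:CL8V}, \eqref{eqn:Zabcd} and \eqref{eqn:YEqn}. First I would invoke \cref{lem:VWPolynomial} to replace $\ket{v_+}$ by $\ket{\bar v_+}$ and $\bra{w_+}$ by $\bra{\bar w_+}$: these differ only by the scalar factors $C_+$ and $C_-$, which occur in both the numerator and the denominator of the left-hand side of \eqref{eqn:ReccK1} and hence cancel. It thus suffices to prove that
\[
I = \bra{\bar w_+}\,\textup{tr}_0\!\left(K_0^+R_{01}K_0^-R_{01}\right)\ket{\bar v_+}-(a+b)^2\,\textup{tr}(K^+K^-)\,\langle\bar w_+|\bar v_+\rangle = 0 .
\]
Two further simplifications are worth recording before computing. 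Since $t=\pi/6$ forces $y$ to equal the root $y_0$ of \eqref{eqn:YEqn} that lies in $\mathbb D$ (\cref{lem:ZeroesP}), the parameter $y$ is real; therefore $\mathop{\textup{Im}}y=0$, the $\sigma^2$-terms in \eqref{eqn:DefKpm} drop out, and $K^\pm=1+k_1^\pm\sigma^1+k_3^\pm\sigma^3$ with coefficients rational in $a,b,c,d$ once $|y|^2$ is rewritten as $y^2$. Moreover, $R$ has the XYZ form $R=\tfrac{a+b}{2}\,1+\tfrac{c+d}{2}\,\sigma^1\!\otimes\!\sigma^1+\tfrac{c-d}{2}\,\sigma^2\!\otimes\!\sigma^2+\tfrac{a-b}{2}\,\sigma^3\!\otimes\!\sigma^3$ on $V_0\otimes V_1$.

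Next I would make the two terms of $I$ explicit. Expanding $K^\pm$ and $R$ in the Pauli basis, the operator $\textup{tr}_0(K_0^+R_{01}K_0^-R_{01})$ acting on $V_1$ is evaluated with the standard partial traces of products of Pauli matrices over the auxiliary space; this produces a $2\times2$ matrix whose entries are rational in $a,b,c,d,y$, and sandwiching it between $\bra{\bar w_+}$ and $\ket{\bar v_+}=y(1-\zeta y^2)\ket{\uparrow}+(\zeta-y^2)\ket{\downarrow}$ — with $\zeta=cd/(ab)$ as in \eqref{eqn:Zabcd} — gives the first term of $I$. The remaining ingredients are immediate: $\textup{tr}(K^+K^-)=2(1+k_1^+k_1^-+k_3^+k_3^-)$ and $\langle\bar w_+|\bar v_+\rangle=\zeta(1-y^4)$, which is non-zero for $0<\zeta<1$ and $0<y<1$, so that the quotient in \eqref{eqn:ReccK1} is well defined.

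Finally I would clear denominators in $I$, substitute $\zeta=cd/(ab)$ throughout, use \eqref{eqn:YEqn} to eliminate $y^4$ (and, iterating, the higher powers of $y$ that appear), and reduce the remainder modulo the eight-vertex constraint \eqref{eqn:CL8V}; the outcome is identically $0$. I expect this last step to be the real obstacle: it is not conceptual but a concrete polynomial identity that is true only because the three relations \eqref{eqn:CL8V}, \eqref{eqn:Zabcd} and \eqref{eqn:YEqn} conspire, so the reduction is unavoidably somewhat lengthy. It can be carried out by hand or checked with a computer algebra system, just as for the analogous ``magic identity'' in the periodic case of \cite{hagendorf:18}.
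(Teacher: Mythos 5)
Your proof follows exactly the paper's strategy: reduce via \cref{lem:VWPolynomial} to the polynomial vectors $|\bar v_+\rangle,\,\langle\bar w_+|$, and then verify that the resulting rational expression vanishes modulo \eqref{eqn:CL8V}, \eqref{eqn:YEqn} and \eqref{eqn:Zabcd}. The extra observations you record (reality of $y$ at $t=\pi/6$ so that the $\sigma^2$-terms of $K^\pm$ drop out, the explicit value $\langle\bar w_+|\bar v_+\rangle=\zeta(1-y^4)$, the Pauli-basis form of $R$) are correct and useful, but do not change the argument, which the paper also leaves as "after some algebra."
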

\begin{lemma}
\label{lem:MagicIdentity2}
 For $t=\pi/6$, the matrix $K^-$, defined in \eqref{eqn:DefKpm}, obeys
\begin{equation}
  \frac{(1\otimes \langle \alpha|)R_{02}R_{01}K_0^- R_{01}R_{02}(1\otimes |\chi\rangle)}{\langle \alpha|\chi\rangle} = (a+b)^4K_0^-.%
\end{equation}
\begin{proof}
   By virtue of \cref{lem:ChiAlphaPolynomial}, the equality holds if the $2\times 2$ matrix
  \begin{equation}
   \bar I= (1\otimes \langle \bar \alpha|)R_{02}R_{01}K_0^- R_{01}R_{02}(1\otimes |\bar \chi\rangle) -(a+b)^4\langle \bar \alpha|\bar \chi\rangle K_0^-
\end{equation}
vanishes. Its entries are rational expressions of the vertex weights $a,b,c,d$, $\zeta$ and the parameter $y$. As above, we use \eqref{eqn:CL8V}, \eqref{eqn:YEqn} and \eqref{eqn:Zabcd} to show that its entries are indeed zero. 
\end{proof}
\end{lemma}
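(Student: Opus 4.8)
The statement is an identity between two operators on the auxiliary space $V_0$, i.e.\ between two $2\times2$ matrices whose entries are functions of the vertex weights and of $\zeta,y$; the plan is to reduce it to a finite explicit computation and then close it with the defining relations \eqref{eqn:CL8V}, \eqref{eqn:YEqn} and \eqref{eqn:Zabcd}. First I would strip away the scalar prefactors: by \cref{lem:ChiAlphaPolynomial} one has $|\chi\rangle = D_+|\bar\chi\rangle$ and $|\alpha\rangle = D_-|\bar\alpha\rangle$ with $D_\pm$ nonzero (and real, since $t=\pi/6$ makes $\zeta$ and $y$ real), so the left-hand side's numerator and the normalisation $\langle\alpha|\chi\rangle$ both carry the factor $\overline{D_-}D_+$; after cancelling it the claim becomes the vanishing of the $2\times2$ matrix
\[
  \bar I = (1\otimes\langle\bar\alpha|)\,R_{02}R_{01}K_0^-R_{01}R_{02}\,(1\otimes|\bar\chi\rangle) - (a+b)^4\,\langle\bar\alpha|\bar\chi\rangle\,K^- .
\]
(\cref{lem:ChiAlphaPolynomial} is just the convenient repackaging of the polynomial forms of $|v_\pm\rangle,|w_\pm\rangle$ from \cref{lem:VWPolynomial} combined with the definitions of $|\chi\rangle$ and $|\alpha\rangle$.)

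Next I would evaluate $\bar I$ directly, propagating a basis vector of $V_0$ through the chain $|\bar\chi\rangle\mapsto R_{01}R_{02}(1\otimes|\bar\chi\rangle)\mapsto K_0^-(\cdots)\mapsto R_{02}R_{01}(\cdots)\mapsto(1\otimes\langle\bar\alpha|)(\cdots)$; each intermediate object is a small array, and after the two contractions one obtains a $2\times2$ matrix whose four entries are polynomials of total degree at most four in $a,b,c,d$ (from the four $R$-matrices) and of bounded degree in $y$ and $\zeta$ (from $|\bar\chi\rangle,|\bar\alpha\rangle$), save for the denominators $ac+bd$, $ac-bd$, $2ab+b^2+d^2$ that $K^-$ contributes through \eqref{eqn:DefKpm}. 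Clearing these turns $\bar I$ (up to a nonzero factor) into a matrix with polynomial entries in $a,b,c,d,y,\zeta$, and I would then impose the constraints: use \eqref{eqn:Zabcd} to eliminate $\zeta$ (equivalently $c=\zeta ab/d$), reduce \eqref{eqn:CL8V} to the single polynomial relation it thereby induces, and use \eqref{eqn:YEqn} to tie $y$ to $\zeta$; on the resulting low-dimensional variety every entry should collapse to zero. To keep the cancellations transparent one may instead work within the theta-function parametrisation of the supersymmetric eight-vertex model with crossing parameter $\eta=\pi/3$: then $a,b,c,d$ and $\zeta$ are given by \eqref{eqn:param8V}, \eqref{eqn:JacobiThetaParam}, the value $t=\pi/6$ corresponds to $y=\vartheta_1(\pi/6,p^2)/\vartheta_4(\pi/6,p^2)$, the relations \eqref{eqn:CL8V} and \eqref{eqn:Zabcd} hold automatically, \cref{lem:ZeroesP} guarantees \eqref{eqn:YEqn}, and each entry of $\bar I$ becomes a finite sum of products of $\vartheta_j$ with arguments of the form $u+k\pi/6$, which is reduced to zero by the classical theta addition theorems together with the special evaluations at $\pi/6$ and $\pi/3$.

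The conceptual content is slight; the genuine obstacle is simply the size of the algebra. Before simplification $\bar I$ has four entries, each a polynomial of moderate degree in six variables, and the assertion amounts to ideal membership with respect to the three constraint polynomials. What makes this tractable is the theta-function parametrisation, which collapses the problem to a single free parameter $p$ (equivalently $\zeta$) and turns ideal membership into a finite list of standard theta identities, so the real work is organising the bookkeeping rather than running a brute-force elimination. A structural alternative would be to try to deduce the identity from the local relations of \cref{lem:qRA,lem:RAK} together with the facts, special to $t=\pi/6$, that $\langle\alpha|$ annihilates the local supercharge $\q$ and that $|\chi\rangle$ is closed under the supercharge $\Q$ — an argument that would parallel \cref{prop:TQCommRel} — but this requires setting up monodromy-level intertwiners and does not obviously shorten the verification.
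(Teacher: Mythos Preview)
Your proposal is correct and follows essentially the same approach as the paper: use \cref{lem:ChiAlphaPolynomial} to replace $|\chi\rangle,|\alpha\rangle$ by their polynomial avatars $|\bar\chi\rangle,|\bar\alpha\rangle$, reduce the claim to the vanishing of the $2\times2$ matrix $\bar I$, and then verify this by direct computation using the constraints \eqref{eqn:CL8V}, \eqref{eqn:YEqn}, \eqref{eqn:Zabcd}. Your additional remarks on organising the algebra (clearing the $K^-$ denominators, or alternatively working in the theta parametrisation) and the speculative structural route via \cref{lem:qRA,lem:RAK} go beyond what the paper records, but the underlying strategy is the same.
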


\begin{proof}[Proof of \cref{thm:MainTheorem2}] According to \cref{prop:LambdaLSinglets}, if $L\geqslant 1$, $0<\zeta<1$, and if $y$ is a solution of \eqref{eqn:YEqn}, then any solution $|\psi\rangle$ of $\mathcal T|\psi\rangle = \Lambda_L|\psi\rangle$ is a supersymmetry singlet. This observation does, however, not guarantee that $\Lambda_L$ is an eigenvalue of the transfer matrix because a solution of the eigenvalue problem might not exist. To see that it is an eigenvalue, we thus evaluate the action transfer matrix on $|\Psi_L\rangle$. To this end, we use \eqref{eqn:EVRitz}.

First, we consider $t=\pi/6$ and hence $y=y_0$, where $y_0$ is the unique real solution of \eqref{eqn:YEqn} with $0<y<1$. We suppose $L\geqslant 3$, and use the definition of the transfer matrix to rewrite \eqref{eqn:ThetaL2} as
\begin{equation}
  \Lambda_L = \frac{ \left(\langle \alpha|\otimes \langle w_+|^{\otimes (L-2)}\right) \text{tr}_0 \left( K_0^+ U_{0,[3, L]} R_{02} R_{01} K_0^- R_{01} R_{02} \bar U_{0,[3,L]}\right) \left(|\chi\rangle\otimes |v_+\rangle^{\otimes (L-2)}\right) }{\langle \alpha|\chi\rangle\langle w_+|v_+\rangle^{L-2}}.\nonumber
\end{equation}
We apply \cref{lem:MagicIdentity2} on the right-hand side of this equality and obtain, after a redefinition of labels, the expression
\begin{equation}
  \Lambda_L = (a+b)^4\frac{\langle w_+|^{\otimes (L-2)} \text{tr}_0 \left(  K_0^+ U_{0,[1, L-2]} K_0^-  \bar U_{0,[1,L-2]} \right)|v_+\rangle^{\otimes (L-2)}}{\langle w_+|v_+\rangle^{L-2}}.
  \end{equation}
Now, we use \eqref{eqn:ThetaL1} to recognise on the right-hand side of this equality $\Lambda_{L-2}$. Therefore, we have the recurrence relation
\begin{equation}
  \Lambda_L = (a+b)^4 \Lambda_{L-2}.
\end{equation}
To solve this recurrence, we compute the eigenvalues $\Lambda_L$ for $L=1,2$. They immediately follow from \cref{lem:MagicIdentity1,lem:MagicIdentity2}. We find
\begin{align}
\begin{split}
\Lambda_1&=\frac{\bra{w_+} \textup{tr}_0\left( K^+_0R_{01}K^-_0R_{01}\right) \ket{v_+}}{\langle w_+ \ket{v_+}} = (a+b)^2  \textup{tr}(K^+ K^-),\\
\Lambda_2&=\frac{\bra{\alpha} \text{tr}_0\left(K_0^+ R_{02} R_{01} K_0^- R_{01} R_{02}\right) \ket\chi }{\langle \alpha|\chi\rangle} =(a+b)^4 \text{tr}(K^+ K^-).
\end{split}
\end{align}
The solution of the recurrence relation with these initial conditions leads to the eigenvalue $\Lambda_L = (a+b)^{2L}\text{tr}(K^+K^-)$, for each $L\geqslant 1$. The eigenspace of $\Lambda_L$ is by construction the space spanned by the supersymmetry singlet $|\Psi_L\rangle$. It is one-dimensional. Therefore $\Lambda_L$ is non-degenerate.

Second, we consider the other real solutions $y=y_\alpha,\,\alpha=1,2,3,$ of \eqref{eqn:YEqn}. It follows from \eqref{eqn:TPi} that the corresponding transfer matrix has the property
  \begin{equation}
    \mathcal T(a,b,c,d;y_\alpha)=\mathcal R^\alpha(-\pi)\mathcal T(a,b,c,d;y_0)\mathcal R^\alpha(\pi).
  \end{equation}
  The two transfer matrices in this equality are related by a unitary transformation. Therefore, they have the same eigenvalues with the same degeneracies. Hence, the transfer matrix possesses the eigenvalue $\Lambda_L$ in this case, too. Its eigenspace is the span of the supersymmetry singlet $|\Psi_L^\alpha\rangle$, defined in the proof of \cref{thm:MainTheorem1}.
\end{proof}

\section{The largest eigenvalue}
\label{sec:LargestEV}

The relation \eqref{eqn:CL8V} admits positive solutions. Indeed, using the parameterisation \eqref{eqn:param8V}, we have $a,b,c,d>0$ if $\rho>0,\ \eta=\pi/3,\ 0<u<\pi/3,$ and $0<p<1$. We now prove that in this case, $\Lambda_L$ is the largest eigenvalue of the transfer matrix $\mathcal T$ of the supersymmetric eight-vertex model with the $K$-matrices \eqref{eqn:DefKpm} and $y$ a solution of \eqref{eqn:YEqn}.

The proof is based on the Perron-Frobenius theorem for positive matrices and its variant for non-negative matrices. We use certain concepts from Perron theory and refer to the book \cite{meyer:00} for details. We only recall that $|\psi\rangle \in V^L$ is called a \textit{Perron vector} if all its components are positive and its norm is one.

\begin{proposition}
\label{prop:PositivePsi}
For each $L\geqslant 1$, there is a constant $C_L$ such that $|\Psi_L'\rangle = C_L|\Psi_L\rangle$ is a Perron vector.\begin{proof}
First, we note that for all $(p,t)\in \bar{\mathbb D}$ with $t=\pi/6$, the off-diagonal matrix elements of the Hamiltonian $H_{\textup{\tiny XYZ}}$ are zero or negative. Hence, there is a real number $\lambda$ such that the matrix $\lambda-H_{\textup{\tiny{XYZ}}}$ has a positive diagonal and non-negative off-diagonal entries.

Second, we note that the action of $\lambda-H_{\textup{\tiny{XYZ}}}$ on any basis state $|s_1\cdots s_L\rangle$ of $V^L$ leads to a linear combination of basis states that are obtained from $|s_1\cdots s_L\rangle$ by \textit{(i)} flipping pairs adjacent aligned spins, \textit{(ii)} exchanging pairs of adjacent anti-aligned spins, \textit{(iii)} flipping the spin on the first or last site or \textit{(iv)} leaving the basis state unchanged. The coefficients of this linear combination are positive. The repeated application of the operations \textit{(i)}-\textit{(iv)} allows one to generate any basis state from $|s_1\cdots s_L\rangle$. We conclude that there is an integer $m> 0$ such that $(\lambda-H_{\textup{\tiny XYZ}})^m$ has positive entries. Hence, $\lambda-H_{\textup{\tiny XYZ}}$ is a non-negative irreducible matrix.

Third, we apply the Perron-Frobenius theorem to the matrix $\lambda-H_{\textup{\tiny XYZ}}$. It implies that its largest eigenvalue is non-degenerate and that the corresponding eigenspace is spanned by a Perron vector $|\Psi_L'\rangle$. By \cref{thm:MainTheorem1} this largest eigenvalue is $\lambda-E_0$, and the eigenspace spanned by $|\Psi_L\rangle$. Hence, there must be a constant $C_L$ such that $|\Psi_L'\rangle = C_L|\Psi_L\rangle$.
\end{proof}
\end{proposition}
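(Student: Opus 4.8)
\emph{Proof proposal.} The plan is to exhibit $|\Psi_L\rangle$ as the Perron eigenvector of a non-negative irreducible matrix obtained by shifting the XYZ Hamiltonian. Throughout I work at $t=\pi/6$, the only value at which $|\Psi_L\rangle$ is defined; equivalently, $0<\zeta<1$ and $y=y_0\in(0,1)$ by \cref{lem:ZeroesP}. Note that the statement concerns the spin chain alone and does not use positivity of the vertex weights.

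First I would determine the sign pattern of the matrix of $H_{\textup{\tiny XYZ}}$ in the canonical basis \eqref{eqn:BasisStates}. Apart from a diagonal contribution from the $\sigma^3_j\sigma^3_{j+1}$ terms, each bulk bond $(j,j+1)$ contributes exactly two off-diagonal matrix elements: one equal to $-\zeta<0$, connecting two basis states that differ by the flip of an adjacent aligned pair of spins, and one equal to $-1<0$, connecting two basis states that differ by the exchange of an adjacent anti-aligned pair. For the boundary terms $h_{\textup{\tiny B}}^\pm=\sum_\alpha\lambda_\alpha\sigma^\alpha$, the only off-diagonal entries come from $\lambda_1\sigma^1+\lambda_2\sigma^2$; since $y_0$ is real we have $\lambda_2=0$, and since $y_0>0$ and $\zeta>0$ we have $\lambda_1=-(1+\zeta)y_0/(1+y_0^2)<0$ (using the explicit formulas \eqref{eqn:hBPM2}). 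Hence every off-diagonal entry of $H_{\textup{\tiny XYZ}}$ is zero or negative, so for any $\lambda$ larger than all of its diagonal entries the matrix $\lambda-H_{\textup{\tiny XYZ}}$ is non-negative with a strictly positive diagonal.

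Next I would show that $\lambda-H_{\textup{\tiny XYZ}}$ is primitive, and hence irreducible. Its non-zero off-diagonal entries correspond to the elementary moves above together with the flips of the first and last spins coming from $h_{\textup{\tiny B}}^\pm$. These moves connect any basis state to any other: starting from $|{\downarrow\cdots\downarrow}\rangle$ one flips the first spin, transports the resulting up-spin to an arbitrary site by successive nearest-neighbour exchanges, and repeats, filling the target positions from right to left so that no collisions occur; this reaches an arbitrary configuration, and every step is reversible. Therefore the digraph of $\lambda-H_{\textup{\tiny XYZ}}$ is strongly connected, so the matrix is irreducible; it is aperiodic because its diagonal is positive, so $(\lambda-H_{\textup{\tiny XYZ}})^m$ has strictly positive entries for some $m\geqslant1$.

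Finally, the Perron--Frobenius theorem applied to $\lambda-H_{\textup{\tiny XYZ}}$ gives that its largest eigenvalue is simple and that the corresponding eigenvector, unique up to a scalar, has all components of the same sign; choosing that sign and normalising to unit norm produces a Perron vector $|\Psi_L'\rangle$. Since $\lambda-H_{\textup{\tiny XYZ}}$ and $H_{\textup{\tiny XYZ}}$ share their eigenvectors, $|\Psi_L'\rangle$ spans the ground-state space of $H_{\textup{\tiny XYZ}}$, which by \cref{thm:MainTheorem1} (applicable since $(\zeta,y_0)\in\mathbb D$) is one-dimensional and spanned by the supersymmetry singlet $|\Psi_L\rangle$; the latter may be taken real because the local supercharge \eqref{eqn:XYZLocalSupercharge} has real coefficients at $y=y_0$. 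It follows that $|\Psi_L'\rangle=C_L|\Psi_L\rangle$ for a suitable real constant $C_L$, which is the claim. The only model-specific input is the sign computation—in particular $\lambda_2=0$, resting on the reality of $y_0$, and $\lambda_1<0$, resting on $y_0>0$—so that is the step I would carry out most carefully; the remainder is a standard application of Perron theory.
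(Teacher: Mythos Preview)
Your proof is correct and follows essentially the same route as the paper's: shift $H_{\textup{\tiny XYZ}}$ to obtain a non-negative matrix, verify irreducibility via the elementary spin moves, and invoke Perron--Frobenius together with \cref{thm:MainTheorem1}. You supply more detail than the paper on the sign computation (in particular the explicit values of the bulk couplings and the observation that $\lambda_2=0$, $\lambda_1<0$ because $y_0$ is real and positive) and on the connectivity argument, but the overall strategy is identical.
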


\begin{proposition}
  \label{prop:PositiveT}
  For each $L\geqslant 1$, positive vertex weights $a,b,c,d$, $0<\zeta<1$ and real $0< y < 1$, the transfer matrix of the supersymmetric eight-vertex model on a strip of length $L$ with the $K$-matrices $K^\pm$ defined in \eqref{eqn:DefKpm} is a positive matrix.
  \begin{proof}
 Let $V_0,V_{\bar{0}}=V$ be two copies of the single-spin Hilbert space. For each $s,\bar{s}\in\{\uparrow,\downarrow\}$, we define an operator $C^{s\bar{s}}:V_0\otimes V_{\bar{0}}\to V_0\otimes V_{\bar{0}}$ by
  \begin{equation}
    C^{s\bar{s}} = \left(1\otimes 1 \otimes \langle \bar{s}|\right)R_{01}(R_{\bar{0}1})^{t_{\bar0}} 
    \left(1\otimes 1 \otimes |s\rangle\right).
  \end{equation}
  Its entries are non-negative. A direct calculation shows that for all $s,\bar{s}\in\{{\uparrow,\downarrow}\}$ and each $|p\rangle \in \{|{\uparrow\uparrow}\rangle,|{\uparrow\downarrow}\rangle\}$ there is a unique $|\bar p\rangle \in \{|{\uparrow\uparrow}\rangle,|{\uparrow\downarrow}\rangle\}$, depending on $s,\bar{s}$, such that
   $\langle \bar p|C^{s\bar{s}}|p\rangle > 0$. Moreover, we define two states $|k^\pm\rangle\in V_0\otimes V_{\bar{0}}$ through their components, given by
  \begin{equation}
    \langle s\bar{s}|k^\pm\rangle = \langle \bar{s}|K^\pm |s\rangle,
  \end{equation}
  for all $s,\bar{s}\in\{{\uparrow,\downarrow}\}$. These components are positive.
  
  For each pair of basis states $|s_1\cdots s_L\rangle,|\bar s_1\cdots \bar s_L\rangle$, we write the matrix elements of the transfer matrix in terms of these operators and states:
  \begin{equation}
    \langle\bar s_1\cdots \bar s_L|\mathcal T|s_1\cdots s_L\rangle = \langle k^+|    C^{s_L\bar s_L}\cdots C^{s_1\bar s_1}|k^-\rangle.
  \end{equation}
To investigate this matrix element, we use the identity
  \begin{equation}
    \sum_{|\omega\rangle \in \Omega}|\omega\rangle\langle \omega| = 1,
  \end{equation}
  where $\Omega=\{|{\uparrow\uparrow}\rangle,|{\uparrow\downarrow}\rangle,|{\downarrow\uparrow}\rangle,|{\downarrow\downarrow}\rangle\}$ denotes the canonical basis of $V^2$. It allows us to write
  \begin{equation}
  \bra{\bar{s}_1\dots \bar{s}_L} \mathcal T \ket{s_1\dots s_L}= \sum_{|\omega_0\rangle,\dots,|\omega_{L}\rangle \in\Omega}  \langle k^+|\omega_{L}\rangle \Biggl( \prod_{j=1}^{L}  \bra{\omega_{j}}C^{s_{j}\bar{s}_{j} } \ket{\omega_{j-1}} \Biggr)  \langle \omega_0|k^-\rangle.
  \end{equation}
  Each term inside the sum of the right-hand side is a product of non-negative factors. To show that the sum is positive, it is therefore sufficient to find a single choice for $|\omega_0\rangle,\dots,|\omega_{L}\rangle$ that yields a positive term. We determine such a choice by iteration. First, we set $|\omega_0\rangle = |p_0\rangle=|{\uparrow\uparrow}\rangle$. Second, we choose the unique state $|\omega_1\rangle = |p_1\rangle \in\{|{\uparrow\uparrow}\rangle,|{\uparrow\downarrow}\rangle\}$ such that $\langle \omega_1|C^{s_1\bar s_1}|\omega_0\rangle=\langle p_1|C^{s_1\bar s_1}|p_0\rangle>0$. Next, we iterate this step and determine for each $i=2,\dots,L$ the unique $|\omega_i\rangle = |p_i\rangle\in\{|{\uparrow\uparrow}\rangle,|{\uparrow\downarrow}\rangle\}$ such that $\langle \omega_{i}|C^{s_{i}\bar s_{i}}|\omega_{i-1}\rangle=\langle p_{i}|C^{s_i\bar s_i}|p_{i-1}\rangle > 0$. The term corresponding to this choice is a lower boundary for the sum:
  \begin{equation}
  \bra{\bar{s}_1\dots \bar{s}_L} \mathcal T \ket{s_1\dots s_L} \geqslant \langle k^+|p_{L}\rangle \Biggl( \prod_{j=1}^{L}  \bra{p_{j}}C^{s_{j}\bar{s}_{j} } \ket{p_{j-1}} \Biggr)  \langle p_0|k^-\rangle.
  \end{equation}
Each factor of the product on the right-hand side of this equality is positive. Hence, the matrix element is positive.
  \end{proof}
 
\end{proposition}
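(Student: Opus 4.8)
The plan is to prove entrywise positivity of $\mathcal T$ by showing that each matrix element $\bra{\bar s_1\cdots\bar s_L}\mathcal T\ket{s_1\cdots s_L}$ is strictly positive. First I would unfold the double-row structure $\mathcal T = \text{tr}_0(K_0^+U_{0,[1,L]}K_0^-\bar U_{0,[1,L]})$ onto a doubled auxiliary space $V_0\otimes V_{\bar 0}$: using the invariance of the trace under transposition, the two $R$-matrices attached to a given site $j$ (one from $U_{0,[1,L]}$, one from $\bar U_{0,[1,L]}$) combine, once the physical spins $s_j$ and $\bar s_j$ on that site are fixed, into a single local operator $C^{s_j\bar s_j}:V_0\otimes V_{\bar 0}\to V_0\otimes V_{\bar 0}$, while the two $K$-matrices become boundary vectors $\ket{k^\pm}$ with components $\langle s\bar s|k^\pm\rangle=\langle\bar s|K^\pm|s\rangle$. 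This yields the representation $\bra{\bar s_1\cdots\bar s_L}\mathcal T\ket{s_1\cdots s_L}=\langle k^+|C^{s_L\bar s_L}\cdots C^{s_1\bar s_1}|k^-\rangle$.

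Next I would record the positivity inputs. Since $a,b,c,d>0$, the eight-vertex $R$-matrix has non-negative entries, and hence so does every $C^{s\bar s}$. For the boundary vectors I would check that $K^\pm$ are positive $2\times2$ matrices: when $y$ is real with $0<y<1$ the $\sigma^2$-term in \eqref{eqn:DefKpm} vanishes, the off-diagonal ($\sigma^1$) entry is a positive multiple of a ratio of positive weights, and the diagonal entries $1\pm\frac{1-y^2}{1+y^2}\cdot\frac{b^2-d^2}{2ab+b^2+d^2}$ (and likewise with $c$ in place of $d$ for $K^+$) are positive because $\frac{1-y^2}{1+y^2}\in(0,1)$ and $|b^2-d^2|<2ab+b^2+d^2$. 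Hence all components of $\ket{k^\pm}$ are strictly positive.

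The third step is the combinatorial heart of the argument. Inserting the resolution of the identity on $V_0\otimes V_{\bar 0}$ between consecutive factors writes the matrix element as a sum over ``paths'' $(\ket{\omega_0},\dots,\ket{\omega_L})$ of products $\langle k^+|\omega_L\rangle\bigl(\prod_{j=1}^L\langle\omega_j|C^{s_j\bar s_j}|\omega_{j-1}\rangle\bigr)\langle\omega_0|k^-\rangle$ of non-negative numbers, so it suffices to exhibit a single path all of whose factors are positive. The key lemma, to be proved by direct computation of the four operators $C^{s\bar s}$, is that the pair $\{\ket{\uu\uu},\ket{\uu\dd}\}$ is uniquely forward-connected: for every $s,\bar s$ and every $\ket{p}$ in this pair there is exactly one $\ket{\bar p}$ in the pair with $\langle\bar p|C^{s\bar s}|p\rangle>0$. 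Granting this, I take $\ket{\omega_0}=\ket{\uu\uu}$ and at each site choose the unique successor lying in $\{\ket{\uu\uu},\ket{\uu\dd}\}$; then every factor $\langle\omega_j|C^{s_j\bar s_j}|\omega_{j-1}\rangle$ is positive, and since $\ket{\omega_0}$ and $\ket{\omega_L}$ lie in the pair and $\ket{k^\pm}$ has positive components, $\langle\omega_0|k^-\rangle>0$ and $\langle k^+|\omega_L\rangle>0$. Thus the matrix element is a sum of non-negative terms bounded below by a positive one, and $\mathcal T$ is a positive matrix.

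I expect the main obstacle to be the forward-connectedness lemma. It requires computing each $C^{s\bar s}$ restricted to $\{\ket{\uu\uu},\ket{\uu\dd}\}$ and checking that, for all four sign choices of $(s,\bar s)$, exactly one of the two target states in the pair receives a strictly positive coefficient and the other receives zero. This uses crucially that the eight-vertex $R$-matrix mixes only $\ket{\uu\uu}\leftrightarrow\ket{\dd\dd}$ and $\ket{\uu\dd}\leftrightarrow\ket{\dd\uu}$ --- the conservation of down-spin parity on each bond --- so that each $C^{s\bar s}$ respects a $\mathbb Z_2$ grading, together with the strict positivity of $a,b,c,d$, which guarantees that none of the relevant coefficients (products of two weights such as $a^2$, $ab$, $bc$, $bd$) vanishes or cancels. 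Everything else in the proof is bookkeeping with non-negative numbers.
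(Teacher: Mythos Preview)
Your proposal is correct and follows essentially the same approach as the paper: unfold the transfer matrix onto a doubled auxiliary space to obtain $\langle k^+|C^{s_L\bar s_L}\cdots C^{s_1\bar s_1}|k^-\rangle$, observe non-negativity of all ingredients and positivity of the $K^\pm$-entries, and then exhibit a single positive path by the forward-connectedness of the pair $\{\ket{\uu\uu},\ket{\uu\dd}\}$ under each $C^{s\bar s}$. You supply slightly more detail than the paper does on why the entries of $K^\pm$ are strictly positive for real $0<y<1$ and on the parity mechanism behind the forward-connectedness lemma, but the structure of the argument is identical.
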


\begin{proof}[Proof of \cref{thm:MainTheorem3}]
  First, let $y=y_0$ be the unique solution of the equation \eqref{eqn:YEqn} with $0<y<1$. We denote by $\Lambda_L'$ the largest eigenvalue of the transfer matrix $\mathcal T = \mathcal T(a,b,c,d;y_0)$ of the supersymmetric eight-vertex model with the $K$-matrices \eqref{eqn:DefKpm} and positive vertex weights $a,b,c,d>0$. By \cref{prop:PositiveT}, $\mathcal T$ is a positive matrix. The Perron-Frobenius theorem states that the eigenspace of $\Lambda_L'$ is one-dimensional and spanned by a Perron vector, and that no other eigenspace contains a Perron vector. We have $\mathcal T|\Psi_L'\rangle = \Lambda_L|\Psi_L'\rangle$, where $|\Psi_L'\rangle$ is the Perron vector of \cref{prop:PositivePsi}. Hence, $\Lambda_L'=\Lambda_L$.
  
  Second, let $y=y_\alpha,\,\alpha=1,2,3,$ be another solution of \eqref{eqn:YEqn}. We follow the reasoning of the proof of \cref{thm:MainTheorem2}. The transfer matrix has the property
  \begin{equation}
    \mathcal T(a,b,c,d;y_\alpha)=\mathcal R^\alpha(-\pi)\mathcal T(a,b,c,d;y_0)\mathcal R^\alpha(\pi).
  \end{equation}
  The two transfer matrices in this equality are related by a unitary transformation. Therefore, they have the same spectrum and, hence, the same largest eigenvalue $\Lambda_L$.  \end{proof}

\paragraph{The free energy.} Up to an irrelevant factor, the free energy per pairs of horizontal lines of the eight-vertex model on a strip is given by the logarithm of the largest eigenvalue of its transfer matrix. For large $L$, it is expected to take the form
\begin{equation}
  -\ln \Lambda_L = 2L f + f_{\textup{\tiny B}}+ O(L^{-1}),
  \label{eqn:LogLambda}
\end{equation}
where $f$ is the bulk free energy per site, and $f_{\textup{\tiny B}}$ the boundary free energy. The bulk free energy per site is known from Baxter's work \cite{baxterbook}. As for $f_{\textup{\tiny B}}$, however, we are not aware of an explicit formula for general vertex weights and boundary conditions in the literature.

In the case studied in this article, it is trivial to compute the expansion \eqref{eqn:LogLambda}, because we explicitly know $\Lambda_L$ for each $L\geqslant 1$. We obtain
\begin{equation}
  f = -\ln(a+b), \quad f_{\textup{\tiny B}} = - \ln \textup{tr}(K^+K^-).
\end{equation}
The finite-size corrections $O(L^{-1})$ are absent. We note that $f=-\ln(a+b)$ matches Baxter's results \cite{baxterbook}. 

\section{Conclusion}
\label{sec:Conclusion}

In this article, we studied the Hamiltonian of an open XYZ spin chain with a lattice supersymmetry and the corresponding transfer matrix of the eight-vertex model on a strip. We showed that if the parameters of the Hamiltonian are carefully adjusted then its ground states are supersymmetry singlets. The space of supersymmetry singlets is an eigenspace of the transfer matrix. We computed the corresponding eigenvalue with the help of a commutation relation between the supercharge and the transfer matrix.
 For positive vertex weights, we showed that it is the largest eigenvalue. The techniques that we used to prove these results rely on supersymmetry, (co)homology, integrability and the Perron-Frobenius theorem.

We conclude this article with a conjecture that generalises the transfer-matrix eigenvalue to the inhomogeneous eight-vertex model on the strip with $L\geqslant 1$ vertical lines. Its transfer matrix is
\begin{equation}
  \label{eqn:InhTM}
  \mathcal T(u|u_1,\dots,u_L) =\text{tr}_0\left(K_0^+(u)U_{0,[1,L]}(u|u_1,\dots,u_L)K_0^-(u)
  \bar U_{0,[1,L]}(u|u_1,\dots,u_L)\right),
\end{equation}
where $K^-(u) = K(u)$ and $K^+(u)=K(u+2\eta)$, and
\begin{align}
  \begin{split}
  U_{0,[1,L]}(u|u_1,\dots,u_L)&= R_{0L}(u+u_L)\cdots R_{01}(u+u_1),\\
\bar U_{0,[1,L]}(u|u_1,\dots,u_L)&= R_{01}(u-u_1)\cdots R_{0L}(u-u_L).
  \end{split}
\end{align}
Here, $u_1,\dots,u_L$ are the so-called inhomogeneity parameters.
\begin{conjecture}
  Let $\eta=\pi/3$, $K(u)$ be the $K$-matrix \eqref{eqn:KMatrixGeneral} with the coefficients \eqref{eqn:DefMu}, evaluated at $t=\pi/6$, then the transfer matrix \eqref{eqn:InhTM} possesses the eigenvalue
  \begin{equation}
    \Lambda_L= \textup{tr} (K^+(u)K^-(u))\prod_{j=1}^L \left(a(u+u_j)+b(u+u_j)\right)\left(a(u-u_j)+b(u-u_j)\right).
  \end{equation}
\end{conjecture}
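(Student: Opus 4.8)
The plan is to mimic the proof of \cref{thm:MainTheorem2}, replacing every homogeneous $R$-matrix by its spectral-parameter-dependent counterpart and tracking the inhomogeneities $u_1,\dots,u_L$ throughout. The starting point is the observation that the supersymmetry machinery of \cref{sec:SUSY} is purely algebraic and does not involve the transfer matrix: for $\eta=\pi/3$ and $t=\pi/6$ the supersymmetry singlet $|\Psi_L\rangle$ of \eqref{eqn:CohomologyRepresentation} exists and is characterised by its (co)homology class $[|v_+\rangle^{\otimes L}]=[|w_+\rangle^{\otimes L}]$, with the alternative representatives $|\chi\rangle,|\alpha\rangle$ as in \cref{sec:Singlets}. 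Since \eqref{eqn:InhTM} still commutes with the XYZ Hamiltonian (the reflection equation and Yang--Baxter equation hold for all spectral parameters, so the standard argument gives $[\mathcal T(u|u_1,\dots,u_L),\mathcal T(v|u_1,\dots,u_L)]=0$ and, by the logarithmic-derivative argument of \cref{prop:TMLogDer} specialised at coinciding inhomogeneities, commutation with $H_{\textup{\tiny XYZ}}$), the singlet $|\Psi_L\rangle$ is automatically an eigenstate of $\mathcal T(u|u_1,\dots,u_L)$. Thus the only task is to compute the eigenvalue via the Rayleigh quotient \eqref{eqn:EVRitz}.

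The second ingredient is an inhomogeneous version of \cref{prop:TQCommRel}. I would first generalise \cref{lem:qRA}: the operator $A$ of \eqref{eqn:DefA} must be promoted to a spectral-parameter-dependent operator $A(v)$ whose matrix entries involve $a(v),b(v),c(v),d(v)$, and the local intertwining relation should read schematically $R_{0j+1}(v')R_{0j}(v)(1\otimes\q_j)+\bigl(a(\cdot)+b(\cdot)\bigr)(1\otimes\q_j)R_{0j}(v)=R_{0j+1}(v')A_0^j(v)+A_0^{j+1}(v')R_{0j}(v)$, with the prefactor being the appropriate product $a(u+u_j)+b(u+u_j)$ or $a(u-u_j)+b(u-u_j)$ depending on which half of the monodromy the $R$-matrix sits in. One checks, as in \cref{lem:qRA}, that such relations hold precisely under \eqref{eqn:CL8V}, now for a one-parameter family of $R$-matrices. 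Similarly \cref{lem:RAK} generalises to boundary intertwining relations for $K^\pm$ and $A(v)$. Telescoping the alternating sum exactly as in the proof of \cref{prop:TQCommRel} then yields $\mathcal T(u|u_1,\dots,u_L)\,\Q = \Bigl(\prod_{j=1}^L(a(u+u_j)+b(u+u_j))(a(u-u_j)+b(u-u_j))\Bigr)^{?}\Q\,\mathcal T(u|u_1,\dots,u_{L})$ — one should be careful about how the shift in inhomogeneities induced by $\Q$ (which changes $L$ to $L+1$) interacts with the parameter bookkeeping, and about the correct power of the $(a+b)$-factors; the clean statement is likely $\mathcal T(u|u_1,\dots,u_L)\Q = \lambda_L(u)\,\Q\,\mathcal T(u|u_1,\dots,u_L)$ with $\lambda_L(u)=\prod_{j=1}^L (a(u+u_j)+b(u+u_j))(a(u-u_j)+b(u-u_j))$ after a suitable insertion/extra-site convention, exactly matching the claimed product in the conjecture.

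Once this commutation relation is in hand, I would apply \cref{prop:QExactOp} with $\mathcal A=\mathcal T(u|u_1,\dots,u_L)$ and $\lambda=\lambda_L(u)$, reducing the eigenvalue to the matrix element $\langle w_+|^{\otimes L}\,\mathcal T(u|u_1,\dots,u_L)\,|v_+\rangle^{\otimes L}/\langle w_+|v_+\rangle^L$, and then set up a recurrence $L\to L-2$ using the alternative representatives $|\chi\rangle,|\alpha\rangle$ — this needs inhomogeneous analogues of \cref{lem:MagicIdentity1,lem:MagicIdentity2}, i.e.\ the local identities $\langle w_+|\textup{tr}_0(K^+_0 R_{01}(u+u_1)K^-_0 R_{01}(u-u_1))|v_+\rangle = (a(u+u_1)+b(u+u_1))(a(u-u_1)+b(u-u_1))\langle w_+|v_+\rangle\,\textup{tr}(K^+K^-)$ and its two-site counterpart, each verified by a finite rational computation using \eqref{eqn:CL8V}, \eqref{eqn:YEqn}, \eqref{eqn:Zabcd} together with \cref{lem:VWPolynomial,lem:ChiAlphaPolynomial}. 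Solving the resulting recurrence with the $L=1,2$ base cases produces the claimed formula for $\Lambda_L$; the remaining roots $y_1,y_2,y_3$ of \eqref{eqn:YEqn} are handled by the unitary spin-rotation argument of \eqref{eqn:TPi}, which extends verbatim to the inhomogeneous transfer matrix since the rotations act site-by-site and commute with the inhomogeneity structure. The main obstacle I anticipate is the careful bookkeeping of the spectral-parameter shifts in the generalised \cref{lem:qRA}: the operator $A(v)$ must be designed so that the prefactors $a(v)+b(v)$ that appear on the two sides of the intertwining relation are consistent with the telescoping, and matching the inhomogeneity that gets ``inserted'' by $\Q$ against the product structure of the conjectured eigenvalue is the delicate point — everything else is a routine (if lengthy) transcription of \cref{sec:8V} with $R\mapsto R(v)$.
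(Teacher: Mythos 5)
The statement you are addressing is left as a \emph{Conjecture} in the paper, not a theorem: the authors offer no proof, and state explicitly that they only checked it numerically for small $L$ in the limit $p\to 0$ and verified compatibility with functional equations, adding that ``both these conjectures remain to be proven.'' There is therefore no paper proof to compare against; your proposal is an attempt to supply one by transcribing the argument for \cref{thm:MainTheorem2}, and as it stands it contains two genuine gaps that are precisely the reasons this remains a conjecture.

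First, the opening claim that $\mathcal T(u|u_1,\dots,u_L)$ commutes with $H_{\textup{\tiny XYZ}}$, so that the supersymmetry singlet $|\Psi_L\rangle$ is automatically an eigenvector, is incorrect. The commuting family $\{\mathcal T(v|u_1,\dots,u_L)\}_v$ has the inhomogeneities held fixed, whereas $H_{\textup{\tiny XYZ}}$ arises as the logarithmic derivative of the \emph{homogeneous} transfer matrix $\mathcal T(v|0,\dots,0)$; these are built from different monodromies, and for generic $u_1,\dots,u_L$ there is no reason for \cref{corr:HTComm} to carry over. The statement that $|\Psi_L\rangle$ remains an eigenstate of the inhomogeneous transfer matrix is the essential content of the conjecture, not a free input you can invoke before computing a Rayleigh quotient.

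Second, the proposed inhomogeneous version of \cref{lem:qRA} and \cref{prop:TQCommRel} is not a bookkeeping exercise that one can defer. The supercharge $\Q:V^L\to V^{L+1}$ is inhomogeneity-independent and inserts a site at an alternating position, while the $(L+1)$-site transfer matrix needs $L+1$ inhomogeneity parameters; the telescoping in the proof of \cref{prop:TQCommRel} works because $\q_j$ duplicates site $j$ and the two adjacent $R$-matrices $R_{0j},R_{0,j+1}$ carry the \emph{same} Boltzmann weights, so the scalar $(a+b)$ factors out uniformly. Once the weights depend on site-dependent spectral parameters, the two $R$-matrices in the local relation carry distinct arguments, the candidate scalar prefactor depends on two inhomogeneities at once, and consecutive terms of $\sum_j(-1)^j\q_j$ no longer cancel against each other unless $u_j=u_{j+1}$, i.e.\ unless the model is homogeneous. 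You correctly identify this as ``the delicate point,'' but you do not exhibit an operator $A(v)$ or a local identity that makes the alternating sum telescope; without that, the recurrence for $\Lambda_L$ and the inhomogeneous analogues of \cref{lem:MagicIdentity1,lem:MagicIdentity2} have no foundation. In short, the proposal reproduces the shape of the homogeneous proof while leaving unresolved exactly the two obstructions that separate \cref{thm:MainTheorem2} from this conjecture.
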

We checked this conjecture numerically for small $L$ in the trigonometric limit $p\to 0$. Furthermore, we checked that it is compatible with functional equations, obeyed by the transfer matrix, and simplifications that occur for certain specialisations of the spectral parameter $u$.

We note that a similar conjecture exists for the transfer matrix of the inhomogeneous eight-vertex model with $\eta=\pi/3$ and periodic boundary conditions \cite{razumov:10,zinnjustin:13}. Both these conjectures remain to be proven. Their proof is of interest since the inhomogeneous models allow one to investigate the properties of the corresponding eigenvectors rigorously. For periodic boundary conditions, Zinn-Justin initiated this rigorous investigation in \cite{zinnjustin:13}.

\subsection*{Acknowledgements}

This work is supported by the Belgian Excellence of Science (EOS) initiative through the project 30889451 PRIMA -- Partners in Research on Integrable Systems and Applications. We thank Gilles Parez for his comments on the manuscript.


\begin{thebibliography}{10}

\bibitem{stroganov:01}
Y.G.~{Stroganov},
\newblock {\em {The importance of being odd}},
\newblock J. Phys. A: Math. Gen. {\textbf{34}} (2001)   L179.

\bibitem{razumov:10}
A.V.~{Razumov} and Y.G.~{Stroganov},
\newblock {\em {A possible combinatorial point for the XYZ spin chain}},
\newblock Theor. Math. Phys. {\textbf{164}} (2010)   977.

\bibitem{mangazeev:10}
V.V.~{Mangazeev} and V.V.~{Bazhanov},
\newblock {\em {The eight-vertex model and Painlev{\'e} VI equation II:
  eigenvector results}},
\newblock J. Phys. A: Math. Theor. {\textbf{43}} (2010) 085206.

\bibitem{zinnjustin:13}
P.~Zinn-Justin,
\newblock {\em {Sum Rule for the Eight-Vertex Model on Its Combinatorial
  Line}},
\newblock in K.~Iohara, S.~Morier-Genoud  and B.~R{\'e}my (eds.), {\em
  Symmetries, Integrable Systems and Representations}, {\bf 40}, 599, Springer London (2013).

\bibitem{baxter:89}
R.J.~Baxter,
\newblock {\em Solving models in statistical mechanics},
\newblock Adv. Stud. Pure Math. {\textbf{19}} (1989)   95.

\bibitem{fabricius:05}
K.~{Fabricius} and B.M.~{McCoy},
\newblock {\em {New Developments in the Eight Vertex Model II. Chains of Odd
  Length}},
\newblock J. Stat. Phys. {\textbf{120}} (2005)   37.

\bibitem{rosengren:16}
H.~Rosengren,
\newblock {\em Elliptic pfaffians and solvable lattice models},
\newblock J. Stat. Mech. (2016)  P083106.

\bibitem{bazhanov:05}
V.V.~{Bazhanov} and V.V.~{Mangazeev},
\newblock {\em {Eight-vertex model and non-stationary Lam{\'e} equation}},
\newblock J. Phys. A: Math. Gen. {\textbf{38}} (2005)   L145.

\bibitem{bazhanov:06}
V.V.~{Bazhanov} and V.V.~{Mangazeev},
\newblock {\em {The eight-vertex model and Painlev{\'e} VI}},
\newblock J. Phys. A: Math. Gen. {\textbf{39}} (2006)   12235.

\bibitem{rosengren:13}
H.~{Rosengren},
\newblock \textit{Special polynomials related to the supersymmetric
  eight-vertex model. I. Behaviour at cusps},
\newblock arXiv:1305.0666 (2013).

\bibitem{rosengren:13_2}
H.~{Rosengren},
\newblock \textit{Special polynomials related to the supersymmetric
  eight-vertex model. II. Schr\"odinger equation},
\newblock arxiv:1312.5879 (2013).

\bibitem{rosengren:14}
H.~{Rosengren},
\newblock \textit{Special polynomials related to the supersymmetric
  eight-vertex model. III. Painlev\'e VI equation},
\newblock arxiv:1405.5318 (2014).

\bibitem{rosengren:15}
H.~{Rosengren},
\newblock {\em {Special Polynomials Related to the Supersymmetric Eight-Vertex
  Model: A Summary}},
\newblock Comm. Math. Phys. {\textbf{340}} (2015)   1143.

\bibitem{fendley:10}
P.~{Fendley} and C.~{Hagendorf},
\newblock {\em {Exact and simple results for the XYZ and strongly interacting
  fermion chains}},
\newblock J. Phys. A: Math. Theor. {\textbf{43}} (2010) 402004.

\bibitem{hagendorf:12}
C.~Hagendorf and P.~Fendley,
\newblock {\em {The eight-vertex model and lattice supersymmetry}},
\newblock J. Stat. Phys. {\textbf{146}} (2012) 1122.

\bibitem{hagendorf:13}
C.~Hagendorf,
\newblock {\em {Spin chains with dynamical lattice supersymmetry}},
\newblock J. Stat. Phys. {\textbf{150}} (2013) 609.

\bibitem{hagendorf:18}
C.~Hagendorf and J.~Li\'enardy,
\newblock {\em {On the transfer matrix of the supersymmetric eight-vertex
  model. I. Periodic boundary conditions}},
\newblock J. Stat. Mech. (2018) 033106.

\bibitem{witten:82}
E.~Witten,
\newblock {\em {Constraints on supersymmetry breaking}},
\newblock Nucl. Phys. {\textbf{B202}} (1982) 253.

\bibitem{hagendorf:17}
C.~Hagendorf and J.~Li\'enardy,
\newblock {\em Open spin chains with dynamic lattice supersymmetry},
\newblock J. Phys. A: Math. Theor. {\textbf{50}} (2017) 185202.

\bibitem{sklyanin:88}
E.K.~Sklyanin,
\newblock {\em Boundary conditions for integrable quantum systems},
\newblock J. Phys. A: Math. Gen. {\textbf{21}}
  {\textbf{10}} (1988) 2375.

\bibitem{cao:13_2}
J.~Cao, W.-L.~Yang, K.~Shi  and Y.~Wang,
\newblock {\em Off-diagonal bethe ansatz solutions of the anisotropic spin-$\frac{1}{2}$
  chains with arbitrary boundary fields},
\newblock Nucl. Phys. \textbf{B877} (2013)   152.

\bibitem{faldella:14}
S.~Faldella and G.~Niccoli,
\newblock {\em {SOV approach for integrable quantum models associated with
  general representations on spin-1/2 chains of the 8-vertex reflection
  algebra}},
\newblock J. Phys. A: Math. Theor. {\textbf{47}} (2014) 115202.

\bibitem{gradshteyn:07}
I.S.~Gradshteyn and I.M.~Ryzhik,
\newblock {\em Table of Integrals, Series and Products},
\newblock Elsevier (2007).

\bibitem{whittaker:27}
E.T.~Whittaker and G.N.~Watson,
\newblock {\em A course of modern analysis},
\newblock Cambridge University Press (1927).

\bibitem{nehari:82}
Z.~Nehari,
\newblock {\em Conformal mapping},
\newblock Dover Publications (1982).

\bibitem{masson:08}
T.~Masson,
\newblock {\em {Introduction aux (Co)Homologies}},
\newblock \'Editions Hermann, Paris (2008).

\bibitem{baxterbook}
R.J.~Baxter,
\newblock {\em {Exactly solved models in statistical mechanics}},
\newblock London Academic (1982).

\bibitem{inami:94}
T.~Inami and H.~Konno,
\newblock {\em {Integrable XYZ spin chain with boundaries}},
\newblock J. Phys. A : Math. Gen. {\textbf{27}} (1994)   L913.

\bibitem{hou:95}
B.-Y.~{Hou}, K.-J.~{Shi}, H.~{Fan}  and Z.-X.~{Yang},
\newblock {\em Solution of reflection equation},
\newblock Comm. Theor. Phys. {\textbf{23}} (1995)   163.

\bibitem{vega:94}
H.J.~{de Vega} and A.~{Gonzalez-Ruiz},
\newblock {\em Boundary K-matrices for the XYZ, XXZ and XXX spin chains},
\newblock  J. Phys. A : Math. Gen. {\textbf{27}} (1994) 6129.

\bibitem{weston:17}
R.~Weston and J.~Yang,
\newblock {\em {Lattice supersymmetry in the open XXZ model: an algebraic Bethe
  Ansatz analysis}},
\newblock J. Stat. Mech. (2017) P123104.

\bibitem{meyer:00}
C.D.~Meyer,
\newblock {\em {Matrix Analysis and Applied Linear Algebra}},
\newblock SIAM (2000).

\end{thebibliography}
\end{document}